\newtheorem{lemma}{Lemma}
\newtheorem{theorem}[lemma]{Theorem}
\newtheorem{corollary}[lemma]{Corollary}
\newtheorem{definition}[lemma]{Definition}
\def\eps{\varepsilon}
\title{Dynamic geometric set cover and hitting set}
\author{Pankaj K. Agarwal \footnote{Department of Computer Science, Duke University, USA} \\ \texttt{pankaj@cs.duke.edu}
\and Hsien-Chih Chang \footnotemark[1] \\ \texttt{hsienchih.chang@duke.edu}
\and Subhash Suri \footnote{Department of Computer Science, University of California at Santa Barbara, USA} \\ \texttt{suri@cs.ucsb.edu}
\and Allen Xiao \footnotemark[1] \\ \texttt{axiao@cs.duke.edu}
\and Jie Xue \footnotemark[2] \\ \texttt{jiexue@ucsb.edu}}
\date{}
\begin{document}

\maketitle

\begin{abstract}
We investigate dynamic versions of geometric set cover and hitting set where points and ranges may be inserted or deleted, and we want to efficiently maintain an (approximately) optimal solution for the current problem instance.
While their static versions have been extensively studied in the past, surprisingly little is known about dynamic geometric set cover and hitting set.
For instance, even for the most basic case of one-dimensional interval set cover and hitting set, no nontrivial results were known.
The main contribution of our paper are two frameworks that lead to efficient data structures for dynamically maintaining set covers and hitting sets in $\mathbb{R}^1$ and $\mathbb{R}^2$.
The first framework uses bootstrapping and gives a $(1+\varepsilon)$-approximate data structure for dynamic interval set cover in $\mathbb{R}^1$ with $O(n^\alpha/\varepsilon)$ amortized update time for any constant $\alpha > 0$; in $\mathbb{R}^2$, this method gives $O(1)$-approximate data structures for unit-square (and  quadrant) set cover and hitting set with $O(n^{1/2+\alpha})$ amortized update time.
The second framework uses local modification, and leads to a $(1+\varepsilon)$-approximate data structure for dynamic interval hitting set in $\mathbb{R}^1$ with $\widetilde{O}(1/\varepsilon)$ amortized update time; in $\mathbb{R}^2$, it gives $O(1)$-approximate data structures for unit-square (and quadrant) set cover and hitting set in the \textit{partially} dynamic settings with $\widetilde{O}(1)$ amortized update time.
\end{abstract}

\section{Introduction}
Given a pair $(S,\mathcal{R})$ where $S$ is a set of points and $\mathcal{R}$ is a collection of geometric ranges in a Euclidean space, the \textit{geometric set cover} (resp., \textit{hitting set}) problem is to find the smallest number of ranges in $\mathcal{R}$ (resp., points in $S$) that cover all points in $S$ (resp., hit all ranges in $\mathcal{R}$).
Geometric set cover and hitting set are classical geometric optimization problems, with numerous applications in databases, sensor networks, VLSI design, etc.

In many applications, the problem instance can change over time and re-computing a new solution after each change is too costly.
In these situations, a dynamic algorithm that can update the solution after a change more efficiently than constructing the entire new solution from scratch is highly desirable.
This motivates the main problem studied in our paper: dynamically maintaining geometric set covers and hitting sets under insertion and deletion of points and ranges.

Although (static) geometric set cover and hitting set have been extensively studied over the years, their dynamic variants are surprisingly open.
For example, even for the most fundamental case, dynamic \emph{interval} set cover and hitting set in one dimension, no nontrivial results were previously known.
In this paper, we propose two algorithmic frameworks for the problems, which lead to efficient data structures for dynamic set cover and hitting set for intervals in $\mathbb{R}^1$ and unit squares and quadrants in $\mathbb{R}^2$.
We believe that our approaches can be extended to solve dynamic set cover and hitting set in other geometric settings, or more generally, other dynamic problems in computational geometry.

\subsection{Related work} \label{sec-related}
The set cover and hitting set problems in general setting are well-known to be NP-complete~\cite{hartmanis1982computers}.
A simple greedy algorithm achieves an $O(\log n)$-approximation~\cite{chvatal1979greedy,johnson1974approximation,lovasz1975ratio}, which is tight under appropriate complexity-theoretic assumptions~\cite{dinur2014analytical,khot2008vertex}.
In many geometric settings, the problems remain NP-hard or even hard to approximate~\cite{berman1997complexities,megiddo1984complexity,megiddo1982complexity}.
However, by exploiting the geometric nature of the problems, efficient algorithms with better approximation factors can
be obtained.
For example, Mustafa and Ray~\cite{mustafa2010improved} showed the existence of polynomial-time approximation schemes (PTAS) for halfspace hitting set in $\mathbb{R}^3$ and disk hitting set.
There is also a PTAS for unit-square set cover given by Erlebach and van Leeuwen~\cite{erlebach2010ptas}.
Agarwal and Pan~\cite{agarwal2014near} proposed approximation algorithms with near-linear running time for the set cover and hitting set problems for halfspaces in $\mathbb{R}^3$, disks in $\mathbb{R}^2$, and orthogonal rectangles.

Dynamic problems have received a considerable attention in recent years~\cite{baswana2015fully,bernstein2016faster,bhattacharya2017deterministic,bhattacharya2018deterministic,bhattacharya2016new,gupta2013fully,neiman2016simple,solomon2016fully}.
In particular, dynamic set cover in general setting has been studied in~\cite{abboud2019dynamic,bhattacharya2015design,gupta2017online}.
All the results were achieved in the partially dynamic setting where ranges are fixed and only points are dynamic.
Gupta et al.~\cite{gupta2017online} showed that an $O(\log n)$-approximation can be maintained using $O(f \log n)$ amortized update time and an $O(f^3)$-approximation can be maintained using $O(f^2)$ amortized update time, where $f$ is the maximum number of ranges that a point belongs to.
Bhattacharya et al.~\cite{bhattacharya2015design} gave an $O(f^2)$-approximation data structure for dynamic set cover with $O(f \log n)$ amortized update time.
Abboud et al.~\cite{abboud2019dynamic} proved that one can maintain a $(1+\varepsilon)f$-approximation using $O(f^2 \log n/\varepsilon^5)$ amortized update time.

In geometric settings, only the dynamic hitting set problem has been considered~\cite{ganjugunte2011geometric}.
Ganjugunte~\cite{ganjugunte2011geometric} studied two different dynamic settings: \textbf{(i)} only the range set $\mathcal{R}$ is dynamic and \textbf{(ii)} $\mathcal{R}$ is dynamic and $S$ is semi-dynamic (i.e., insertion-only).
Ganjugunte~\cite{ganjugunte2011geometric} showed that, for pseudo-disks in $\mathbb{R}^2$, dynamic hitting set in setting \textbf{(i)} can be solved using $O(\gamma(n) \log^4 n)$ amortized update time with approximation factor $O(\log^2 n)$, and that in setting \textbf{(ii)} can be solved using $O(\gamma(n) \sqrt{n} \log^4 n)$ amortized update time with approximation factor $O(\log^6 n/\log\log n)$,
where $\gamma(n)$ is the time for finding a point in $X$ contained in a query pseudo-trapezoid (see~\cite{ganjugunte2011geometric} for details).
Dynamic geometric hitting set in the fully dynamic setting (where both points and ranges can be inserted or deleted) as well as dynamic geometric set cover has not yet been studied before, to the best of our knowledge.


\subsection{Our results}
Let $(S,\mathcal{R})$ be a dynamic geometric set cover (resp., hitting set) instance.
We are interested in proposing efficient data structures to \textit{maintain} an (approximately) optimal solution for $(S,\mathcal{R})$.
This may have various definitions, resulting in different variants of the problem.
A natural variant is to maintain the number $\mathsf{opt}$, which is the size of an optimal set cover (resp., hitting set) for $(S,\mathcal{R})$, or an approximation of $\mathsf{opt}$.
However, in many applications, only maintaining the optimum is not satisfactory and one may hope to maintain a ``real'' set cover (resp., hitting set) for the dynamic instance.
Therefore, in this paper, we formulate the problem as follows.
We require the data structure to, after each update, store (implicitly) a solution for the current problem instance satisfying some quality requirement such that certain information about the solution can be queried efficiently.
For example, one can ask how large the solution is, whether a specific element in $\mathcal{R}$ (resp., $S$) is used in the solution, what the entire solution is, etc.
We will make this more precise shortly.

In dynamic settings, it is usually more natural and convenient to consider a \textit{multiset} solution for the set cover (resp., hitting set) instance.
That is, we allow the solution to be a \textit{multiset} of elements in $\mathcal{R}$ (resp., $S$) that cover all points in $S$ (resp., hit all ranges in $\mathcal{R}$), and the quality of the solution is also evaluated in terms of the multiset cardinality.
In the static problems, one can always efficiently remove the duplicates in a multiset solution to obtain a (ordinary) set cover or hitting set with even better quality (i.e., smaller cardinality), hence computing a multiset solution is essentially equivalent to computing an ordinary solution.
However, in dynamic settings, the update time has to be sublinear, and in general this is not sufficient for detecting and removing duplicates.
Therefore, in this paper, we mainly focus on multiset solutions (though some of our data structures can maintain an ordinary set cover or hitting set).
Unless explicitly mentioned otherwise, solutions for set cover and hitting set always refer to multiset solutions hereafter.

Precisely, we require a dynamic set cover (resp., hitting set) data structure to store implicitly, after each update, a set cover $\mathcal{R}'$ (resp., a hitting set $S'$) for the current instance $(S,\mathcal{R})$ such that the following queries are supported.
\begin{itemize}\itemsep=0pt
\item
\textbf{Size query}: reporting the (multiset) size of $\mathcal{R}'$ (resp., $S'$).
\item \textbf{Membership query}: reporting, for a given range $R \in \mathcal{R}$ (resp., a given point $a \in S$), the number of copies of $R$ (resp., $a$) contained in $\mathcal{R}'$ (resp., $S'$).
\item \textbf{Reporting query}: reporting all the elements in $\mathcal{R}'$ (resp., $S'$).
\end{itemize}
We require the size query to be answered in $O(1)$ time, a membership query to be answered in $O(\log |\mathcal{R}'|)$ time (resp., $O(\log |S'|)$ time), and the reporting query to be answered in $O(|\mathcal{R}'|)$ time (resp., $O(|S'|)$ time); this is the best one can expect in the pointer machine model.

We say that a set cover (resp., hitting set) instance is \textit{fully dynamic} if insertions and deletions on both points and ranges are allowed, and \textit{partially dynamic} if only the points (resp., ranges) can be inserted and deleted.
This paper mainly focuses on the fully dynamic setting, while some results are achieved in the partially dynamic setting.
Thus, unless explicitly mentioned otherwise, problems are always considered in the fully dynamic setting.

The main contribution of this paper are two frameworks for designing dynamic geometric set cover and hitting set data structures, leading to efficient data structures in $\mathbb{R}^1$ and $\mathbb{R}^2$ (see Table~\ref{tab-results}).
The first framework is based on bootstrapping, which results in efficient (approximate) dynamic data structures for interval set cover, quadrant/unit-square set cover and hitting set (see the first three rows of Table~\ref{tab-results} for detailed bounds).
The second framework is based on local modification, which results in efficient (approximate) dynamic data structures for interval hitting set, quadrant/unit-square set cover and hitting set in the \textit{partially} dynamic setting (see the last three rows of Table~\ref{tab-results} for detailed bounds).

\renewcommand\arraystretch{1.3}

\begin{table}[h]
    \centering
    \begin{tabular}{|c|c|c|c|c|c|}
        \hline
        \textbf{Framework} & \textbf{Problem} & \textbf{Range} & \textbf{Approx. factor} & \textbf{Update time} & \textbf{Setting} \\
        \hline
        \multirow{3}{*}{Bootstrapping} & SC & Interval & $1+\varepsilon$ & $\widetilde{O}(n^\alpha/\varepsilon)$ & Fully dynamic \\
        \cline{2-6}
        & SC \& HS & Quadrant & $O(1)$ & $\widetilde{O}(n^{1/2+\alpha})$ & Fully dynamic \\
        \cline{2-6}
        & SC \& HS & Unit square & $O(1)$ & $\widetilde{O}(n^{1/2+\alpha})$ & Fully dynamic \\
        \hline
        \multirow{3}{*}{Local modification} & HS & Interval & $1+\varepsilon$ & $\widetilde{O}(1/\varepsilon)$ & Fully dynamic \\
        \cline{2-6}
        & SC \& HS & Quadrant & $O(1)$ & $\widetilde{O}(1)$ & Part. dynamic \\
        \cline{2-6}
        & SC \& HS & Unit square & $O(1)$ & $\widetilde{O}(1)$ & Part. dynamic \\
        \hline
    \end{tabular}
    \vspace{2mm}
    \caption{Summary of our results for dynamic geometric set cover and hitting set (SC = set cover and HS = hitting set). All update times are amortized. The notation $\widetilde{O}(\cdot)$ hides logarithmic factors, $n$ is the size of the current instance, and $\alpha > 0$ is any small constant. All data structures can be constructed in $\widetilde{O}(n_0)$ time where $n_0$ is the size of the initial instance.}
    \label{tab-results}
\end{table}

\paragraph{Organization.}
The rest of the paper is organized as follows.
Section~\ref{sec-notation} gives the preliminaries required for the paper and Section~\ref{sec-overview} gives an overview of our two frameworks.
Our first framework (bootstrapping) and second framework (local modification) for designing efficient dynamic geometric set cover and hitting set data structures are presented in Section~\ref{sec-bootstrapping} and Section~\ref{sec-localmod}, respectively.
To make the paper more readable, the proofs of the technical lemmas and some details are deferred to the appendix.

\section{Preliminaries} \label{sec-notation}
In this section, we introduce the basic notions used throughout the paper.

\paragraph{Multi-sets and disjoint union.}
A \textit{multi-set} is a set in which elements can have multiple copies.
The \textit{multiplicity} of an element $a$ in a multi-set $A$ is the number of the copies of $a$ in $A$.
For two multi-sets $A$ and $B$, we use $A \sqcup B$ to denote the \textit{disjoint union} of $A$ and $B$, in which the multiplicity of an element $a$ is the sum of the multiplicities of $a$ in $A$ and $B$.

\paragraph{Basic data structures.}
A data structure built on a dataset (e.g., point set, range set, set cover or hitting set instances) of size $n$ is \textit{basic} if it can be constructed in $\widetilde{O}(n)$ time and can be dynamized with $\widetilde{O}(1)$ update time (with a bit abuse of terminology, sometimes we also use ``basic data structures'' to denote the dynamized version of such data structures).

\paragraph{Output-sensitive algorithms.}
In some set cover and hitting set problems, if the problem instance is properly stored in some data structure, it is possible to compute an (approximate) optimal solution in sub-linear time.
An \textit{output-sensitive} algorithm for a set cover or hitting set problem refers to an algorithm that can compute an (approximate) optimal solution in $\widetilde{O}(\mathsf{out})$ time (where $\mathsf{out}$ is the size of the output solution), by using some \textit{basic} data structure built on the problem instance.

\section{An overview of our two frameworks} \label{sec-overview}
The basic idea of our first framework is \textit{bootstrapping}.
Namely, we begin from a simple inefficient dynamic set cover or hitting set data structure (e.g., a data structure that re-computes a solution after each update), and repeatedly use the current data structure to obtain an improved one.
The main challenge here is to design the bootstrapping procedure: how to use a given data structure to construct a new data structure with improved update time.
We achieve this by using output-sensitive algorithms and carefully partitioning the problem instances to sub-instances.

Our second framework is much simpler, which is based on \textit{local modification}.
Namely, we construct a new solution by slightly modifying the previous one after each update, and re-compute a new solution periodically using an output-sensitive algorithm.
This framework applies to the problems which are \textit{stable}, in the sense that the optimum of a dynamic instance does not change significantly.

\section{First framework: Bootstrapping} \label{sec-bootstrapping}
In this section, we present our first frame work for dynamic geometric set cover and hitting set, which is based on bootstrapping and results in sub-linear data structures for dynamic interval set cover and dynamic quadrant and unit-square set cover (resp., hitting set).

\subsection{Warm-up: 1D set cover for intervals} \label{sec-DISC}
As a warm up, we first study the 1D problem: dynamic interval set cover.
First, we observe that interval set cover admits a simple \textit{exact} output-sensitive algorithm.
Indeed, interval set cover can be solved using the greedy algorithm that repeatedly picks the leftmost uncovered point and covers it using the interval with the rightmost right endpoint, and the algorithm can be easily made output-sensitive if we store the points and intervals in binary search trees.
\begin{lemma} \label{lem-osint}
Interval set cover admits an exact output-sensitive algorithm.
\end{lemma}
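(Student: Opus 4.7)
The plan is to implement the classical greedy algorithm for interval set cover in an output-sensitive way, using two basic data structures built on $S$ and $\mathcal{R}$ respectively: a balanced binary search tree $\mathcal{T}_S$ storing the points of $S$ by coordinate, and a stabbing structure $\mathcal{T}_\mathcal{R}$ on $\mathcal{R}$ that, given a query point $p \in \mathbb{R}$, returns (a witness of) $\max \{\, r : [\ell, r] \in \mathcal{R},\ \ell \leq p \leq r \,\}$. With these in hand, the greedy iterates: find the leftmost point $p \in S$ not yet covered; stab $\mathcal{T}_\mathcal{R}$ at $p$ to obtain the covering interval $I = [\ell, r]$ with rightmost $r$; append $I$ to the output solution $\mathcal{R}'$; and then advance $p$ to the successor of $r$ in $\mathcal{T}_S$. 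We stop when no uncovered point remains, or report infeasibility if at some iteration the stabbing query returns nothing.

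First I would realize the two structures as basic. The BST $\mathcal{T}_S$ is standard and supports leftmost lookup, successor, insertion, and deletion in $O(\log n)$. For $\mathcal{T}_\mathcal{R}$, I would use a segment tree over the distinct endpoints (kept under coordinate compression via an auxiliary BST) in which each interval is stored at its $O(\log n)$ canonical nodes, and each node records the maximum right endpoint of the intervals stored there together with one witness interval attaining that maximum. A stabbing query at $p$ descends the root-to-leaf path for $p$ and returns the maximum over the $O(\log n)$ nodes visited; an insertion or deletion of an interval touches $O(\log n)$ canonical nodes and updates the stored max in $O(\log n)$ time each. Hence both construction and updates fit inside the $\widetilde{O}(n)$ / $\widetilde{O}(1)$ budget of a basic data structure.

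Correctness of the greedy is the standard exchange argument and yields an exact optimal cover. For the runtime, each iteration performs one leftmost-or-successor query in $\mathcal{T}_S$ and one stabbing query in $\mathcal{T}_\mathcal{R}$, both in $O(\log n)$; moreover the right endpoints of successive chosen intervals are strictly increasing, so the number of iterations is exactly $|\mathcal{R}'| = \mathsf{out}$. Therefore the total running time is $O(\mathsf{out} \log n) = \widetilde{O}(\mathsf{out})$, as required.

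The main obstacle I expect is resisting the temptation to mark points as covered one by one: doing so would spend $\Theta(|S|)$ time per greedy step and break output-sensitivity. The successor query in $\mathcal{T}_S$ circumvents this by jumping over all points in $[\ell, r]$ in a single $O(\log n)$ operation, so the cost of an iteration depends only on $\log n$, not on the number of points actually covered by the chosen interval. A minor but important care point is that $\mathcal{T}_\mathcal{R}$ must return an \emph{explicit} interval (not merely the maximum right-endpoint value), which is why a witness is stored alongside each segment-tree node's maximum.
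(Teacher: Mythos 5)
Your proposal is correct and uses the same greedy strategy as the paper (advance a pointer $q$, take the leftmost uncovered point, cover it with the rightmost-reaching interval, jump $q$ to that right endpoint), and the runtime argument is identical. The only material difference is how you realize the ``rightmost covering interval'' query: you propose a dynamic segment tree (stabbing-max with a witness) under coordinate compression, whereas the paper keeps $\mathcal{I}$ in a single balanced BST keyed by \emph{left endpoint}, with each subtree augmented by the interval of maximum right endpoint; on a query $q$ it extracts the max-right-endpoint interval among those with left endpoint $\le q$ and then checks whether its right endpoint reaches $q$. The paper's structure is lighter: dynamizing a coordinate-compressed segment tree (new leaves on endpoint insertion, rebalancing, etc.) is genuinely more delicate than maintaining subtree-max fields in a balanced BST, and the latter trivially meets the $\widetilde{O}(1)$-update ``basic'' budget. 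Your approach works, but if you go that route you should spell out how the segment tree handles insertion of previously unseen coordinates within $\widetilde{O}(1)$ amortized time (e.g., a pointer-based dynamic segment tree or periodic rebuilding), since a static segment tree over a fixed endpoint universe does not directly dynamize.
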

\noindent
This algorithm will serve an important role in the design of our data structure.

\subsubsection{Bootstrapping}
As mentioned before, our data structure is designed using bootstrapping.
Specifically, we prove the following bootstrapping theorem, which is the technical heart of our result.
The theorem roughly states that given a dynamic interval set cover data structure, one can obtain another dynamic interval set cover data structure with improved update time.
\begin{theorem} \label{thm-boot}
Let $\alpha \in [0,1]$ be a number.
If there exists a $(1+\eps)$-approximate dynamic interval set cover data structure $\mathcal{D}_\textnormal{old}$ with $\widetilde{O}(n^\alpha/\eps^{1-\alpha})$ amortized update time and $\widetilde{O}(n_0)$ construction time for any $\eps>0$,
then there exists a $(1+\eps)$-approximate dynamic interval set cover data structure $\mathcal{D}_\textnormal{new}$ with $\widetilde{O}(n^{\alpha'}/\eps^{1-\alpha'})$ amortized update time and $\widetilde{O}(n_0)$ construction time for any $\eps>0$, where $\alpha' = \alpha/(1+\alpha)$.
Here $n$ (resp., $n_0$) denotes the size of the current (resp., initial) problem instance.
\end{theorem}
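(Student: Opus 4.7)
The plan is to build $\mathcal{D}_\textnormal{new}$ by decomposing the dynamic instance into $k$ spatial sub-instances of size roughly $n/k$, running one copy of $\mathcal{D}_\textnormal{old}$ on each, and periodically rebuilding the whole decomposition. Concretely, I would sort the points of $S$ by $x$-coordinate and split them into consecutive blocks $S_1,\dots,S_k$ with $x$-ranges $I_1,\dots,I_k$, each holding $m = n/k$ points; for each block $i$, one copy of $\mathcal{D}_\textnormal{old}$ maintains a $(1+\eps)$-approximate local cover on the sub-instance whose points are $S_i$ and whose intervals are those of $\mathcal{R}$ intersecting $I_i$ (clipped to $I_i$, with back-pointers to the originals). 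A point insertion/deletion is routed to its unique block, costing $\widetilde{O}(m^\alpha/\eps^{1-\alpha})$. An interval update is more delicate, since one interval could touch every block; to cap the cost I would use a two-tier scheme in which intervals that \emph{fully contain} some $I_i$ are offloaded to a separate lightweight layer (a segment tree over blocks, managed using the exact output-sensitive algorithm from Lemma~\ref{lem-osint}), so that any interval touches at most $O(1)$ sub-instances plus $O(\log k)$ segment-tree nodes. Every $m$ updates the whole partition is torn down and rebuilt in $\widetilde{O}(n)$ time, amortizing to $\widetilde{O}(k)$ per update.

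The global cover is the disjoint union of the local covers and the long-interval layer, and the central step of the analysis is to show that this combined cover has size at most $(1+\eps)\,\mathsf{opt}$. Separating out the long intervals is what prevents a single long interval in an optimal cover $\mathcal{R}^*$ from inflating every local optimum; a charging argument across block boundaries then aims at $\sum_i \mathsf{opt}_i + \mathsf{opt}_{\text{long}} \le \mathsf{opt} + O(k)$. The additive $O(k)$ is absorbed into the $(1+\eps)$-factor whenever $\mathsf{opt} = \Omega(k/\eps)$; otherwise we bypass the sub-instances entirely and recompute the cover from scratch via Lemma~\ref{lem-osint} in $\widetilde{O}(\mathsf{opt}) = \widetilde{O}(k/\eps)$ time, which amortizes to $\widetilde{O}(1/\eps)$ per update since $\Omega(\eps\cdot\mathsf{opt})$ updates are needed before $\mathsf{opt}$ can shift by a $(1+\eps)$-factor.

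Balancing the per-update cost $\widetilde{O}((n/k)^\alpha/\eps^{1-\alpha})$ of $\mathcal{D}_\textnormal{old}$ against the amortized rebuild cost $\widetilde{O}(k)$ — i.e., setting $(n/k)^\alpha/\eps^{1-\alpha} = k$ — gives $k = \widetilde{O}(n^{\alpha/(1+\alpha)}/\eps^{(1-\alpha)/(1+\alpha)})$, and since $(1-\alpha)/(1+\alpha) \le 1/(1+\alpha) = 1-\alpha'$ this fits inside the claimed $\widetilde{O}(n^{\alpha'}/\eps^{1-\alpha'})$ with $\alpha' = \alpha/(1+\alpha)$. The main obstacle is establishing the tight bound $\sum_i \mathsf{opt}_i + \mathsf{opt}_{\text{long}} \le \mathsf{opt} + O(k)$, rather than only $O(\mathsf{opt}) + O(k)$: this is exactly what is needed to promote a constant-factor approximation into a genuine $(1+\eps)$-approximation, and it requires a carefully crafted assignment of intervals of $\mathcal{R}^*$ to sub-instances that avoids double-counting around block boundaries. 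A secondary obstacle is to support size, membership, and reporting queries on the combined cover within the specified time budgets while both the sub-instance covers and the long-interval layer are changing, which I would handle by maintaining a shared balanced BST keyed on the intervals of the combined cover with multiplicity counters updated on every sub-cover change.
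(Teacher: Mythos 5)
Your high-level architecture (spatial blocks, one copy of $\mathcal{D}_\textnormal{old}$ per block, periodic rebuild, output-sensitive fallback when $\mathsf{opt}$ is small) matches the paper's, but two essential steps are missing or wrong. First, the central inequality $\sum_i \mathsf{opt}_i + \mathsf{opt}_{\text{long}} \leq \mathsf{opt} + O(k)$ is exactly the crux, you flag it as the ``main obstacle,'' and your architecture does not obviously support it: if a point of block $i$ is covered in the optimal cover $\mathcal{R}^*$ \emph{only} by an interval containing all of $I_i$, then $\mathcal{R}^*$ restricted to the short intervals of block $i$ need not cover $S_i$ at all, so $\mathsf{opt}_i$ cannot be charged to $|\mathcal{R}^* \cap \mathcal{I}_i|$, and it is unclear what $\mathsf{opt}_{\text{long}}$ for a segment-tree layer even means or why it would be charged without overlap. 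The paper resolves this not with a segment tree but with a coverable/uncoverable dichotomy: any block contained in some interval is covered by a single such interval (total overhead $r$), the local data structures are consulted only for uncoverable blocks, where $\mathcal{I}_{\text{opt}} \cap \mathcal{I}_i$ \emph{is} a feasible local cover, and a separate lemma (at most two intervals of an optimal cover contain any block boundary, else one is redundant) bounds the double counting by $2r$. These ideas are absent from your proposal.

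Second, your amortization of the small-$\mathsf{opt}$ case rests on the claim that $\Omega(\eps\cdot\mathsf{opt})$ updates are needed before $\mathsf{opt}$ shifts by a $(1+\eps)$-factor. This stability property is false for fully dynamic interval set cover: deleting a single long interval can increase the optimum from $1$ to $\Theta(n)$ in one step (this is precisely why the paper uses bootstrapping here and reserves the local-modification/stability framework for interval \emph{hitting} set). You are forced into this amortization because your balance $(n/k)^\alpha/\eps^{1-\alpha} = k$ makes the per-update cost $\widetilde{O}(k/\eps)$ of the output-sensitive fallback exceed the target $\widetilde{O}(n^{\alpha'}/\eps^{1-\alpha'})$ by a factor of roughly $\eps^{-1/(1+\alpha)}$. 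The paper instead truncates the output-sensitive simulation to $\widetilde{O}(r/\eps)$ time after \emph{every} update (which also settles how one detects whether $\mathsf{opt}$ is small) and balances $f^\alpha/\eps^{1-\alpha}$ against $r/\eps$ rather than against $r$, choosing $r \approx n^{\alpha'}\eps^{\alpha'}$ so that $r/\eps = n^{\alpha'}/\eps^{1-\alpha'}$ fits the budget with no stability assumption.
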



Assuming the existence of $\mathcal{D}_\text{old}$ as in the theorem, we are going to design the improved data structure $\mathcal{D}_\text{new}$.
Let $(S,\mathcal{I})$ be a dynamic interval set cover instance, and $\varepsilon>0$ be the approximation factor.
We denote by $n$ (resp., $n_0$) the size of the current (resp., initial) $(S,\mathcal{I})$.

\paragraph{The construction of $\mathcal{D}_\text{new}$.}
Initially, $|S|+|\mathcal{I}| = n_0$.
Essentially, our data structure $\mathcal{D}_\text{new}$ consists of two parts\footnote{In implementation level, we may need some additional support data structures (which are very simple). For simplicity of exposition, we shall mention them when discussing the implementation details.}.
The first part is the basic data structure $\mathcal{A}$ required for the output-sensitive algorithm of Lemma~\ref{lem-osint}.
The second part is a family of $\mathcal{D}_\text{old}$ data structures defined as follows.
Let $f$ be a function to be determined shortly.
We partition the real line $\mathbb{R}$ into $r = \lceil n_0/f(n_0,\eps) \rceil$ connected portions (i.e., intervals) $J_1,\dots,J_r$ such that each portion $J_i$ contains $O(f(n_0,\eps))$ points in $S$ and $O(f(n_0,\eps))$ endpoints of the intervals in $\mathcal{I}$.
Define $S_i = S \cap J_i$ and define $\mathcal{I}_i \subseteq \mathcal{I}$ as the sub-collections consisting of the intervals that ``partially intersect'' $J_i$, i.e., $\mathcal{I}_i = \{I \in \mathcal{I}: J_i \cap I \neq \emptyset \text{ and } J_i \nsubseteq I\}$.
When the instance $(S,\mathcal{I})$ is updated, the partition $J_1,\dots,J_r$ will remain unchanged, but the $S_i$'s and $\mathcal{I}_i$'s will change along with $S$ and $\mathcal{I}$.
We view each $(S_i,\mathcal{I}_i)$ as a dynamic interval set cover instance, and let $\mathcal{D}_\text{old}^{(i)}$ be the data structure $\mathcal{D}_\text{old}$ built on $(S_i,\mathcal{I}_i)$ for the approximation parameter $\tilde{\eps} = \eps/2$.
Thus, $\mathcal{D}_\text{old}^{(i)}$ maintains a $(1+\tilde{\eps})$-approximate optimal set cover for $(S_i,\mathcal{I}_i)$.
The second part of $\mathcal{D}_\text{new}$ consists of the data structures $\mathcal{D}_\text{old}^{(1)},\dots,\mathcal{D}_\text{old}^{(r)}$.

\paragraph{Update and reconstruction.}
After an operation on $(S,\mathcal{I})$, we update the basic data structure $\mathcal{A}$.
Also, we update the data structure $\mathcal{D}_\text{old}^{(i)}$ if the instance $(S_i,\mathcal{I}_i)$ changes due to the operation.
Note that an operation on $S$ changes exactly one $S_i$ and an operation on $\mathcal{I}$ changes at most two $\mathcal{I}_i$'s (because an interval can belong to at most two $\mathcal{I}_i$'s).
Thus, we in fact only need to update at most two $\mathcal{D}_\text{old}^{(i)}$'s.
Besides the update, we also reconstruct the entire data structure $\mathcal{D}_\text{new}$ periodically (as is the case for many dynamic data structures).
Specifically, the first reconstruction of $\mathcal{D}_\text{new}$ happens after processing $f(n_0,\eps)$ operations.
The reconstruction is the same as the initial construction of $\mathcal{D}_\text{new}$, except that $n_0$ is replaced with $n_1$, the size of $(S,\mathcal{I})$ at the time of reconstruction.
Then the second reconstruction happens after processing $f(n_1,\eps)$ operations since the first reconstruction, and so forth.

\paragraph{Maintaining a solution.}
We now discuss how to maintain a $(1+\varepsilon)$-approximate optimal set cover $\mathcal{I}_\text{appx}$ for $(S,\mathcal{I})$.
Let $\mathsf{opt}$ denote the optimum (i.e., the size of an optimal set cover) of the current $(S,\mathcal{I})$.
Set $\delta = \min\{(6+2\eps) \cdot r/\eps, n\}$.
If $\mathsf{opt} \leq \delta$, then the output-sensitive algorithm can compute an optimal set cover for $(S,\mathcal{I})$ in $\widetilde{O}(\delta)$ time.
Thus, we simulate the output-sensitive algorithm within that amount of time.
If the algorithm successfully computes a solution, we use it as our $\mathcal{I}_\text{appx}$.
Otherwise, we construct $\mathcal{I}_\text{appx}$ as follows.
For $i \in \{1,\dots,r\}$, we say $J_i$ is \textit{coverable} if there exists $I \in \mathcal{I}$ such that $J_i \subseteq I$ and \textit{uncoverable} otherwise.
Let $P = \{i : J_i \text{ is coverable}\}$ and $P' = \{i : J_i \text{ is uncoverable}\}$.
We try to use the intervals in $\mathcal{I}$ to ``cover'' all coverable portions.
That is, for each $i \in P$, we find an interval in $\mathcal{I}$ that contains $J_i$, and denote by $\mathcal{I}^*$ the collection of these intervals.
Then we consider the uncoverable portions.
If for some $i \in P'$, the data structure $\mathcal{D}_\text{old}^{(i)}$ tells us that the current $(S_i,\mathcal{I}_i)$ does not have a set cover, then we immediately make a no-solution decision, i.e., decide that the current $(S,\mathcal{I})$ has no feasible set cover, and continue to the next operation.
Otherwise, for every $i \in P'$, the data structure $\mathcal{D}_\text{old}^{(i)}$ maintains a $(1+\tilde{\eps})$-approximate optimal set cover $\mathcal{I}_i^*$ for $(S_i,\mathcal{I}_i)$.
We then define $\mathcal{I}_\text{appx} = \mathcal{I}^* \sqcup \left( \bigsqcup_{J_i \in \mathcal{P}'} \mathcal{I}_i^* \right)$.

Later we will prove that $\mathcal{I}_\text{appx}$ is always a $(1+\varepsilon)$-approximate optimal set cover for $(S,\mathcal{I})$.
Before this, let us consider how to store $\mathcal{I}_\text{appx}$ properly to support the size, membership, and reporting queries in the required query times.
If $\mathcal{I}_\text{appx}$ is computed by the output-sensitive algorithm, then the size of $\mathcal{I}_\text{appx}$ is at most $\delta$, and we have all the elements of $\mathcal{I}_\text{appx}$ in hand.
In this case, it is not difficult to build a data structure on $\mathcal{I}_\text{appx}$ to support the desired queries.
On the other hand, if $\mathcal{I}_\text{appx}$ is defined as the disjoint union of $\mathcal{I}^*$ and $\mathcal{I}_i^*$'s, the size of $\mathcal{I}_\text{appx}$ might be very large and thus we are not able to explicitly extract all elements of $\mathcal{I}_\text{appx}$.
Fortunately, in this case, each $\mathcal{I}_i^*$ is already maintained in the data structure $\mathcal{D}_\text{old}^{(i)}$.
Therefore, we actually only need to compute $P$, $P'$, and $\mathcal{I}^*$; with these in hand, one can already build a data structure to support the desired queries for $\mathcal{I}_\text{appx}$.
We defer the detailed discussion to Appendix~\ref{appx-anaint}.

\paragraph{Correctness.}
We now prove the correctness of our data structure $\mathcal{D}_\text{new}$.
We first show the correctness of the no-solution decision.
\begin{lemma} \label{lem-intnoans}
$\mathcal{D}_\textnormal{new}$ makes a no-solution decision iff the current $(S,\mathcal{I})$ has no set cover.
\end{lemma}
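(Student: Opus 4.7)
The plan is to prove the two directions separately, with both hinging on the same structural observation about the partition $J_1, \dots, J_r$ and the sub-instances $(S_i, \mathcal{I}_i)$. The key fact I will establish first is that, for every index $i \in P'$ (i.e., $J_i$ is uncoverable), any interval $I \in \mathcal{I}$ that intersects $J_i$ must in fact lie in $\mathcal{I}_i$. Indeed, $J_i \nsubseteq I$ by uncoverability, so $I \in \mathcal{I}_i$ by definition. Consequently, the set of intervals in $\mathcal{I}$ that can cover any point of $S_i$ coincides exactly with $\mathcal{I}_i$ whenever $i \in P'$.

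For the forward direction, I will argue by contrapositive: suppose $(S, \mathcal{I})$ admits a set cover $\mathcal{C} \subseteq \mathcal{I}$. Fix any $i \in P'$ and any point $p \in S_i$; then some $I \in \mathcal{C}$ covers $p$. Since $p \in J_i \cap I$, the preceding observation gives $I \in \mathcal{I}_i$. Thus $\mathcal{C} \cap \mathcal{I}_i$ covers all of $S_i$, so $(S_i, \mathcal{I}_i)$ is feasible, and $\mathcal{D}_\text{old}^{(i)}$ will not report infeasibility. Since this holds for every $i \in P'$, no no-solution decision is triggered.

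For the reverse direction, suppose $(S, \mathcal{I})$ has no set cover. Then some $p \in S$ is not covered by any interval of $\mathcal{I}$, and $p$ lies in some portion $J_i$. That $J_i$ must be uncoverable: if $J_i \subseteq I$ for some $I \in \mathcal{I}$, then $I$ would cover $p$. So $i \in P'$, and by the structural observation, $p$ cannot be covered by $\mathcal{I}_i$ either, making $(S_i, \mathcal{I}_i)$ infeasible. The data structure $\mathcal{D}_\text{old}^{(i)}$ is a correct $(1+\tilde\eps)$-approximate dynamic set cover data structure for $(S_i, \mathcal{I}_i)$, so it will report infeasibility, and $\mathcal{D}_\text{new}$ will issue the no-solution decision (assuming it reaches the second branch; but the output-sensitive greedy algorithm of Lemma~\ref{lem-osint} cannot return a valid cover when none exists, so control does pass to the second branch).

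The only subtle step is the structural observation about uncoverable portions, together with the minor point that the output-sensitive algorithm cannot spuriously succeed when the instance is infeasible. Beyond that, both directions reduce to elementary case analysis, so I do not anticipate any serious obstacle.
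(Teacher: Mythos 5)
Your proof is correct and follows essentially the same approach as the paper's: both hinge on the observation that, for an uncoverable $J_i$, every interval of $\mathcal{I}$ that intersects $J_i$ lies in $\mathcal{I}_i$, so $(S_i,\mathcal{I}_i)$ is feasible for all $i\in P'$ iff $(S,\mathcal{I})$ is. Your additional remark that the output-sensitive algorithm cannot falsely succeed on an infeasible instance (so control does reach the branch where the no-solution decision is issued) is a fair, if minor, clarification that the paper leaves implicit.
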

Next, we show that the solution $\mathcal{I}_\text{appx}$ maintained by $\mathcal{D}_\text{new}$ is truly a $(1+\varepsilon)$-approximate optimal set cover for $(S,\mathcal{I})$.
If $\mathcal{I}_\text{appx}$ is computed by the output-sensitive algorithm, then it is an optimal set cover for $(S,\mathcal{I})$.
Otherwise, $\mathsf{opt} > \delta = \min\{(6+2\eps) \cdot r/\eps,n\}$, i.e., either $\mathsf{opt} > (6+2\eps) \cdot r/\eps$ or $\mathsf{opt} > n$.
If $\mathsf{opt} > n$, then the current $(S,\mathcal{I})$ has no set cover (i.e., $\mathsf{opt} = \infty$) and thus $\mathcal{D}_\text{new}$ makes a no-solution decision by Lemma~\ref{lem-intnoans}.
So assume $\mathsf{opt} > (6+2\eps) \cdot r/\eps$.
In this case, $\mathcal{I}_\text{appx} = \mathcal{I}^* \sqcup (\bigsqcup_{i \in P'} \mathcal{I}_i^*)$.
For each $i \in P'$, let $\mathsf{opt}_i$ be the optimum of the instance $(S_i,\mathcal{I}_i)$.
Then we have $|\mathcal{I}_i^*| \leq (1+\tilde{\eps}) \cdot \mathsf{opt}_i$ for all $i \in P'$ where $\tilde{\eps} = \eps/2$.
Since $|\mathcal{I}^*| \leq r$, we have
\begin{equation} \label{eq-Iappx}
    |\mathcal{I}_\text{appx}| = |\mathcal{I}^*| + \sum_{i \in P'} |\mathcal{I}_i^*| \leq r+ \left(1+\frac{\eps}{2}\right) \sum_{i \in P'} \mathsf{opt}_i.
\end{equation}
Let $\mathcal{I}_\text{opt}$ be an optimal set cover for $(S,\mathcal{I})$.
We observe that for $i \in P'$, $\mathcal{I}_\text{opt} \cap \mathcal{I}_i$ is a set cover for $(S_i,\mathcal{I}_i)$, because $J_i$ is uncoverable (so the points in $S_i$ cannot be covered by any interval in $\mathcal{I} \backslash \mathcal{I}_i$).
It immediately follows that $\mathsf{opt}_i \leq |\mathcal{I}_\text{opt} \cap \mathcal{I}_i|$ for all $i \in P'$.
Therefore, we have
\begin{equation} \label{eq-opti}
    \sum_{i \in P'} \mathsf{opt}_i \leq \sum_{i \in P'} |\mathcal{I}_\text{opt} \cap \mathcal{I}_i|.
\end{equation}
The right-hand side of the above inequality can be larger than $|\mathcal{I}_\text{opt}|$ as some intervals in $\mathcal{I}_\text{opt}$ can belong to two $\mathcal{I}_i$'s.
The following lemma bounds the number of such intervals.
\begin{lemma} \label{lem-atmost2}
There are at most $2r$ intervals in $\mathcal{I}_\text{opt}$ that belong to exactly two $\mathcal{I}_i$'s.
\end{lemma}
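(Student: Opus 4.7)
The plan is to combine the optimality of $\mathcal{I}_\text{opt}$ (which forces a non-nesting structure) with a charging argument that assigns each qualifying interval to a single portion, so that a per-portion bound of $2$ sums to $2r$.

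First I would establish the non-nesting property: no interval of $\mathcal{I}_\text{opt}$ strictly contains another. Otherwise, if $I \subsetneq I'$ with both in $\mathcal{I}_\text{opt}$, every point of $S$ covered by $I$ is also covered by $I'$, so $\mathcal{I}_\text{opt} \setminus \{I\}$ remains a set cover of $(S,\mathcal{I})$, contradicting optimality. In particular, sorting the intervals of $\mathcal{I}_\text{opt}$ by left endpoint also sorts them by right endpoint.

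Next I would unpack the condition: an interval $I \in \mathcal{I}$ belongs to exactly two of the sub-collections $\mathcal{I}_i$ precisely when its left endpoint lies in some portion $J_i$ and its right endpoint lies in some $J_j$ with $j > i$. (If both endpoints lie in a single $J_i$ then $I \subseteq J_i$ and only $\mathcal{I}_i$ qualifies; for any $k$ strictly between $i$ and $j$ one has $J_k \subseteq I$, so $I \notin \mathcal{I}_k$.) I would then charge each such $I \in \mathcal{I}_\text{opt}$ to the portion $J_j$ containing its right endpoint; showing that every $J_j$ receives at most two charges will yield the bound $2r$.

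Finally I would verify the per-portion bound by contradiction. Suppose three qualifying intervals $I_1, I_2, I_3 \in \mathcal{I}_\text{opt}$ are all charged to the same $J_j$, and write $I_k = [a_k, b_k]$. By non-nesting and sorting by right endpoint, I may assume $a_1 < a_2 < a_3$ and $b_1 < b_2 < b_3$; moreover the $a_k$'s lie strictly to the left of $J_j$ while the $b_k$'s lie in $J_j$. In particular $a_3 < b_1$, so $I_1 \cup I_3 = [a_1, b_3]$, which contains $I_2 = [a_2, b_2]$. Then every point of $S$ covered by $I_2$ is already covered by $I_1$ or $I_3$, so $\mathcal{I}_\text{opt} \setminus \{I_2\}$ is a strictly smaller set cover, contradicting optimality. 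The only mildly delicate point is handling the boundary conventions for the partition cleanly so that ``belongs to exactly two $\mathcal{I}_i$'s'' corresponds exactly to straddling at least one inter-portion boundary; once that is pinned down, the argument is a direct containment computation.
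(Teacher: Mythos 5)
Your proof is correct and rests on the same key idea as the paper's: three intervals of $\mathcal{I}_\text{opt}$ overlapping a common region would leave one contained in the union of the other two, contradicting optimality. The only cosmetic difference is the charging target---the paper charges each straddling interval to a separation point $s_i$ between consecutive portions (obtaining the slightly tighter bound $2(r-1)$), while you charge it to the portion containing its right endpoint and first establish the non-nesting property explicitly (which is implicit in the paper's argument).
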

\noindent
The above lemma immediately implies
\begin{equation} \label{eq-opt+2r}
    \sum_{i \in P'} |\mathcal{I}_\text{opt} \cap \mathcal{I}_i| \leq |\mathcal{I}_\text{opt}| + 2r = \mathsf{opt} + 2r.
\end{equation}
Combining Inequalities~\ref{eq-Iappx}, \ref{eq-opti}, and \ref{eq-opt+2r}, we deduce that
\begin{equation*}
    \begin{aligned}
        |\mathcal{I}_\text{appx}| & \leq r+\left(1+\frac{\eps}{2}\right) \sum_{i \in P'} \mathsf{opt}_i \\
        & \leq r+\left(1+\frac{\eps}{2}\right) \sum_{i \in P'} |\mathcal{I}_\text{opt} \cap \mathcal{I}_i| \\
        & \leq r+\left(1+\frac{\eps}{2}\right) \cdot (\mathsf{opt} + 2r) = (3+\eps) \cdot r+\left(1+\frac{\eps}{2}\right) \cdot \mathsf{opt} \\
        & < \frac{\eps}{2} \cdot \mathsf{opt}+\left(1+\frac{\eps}{2}\right) \cdot \mathsf{opt} = (1+\eps) \cdot \mathsf{opt},
    \end{aligned}
\end{equation*}
where the last inequality follows from the assumption $\mathsf{opt} > (6+2\eps) \cdot r/\eps$.

\paragraph{Time analysis.}
We briefly discuss the update and construction time of $\mathcal{D}_\text{new}$; a detailed analysis can be found in Appendix~\ref{appx-anaint}.
Since $\mathcal{D}_\text{new}$ is reconstructed periodically, it suffices to consider the first period (i.e., the period before the first reconstruction).
The construction of $\mathcal{D}_\text{new}$ can be easily done in $\widetilde{O}(n_0)$ time.
The update time of $\mathcal{D}_\text{new}$ consists of the time for updating the data structures $\mathcal{A}$ and $\mathcal{D}_\text{old}^{(1)},\dots,\mathcal{D}_\text{old}^{(r)}$, the time for maintaining the solution, and the time for reconstruction.
Since the period consists of $f(n_0,\varepsilon)$ operations, the size of each $(S_i,\mathcal{I}_i)$ is always bounded by $O(f(n_0,\varepsilon))$ during the period.
As argued before, we only need to update at most two $\mathcal{D}_\text{old}^{(i)}$'s after each operation.
Thus, updating the $\mathcal{D}_\text{old}$ data structures takes $\widetilde{O}(f(n_0,\varepsilon)^\alpha/\varepsilon^{1-\alpha})$ amortized time.
Maintaining the solution can be done in $\widetilde{O}(\delta + r)$ time, with a careful implementation.
The time for reconstruction is bounded by $\widetilde{O}(n_0+f(n_0,\varepsilon))$; we amortize it over the $f(n_0,\varepsilon)$ operations in the period and the amortized time cost is then $\widetilde{O}(n_0/f(n_0,\varepsilon))$, i.e., $\widetilde{O}(r)$.
In total, the amortized update time of $\mathcal{D}_\text{new}$ (during the first period) is $\widetilde{O}(f(n_0,\varepsilon)^\alpha/\varepsilon^{1-\alpha} + \delta + r)$.
If we set $f(n,\eps) = \min\{n^{1-\alpha'}/\eps^{\alpha'},n/2\}$ where $\alpha'$ is as defined in Theorem~\ref{thm-boot}, a careful calculation (see Appendix~\ref{appx-anaint}) shows that the amortized update time becomes $\widetilde{O}(n^{\alpha'}/\varepsilon^{1-\alpha'})$.

\subsubsection{Putting everything together}
With the bootstrapping theorem in hand, we are now able to design our dynamic interval set cover data structure.
The starting point is a ``trivial'' data structure, which simply uses the output-sensitive algorithm of Lemma~\ref{lem-osint} to re-compute an optimal interval set cover after each update.
Clearly, the update time of this data structure is $\widetilde{O}(n)$ and the construction time is $\widetilde{O}(n_0)$.
Thus, there exists a $(1+\varepsilon)$-approximate dynamic interval set cover data structure with $\widetilde{O}(n^{\alpha_0}/\varepsilon^{1-\alpha_0})$ amortized update time for $\alpha_0 = 1$ and $\widetilde{O}(n_0)$ construction time.
Define $\alpha_i = \alpha_{i-1}/(1+\alpha_{i-1})$ for $i \geq 1$.
By applying Theorem~\ref{thm-boot} $i$ times for a constant $i \geq 1$, we see the existence of a $(1+\varepsilon)$-approximate dynamic interval set cover data structure with $\widetilde{O}(n^{\alpha_i}/\varepsilon^{1-\alpha_i})$ amortized update time and $\widetilde{O}(n_0)$ construction time.
One can easily verify that $\alpha_i = 1/(i+1)$ for all $i \geq 0$.
Therefore, for any constant $\alpha>0$, we have an index $i \geq 0$ satisfying $\alpha_i < \alpha$ and hence $\widetilde{O}(n^{\alpha_i}/\varepsilon^{1-\alpha_i}) = O(n^\alpha/\varepsilon)$.
We finally conclude the following.
\begin{theorem} \label{thm-ISC}
For a given approximation factor $\eps>0$ and any constant $\alpha>0$, there exists a $(1+\varepsilon)$-approximate dynamic interval set cover data structure $\mathcal{D}$ with $O(n^\alpha/\eps)$ amortized update time and $\widetilde{O}(n_0)$ construction time.
\end{theorem}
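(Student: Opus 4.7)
The plan is to obtain Theorem~\ref{thm-ISC} by repeatedly applying the bootstrapping theorem (Theorem~\ref{thm-boot}) starting from a very crude base data structure, in the manner already suggested just before the statement. The key observations are that the recursion $\alpha \mapsto \alpha/(1+\alpha)$ drives the exponent down to zero, and that after a constant number of iterations we fall below any prescribed target $\alpha > 0$.

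For the base case I would take the trivial data structure that, on every update to $(S,\mathcal{I})$, simply runs the exact output-sensitive algorithm of Lemma~\ref{lem-osint} from scratch. Since the output has size at most $|S|+|\mathcal{I}|=n$, this costs $\widetilde{O}(n)$ per update, and the basic data structure $\mathcal{A}$ supporting the algorithm can be built in $\widetilde{O}(n_0)$ time and maintained with polylogarithmic overhead. Formally this means there is a $(1+\varepsilon)$-approximate data structure (in fact exact) with update time $\widetilde{O}(n^{\alpha_0}/\varepsilon^{1-\alpha_0})$ for $\alpha_0=1$, which is the hypothesis of Theorem~\ref{thm-boot} with exponent $\alpha_0$.

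Next I would iterate. Define $\alpha_i = \alpha_{i-1}/(1+\alpha_{i-1})$ for $i\ge 1$, and verify by a one-line induction that $\alpha_i = 1/(i+1)$: indeed if $\alpha_{i-1}=1/i$, then $\alpha_i = (1/i)/(1+1/i) = 1/(i+1)$. Applying Theorem~\ref{thm-boot} once converts the base data structure into a $(1+\varepsilon)$-approximate data structure with update time $\widetilde{O}(n^{\alpha_1}/\varepsilon^{1-\alpha_1})$ and construction time $\widetilde{O}(n_0)$; applying it $i$ more times (the construction time is preserved at each stage because Theorem~\ref{thm-boot} guarantees $\widetilde{O}(n_0)$ construction) yields an analogous data structure with exponent $\alpha_i = 1/(i+1)$. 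Since $i$ is a constant, the tower of bootstrappings takes only constant depth and no conditions on how $\varepsilon$ relates to $n$ are imposed.

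Given any target constant $\alpha>0$, I would then pick any fixed integer $i$ with $1/(i+1) < \alpha$, for instance $i=\lceil 1/\alpha\rceil$, so that the resulting amortized update time is $\widetilde{O}(n^{\alpha_i}/\varepsilon^{1-\alpha_i})$ with $\alpha_i < \alpha$. The last step is to absorb the hidden polylogarithmic factors and the remaining $\varepsilon^{-(1-\alpha_i)}$ into the claimed bound $O(n^\alpha/\varepsilon)$: since $\alpha_i$ is strictly smaller than $\alpha$, one has $n^{\alpha_i}\cdot\polylog(n) = O(n^\alpha)$ for all sufficiently large $n$, and $\varepsilon^{-(1-\alpha_i)} \le \varepsilon^{-1}$ for $\varepsilon\le 1$; boundary cases are handled by increasing $i$ by one if necessary. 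I do not expect a serious obstacle here — the whole argument is essentially arithmetic once Theorem~\ref{thm-boot} is in place — the only care required is in verifying that the $\widetilde{O}(n_0)$ construction bound is genuinely preserved under iteration (which it is, because each level's construction simply builds the underlying partition, the basic data structure $\mathcal{A}$, and $r = \lceil n_0/f(n_0,\varepsilon)\rceil$ instances of the previous-level structure on disjoint pieces of total size $O(n_0)$).
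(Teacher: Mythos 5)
Your proposal is correct and follows essentially the same route as the paper: start from the trivial output-sensitive re-computation as the base ($\alpha_0 = 1$), iterate Theorem~\ref{thm-boot} a constant number of times to drive the exponent $\alpha_i = 1/(i+1)$ below the target $\alpha$, and absorb the polylogarithmic factors and the exponent on $1/\varepsilon$ into the $O(n^\alpha/\varepsilon)$ bound. The only thing the paper states a bit more tersely is the final absorption step, which you spell out; no gaps.
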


\subsection{2D set cover and hitting set for quadrants and unit squares} \label{sec-DQSC}
In this section, we present our bootstrapping framework for 2D dynamic set cover and hitting set.
Our framework works for quadrants and unit squares.

We first show that dynamic unit-square set cover, dynamic unit-square hitting set, and dynamic quadrant hitting set can all be reduced to dynamic quadrant set cover.
\begin{lemma} \label{lem-sqreduction}
Suppose there exists a $c$-approximate dynamic quadrant set cover data structure with $f(n)$ amortized update time and $\widetilde{O}(n_0)$ construction time, where $f$ is an increasing function.
Then there exist $O(c)$-approximate dynamic unit-square set cover, dynamic unit-square hitting set, and dynamic quadrant hitting set data structures with $\widetilde{O}(f(n))$ amortized update time and $\widetilde{O}(n_0)$ construction time.
\end{lemma}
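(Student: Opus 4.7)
The plan is to reduce each of the three target problems to quadrant set cover so that the assumed data structure can be invoked as a black box, paying only a constant factor in approximation and polylogarithmic overhead per update.

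For \emph{quadrant hitting set} the reduction is a straightforward geometric duality: the point $p=(p_x,p_y)$ hits the upper-right quadrant with corner $(a,b)$ if and only if $(a,b)$ lies in the lower-left quadrant anchored at $p$. Mapping each input quadrant to its corner (a new point) and each input point to the quadrant of the opposite orientation anchored there (a new range) yields an equivalent quadrant set cover instance on which insertions and deletions correspond one-to-one, so the assumed data structure applies directly; reflecting coordinate axes handles the remaining quadrant orientations.

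For \emph{unit-square set cover} I would impose a unit grid on $\reals^2$ and handle each cell independently. The key geometric fact is that for any unit square $U$ and any cell $C$, the intersection $U \cap C$ is an axis-aligned rectangle anchored at one corner of $C$, and hence equivalent to a quadrant within $C$; a point $p \in C$ lies in $U$ iff it lies in this cell-local quadrant. For each non-empty cell $C$ I would maintain a separate quadrant set cover data structure on $S \cap C$ and the cell-local quadrants derived from the unit squares meeting $C$. Each unit square meets at most four cells, so each insertion or deletion dispatches $O(1)$ cell-local updates for an amortized total of $\widetilde{O}(f(n))$. The union of the per-cell $c$-approximate solutions is a valid global cover of size at most $c \sum_C \mathsf{opt}_C$, and the double-counting inequality $\sum_C \mathsf{opt}_C \leq 4 \cdot \mathsf{opt}$ (each square in the global optimum meets at most four cells) gives an $O(c)$-approximation overall. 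A hash table indexed by grid coordinates together with an aggregate size counter supports $O(1)$-time global size queries, routes membership queries for a given square to the at most four cells it meets, and supports reporting by traversing only the non-empty cells.

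Finally, \emph{unit-square hitting set} reduces to unit-square set cover by a second self-duality: $(x,y) \in [a,a+1] \times [b,b+1]$ iff $(a,b) \in [x-1,x] \times [y-1,y]$. Replacing each unit square by its lower-left corner (a new point) and each input point by the shifted unit square anchored there (a new range) yields an equivalent unit-square set cover instance, which is then handled by the previous step. The main obstacle will be the bookkeeping that turns the per-cell black-box calls into a single data structure supporting size, membership, and reporting queries within the prescribed time bounds (in particular, the reporting query must iterate only over non-empty cells without any additive overhead per empty cell), together with checking that the $O(c)$-approximation factor survives composition of the two reductions; both should be routine once the reduction schemas above are in place.
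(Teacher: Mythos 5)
Your proposal is correct and matches the paper's proof essentially step for step: the same point--quadrant duality for quadrant hitting set (split by the four orientations, union of the four solutions), the same unit-grid decomposition with the "unit square restricted to a unit cell is a quadrant" observation and the factor-4 double-counting bound, and the same center-point self-duality between unit-square set cover and hitting set. The only differences are cosmetic implementation choices (a hash table over nonempty cells where the paper uses balanced search trees), and you should just make explicit that combining the four orientation classes for quadrant hitting set costs the factor 4 because each class's optimum is bounded by the global optimum.
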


Now it suffices to consider dynamic quadrant set cover.
In order to do bootstrapping, we need an output-sensitive algorithm for quadrant set cover, analog to the one in Lemma~\ref{lem-osint} for intervals.
To design such an algorithm is considerably more difficult compared to the 1D case, and we defer it to Section~\ref{sec-osquad}.
Before this, let us first discuss the bootstrapping procedure, assuming the existence of a $\mu$-approximate output-sensitive algorithm for quadrant set cover.

\subsubsection{Bootstrapping}
We prove the following bootstrapping theorem, which is the technical heart of our result.
\begin{theorem} \label{thm-bootquad}
Assume quadrant set cover admits a $\mu$-approximate output-sensitive algorithm for some constant $\mu \geq 1$.
Then we have the following result. \\
\textnormal{\bf (*)} Let $\alpha \in [0,1]$ be a number.
If there exists a $(\mu+\eps)$-approximate dynamic quadrant set cover data structure $\mathcal{D}_\textnormal{old}$ with $\widetilde{O}(n^\alpha/\eps^{1-\alpha})$ amortized update time and $\widetilde{O}(n_0)$ construction time for any $\eps>0$,
then there exists a $(\mu+\eps)$-approximate dynamic quadrant set cover data structure $\mathcal{D}_\textnormal{new}$ with $\widetilde{O}(n^{\alpha'}/\eps^{1-\alpha'})$ amortized update time and $\widetilde{O}(n_0)$ construction time for any $\eps>0$, where $\alpha' = 2\alpha/(1+2\alpha)$.
Here $n$ (resp., $n_0$) denotes the size of the current (resp., initial) problem instance.
\end{theorem}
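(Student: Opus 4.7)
The plan is to adapt the construction of $\mathcal{D}_\text{new}$ from the proof of Theorem~\ref{thm-boot}, replacing the 1D partition of $\mathbb{R}$ into intervals with a 2D partition of the plane. Concretely, I would define $\mathcal{D}_\text{new}$ to consist of the basic data structure $\mathcal{A}$ required by the assumed $\mu$-approximate output-sensitive algorithm for quadrant set cover, together with a family of sub-instance data structures $\mathcal{D}_\text{old}^{(1)},\ldots,\mathcal{D}_\text{old}^{(r)}$ built with approximation parameter $\tilde{\eps}=\eps/2$. The key design choice is the partition: I would cover $\mathbb{R}^2$ by a balanced $\sqrt{r}\times\sqrt{r}$ grid of $r = \lceil n_0/f(n_0,\eps)\rceil$ cells $C_1,\ldots,C_r$ so that each cell contains $O(n_0/r)$ points of $S$ and corners of quadrants of $\mathcal{I}$. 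For each cell $C_i$, define $S_i = S \cap C_i$ and $\mathcal{I}_i = \{Q \in \mathcal{I} : Q \cap C_i \neq \emptyset \text{ and } C_i \nsubseteq Q\}$, and build $\mathcal{D}_\text{old}^{(i)}$ on $(S_i,\mathcal{I}_i)$. The maintained solution $\mathcal{I}_\text{appx}$ is either computed directly by the output-sensitive algorithm when $\mathsf{opt} \leq \delta$ for an appropriate $\delta = \Theta(r/\eps)$, or otherwise assembled as $\mathcal{I}^* \sqcup \bigsqcup_{i \in P'} \mathcal{I}_i^*$ exactly as in the 1D procedure, where $\mathcal{I}^*$ contains one quadrant per \emph{coverable} cell (one wholly contained in some $Q \in \mathcal{I}$) and each $\mathcal{I}_i^*$ is the $(\mu+\tilde{\eps})$-approximate sub-solution maintained by $\mathcal{D}_\text{old}^{(i)}$. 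The analog of Lemma~\ref{lem-intnoans} carries over essentially verbatim, since for any uncoverable cell $C_i$, every quadrant of $\mathcal{I}_\text{opt}$ covering a point of $S_i$ must partially intersect $C_i$ and therefore lie in $\mathcal{I}_i$, so $\mathcal{D}_\text{old}^{(i)}$ correctly detects infeasibility.

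The main obstacle, and the true novelty compared to the 1D case, is the approximation analysis. The 1D bound $\sum_i \mathsf{opt}_i \leq \mathsf{opt} + 2r$ comes from the fact that an interval lies in at most two of the $\mathcal{I}_i$'s. In 2D, a quadrant's $L$-shaped boundary can cross $\Theta(\sqrt{r})$ cells of a $\sqrt{r}\times\sqrt{r}$ grid, so a naive count gives only $\sum_i \mathsf{opt}_i \leq O(\sqrt{r}) \cdot \mathsf{opt}$, which would scale the approximation factor by $\sqrt{r}$ and is unacceptable. The technical core of the proof will therefore be establishing an \textbf{additive} bound of the form $\sum_{i \in P'} \mathsf{opt}_i \leq \mathsf{opt} + O(r)$. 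I would prove this by fixing a canonical optimal set cover $\mathcal{I}_\text{opt}$ and separating its members into \emph{corner} quadrants, whose corner lies in a single cell, and \emph{through} quadrants, whose boundary rays merely cross a cell without dropping a corner. Treating each of the four quadrant orientations separately and exploiting the dominance/staircase structure of same-orientation quadrants, I would argue that in a suitably chosen canonical $\mathcal{I}_\text{opt}$ only $O(\sqrt{r})$ quadrants can have a horizontal boundary ray passing through any single row of the grid (and symmetrically for columns); summing the overcount over $\sqrt{r}$ rows, $\sqrt{r}$ columns, and four orientations yields the desired $O(r)$ additive excess.

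Granted the additive bound, the remaining approximation calculation is identical in spirit to the 1D proof: $|\mathcal{I}_\text{appx}| \leq r + (\mu+\tilde{\eps})(\mathsf{opt} + O(r))$, which is at most $(\mu + \eps)\mathsf{opt}$ as soon as $\mathsf{opt} = \Omega(r/\eps)$, validating the threshold $\delta$. For the time analysis, the crucial change from 1D is that a quadrant update now modifies $\mathcal{I}_i$ for the $O(\sqrt{r})$ cells whose interior meets its boundary, rather than the $O(1)$ intervals in 1D; a point update still touches a single $S_i$. Combining this with $\widetilde{O}(\delta + r)$ time for solution maintenance and an amortized reconstruction cost of $\widetilde{O}(r)$ per update yields an amortized bound of $\widetilde{O}\!\left(\sqrt{r}\cdot(n_0/r)^\alpha/\eps^{1-\alpha} + r/\eps\right)$. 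Choosing $f(n,\eps) = \min\{n^{1-\alpha'}/\eps^{\alpha'},\,n/2\}$ with $\alpha' = 2\alpha/(1+2\alpha)$ balances the two dominant terms at $n^{\alpha'}$, yielding the desired $\widetilde{O}(n^{\alpha'}/\eps^{1-\alpha'})$ amortized update time; the $\widetilde{O}(n_0)$ construction bound follows essentially as in the 1D case.
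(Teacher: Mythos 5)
The proposal's central design choice---defining $\mathcal{I}_i$ as the set of \emph{all} quadrants that partially intersect the cell $C_i$---is exactly what the paper explicitly avoids, and for a reason that is fatal to your time analysis (though not to your approximation analysis).

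Your time bound $\widetilde{O}\bigl(\sqrt{r}\cdot(n_0/r)^\alpha/\eps^{1-\alpha} + r/\eps\bigr)$ tacitly assumes each sub-instance has size $O(n_0/r)$. That is false under your definition: a cell $C_i$ can have $|\mathcal{I}_i| = \Theta(n_0/\sqrt{r})$, since every quadrant whose corner lies anywhere in the same grid row or column as $C_i$ can have its boundary ray pass through $C_i$ even though its corner is far away. Consequently the total $\sum_i |\mathcal{I}_i|$ is $\Theta(n_0\sqrt{r})$, not $O(n_0)$, so the construction cost already exceeds $\widetilde{O}(n_0)$. Moreover, after a quadrant update you touch the $\Theta(\sqrt{r})$ cells of one row and one column, whose sub-instance sizes sum to $\Theta(n_0)$ rather than $O(n_0/\sqrt{r})$; applying H\"older's inequality to the correct sizes gives an amortized update of $\widetilde{O}\bigl(r^{(1-\alpha)/2}n^\alpha/\eps^{1-\alpha}+r/\eps\bigr)$, and balancing these terms yields $\alpha' = 2\alpha/(1+\alpha)$, not the claimed $2\alpha/(1+2\alpha)$. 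This is a serious problem: $2\alpha/(1+\alpha)\ge\alpha$ for all $\alpha\le 1$, so the recursion does not decrease $\alpha$ and the bootstrapping makes no progress at all.

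The paper's fix is to put in $\mathcal{Q}_{i,j}$ only the quadrants whose \emph{vertex} lies in $\Box_{i,j}$, plus at most four canonical ``special'' quadrants (the extreme boundary-crossers in each of the four directions). This makes $|\mathcal{Q}_{i,j}| = (\text{corners in } \Box_{i,j}) + O(1)$, so the sum of sub-instance sizes along any grid row or column is $O(f+r)$ rather than $O(f\cdot r)$. That single change is what makes both the construction time $\widetilde{O}(n_0)$ and the H\"older step yield $\alpha'=2\alpha/(1+2\alpha)$. The same definition also makes the approximation analysis (the paper's Lemma on $\mathsf{opt}_{i,j}$) entirely local and elementary: any optimal quadrant that merely crosses $\Box_{i,j}$ can be replaced by one of the four specials without losing coverage, giving $\mathsf{opt}_{i,j}\le |\mathcal{Q}_\text{opt}\cap\mathcal{Q}_{i,j}'|+4$ and hence the $+4r^2$ additive bound directly, with no need for a ``canonical optimal set cover'' argument.

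You did correctly anticipate that an additive overcount bound is required, and that a naive $O(\sqrt{r})$ multiplicative overcount would be unacceptable; and since your $\mathcal{I}_i$ is a superset of the paper's $\mathcal{Q}_{i,j}$, your $\mathsf{opt}_i$ is only smaller, so the additive bound would in fact carry over. But the route you sketch to prove it---a canonicalization that limits the number of optimal boundary rays through each grid row---is both unnecessary and not obviously sound, since even after choosing a ``canonical'' optimum, each of the $\Omega(r/\eps)$ selected quadrants still crosses $\Theta(\sqrt{r})$ cells, and it is not clear how you would charge the resulting overcount to anything of size $O(r)$. The real difficulty in this theorem is the sub-instance design, not the overcount accounting, and the proposal does not supply the key idea.
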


Assuming the existence of $\mathcal{D}_\text{old}$ as in the theorem, we are going to design the improved data structure $\mathcal{D}_\text{new}$.
Let $(S,\mathcal{Q})$ be a dynamic quadrant set cover instance.
As before, we denote by $n$ (resp., $n_0$) the size of the current (resp., initial) $(S,\mathcal{Q})$.

\begin{figure}[h]
    \centering
    \includegraphics[height=6cm]{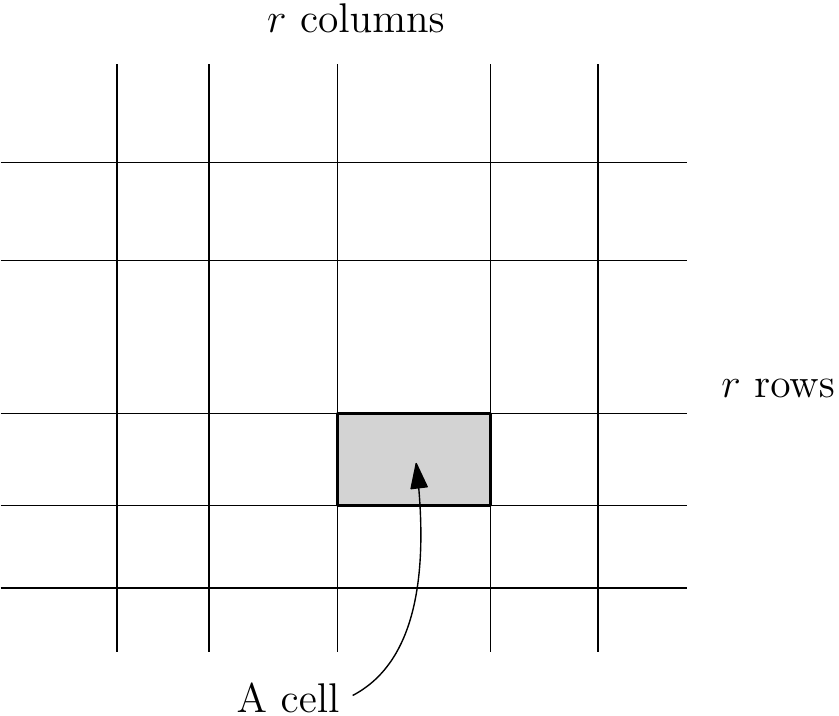}
    \caption{The $r \times r$ grid. Note that the cells may have different sizes.}
    \label{fig-partition}
\end{figure}

\begin{figure}[h]
    \centering
    \includegraphics[height=3.5cm]{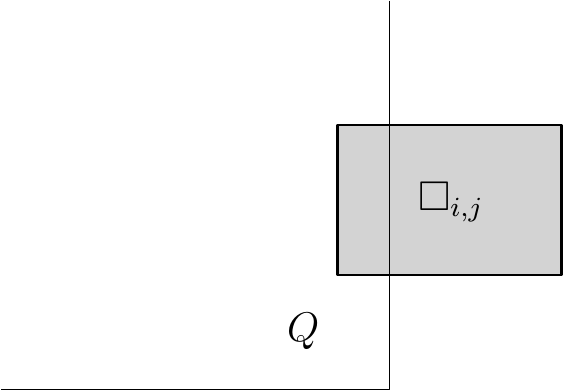}
    \caption{A quadrant $Q$ that left intersects $\Box_{i,j}$.}
    \label{fig-leftint}
\end{figure}

\paragraph{The construction of $\mathcal{D}_\text{new}$.}
Initially, $|S|+|\mathcal{Q}| = n_0$.
Essentially, our data structure $\mathcal{D}_\text{new}$ consists of two parts.
The first part is the data structure $\mathcal{A}$ required for the $\mu$-approximate output-sensitive algorithm.
The second part is a family of $\mathcal{D}_\text{old}$ data structures defined as follows.
Let $f$ be a function to be determined shortly.
We use an orthogonal grid to partition the plane $\mathbb{R}^2$ into $r \times r$ cells for $r = \lceil n_0 / f(n_0,\eps) \rceil$ such that each row (resp., column) of the grid contains $O(f(n_0,\eps))$ points in $S$ and $O(f(n_0,\eps))$ vertices of the quadrants in $\mathcal{Q}$ (see Figure~\ref{fig-partition} for an illustration).
Denote by $\Box_{i,j}$ the cell in the $i$-th row and $j$-th column.
Define $S_{i,j} = S \cap \Box_{i,j}$.
Also, we need to define a sub-collection $\mathcal{Q}_{i,j} \subseteq \mathcal{Q}$.
Recall that in the 1D case, we define $\mathcal{I}_i$ as the sub-collection of intervals in $\mathcal{I}$ that partially intersect the portion $J_i$.
However, for technical reasons, here we cannot simply define $\mathcal{Q}_{i,j}$ as the sub-collection of quadrants in $\mathcal{Q}$ that partially intersects $\Box_{i,j}$.
Instead, we define $\mathcal{Q}_{i,j}$ as follows.
We include in $\mathcal{Q}_{i,j}$ all the quadrants in $\mathcal{Q}$ whose vertices lie in $\Box_{i,j}$.
Besides, we also include in $\mathcal{Q}_{i,j}$ the following (at most) four \textit{special} quadrants.
We say a quadrant $Q$ \textit{left intersects} $\Box_{i,j}$ if $Q$ partially intersects $\Box_{i,j}$ and contains the left edge of $\Box_{i,j}$ (see Figure~\ref{fig-leftint} for an illustration); similarly, we define ``right intersects'', ``top intersects'', and ``bottom intersects''.
Among a collection of quadrants, the \textit{leftmost/rightmost/topmost/bottommost} quadrant refers to the quadrant whose vertex is the leftmost/rightmost/topmost/bottommost.
We include in $\mathcal{Q}_{i,j}$ the rightmost quadrant in $\mathcal{Q}$ that left intersects $\Box_{i,j}$, the leftmost quadrant in $\mathcal{Q}$ that right intersects $\Box_{i,j}$, the bottommost quadrant in $\mathcal{Q}$ that top intersects $\Box_{i,j}$, and the topmost quadrant in $\mathcal{Q}$ that bottom intersects $\Box_{i,j}$ (if these quadrants exist).
When the instance $(S,\mathcal{Q})$ is updated, the grid keeps unchanged, but the $S_{i,j}$'s and $\mathcal{Q}_{i,j}$'s change along with $S$ and $\mathcal{Q}$.
We view each $(S_{i,j},\mathcal{Q}_{i,j})$ as a dynamic quadrant set cover instance, and let $\mathcal{D}_\text{old}^{(i,j)}$ be the data structure $\mathcal{D}_\text{old}$ built on $(S_{i,j},\mathcal{Q}_{i,j})$ for the approximation factor $\tilde{\varepsilon} = \varepsilon/2$.
The second part of $\mathcal{D}_\text{new}$ consists of the data structures $\mathcal{D}_\text{old}^{(i,j)}$ for $i,j \in \{1,\dots,r\}$.

\paragraph{Update and reconstruction.}
After each operation on $(S,\mathcal{Q})$, we update the data structure $\mathcal{A}$.
Also, if some $(S_{i,j},\mathcal{Q}_{i,j})$ changes, we update the data structure $\mathcal{D}_\text{old}^{(i,j)}$.
Note that an operation on $S$ changes exactly one $S_{i,j}$, and an operation on $\mathcal{Q}$ may only change the $\mathcal{Q}_{i,j}$'s in one row and one column (specifically, if the vertex of the inserted/deleted quadrant lies in $\Box_{i,j}$, then only $\mathcal{Q}_{i,1},\dots,\mathcal{Q}_{i,r},\mathcal{Q}_{1,j},\dots,\mathcal{Q}_{r,j}$ may change).
Thus, we in fact only need to update the $\mathcal{D}_\text{old}^{(i,j)}$'s in one row and one column.
Besides the update, we also reconstruct the entire data structure $\mathcal{D}_\text{new}$ periodically, where the (first) reconstruction happens after processing $f(n_0,\eps)$ operations.
This part is totally the same as in our 1D data structure.

\paragraph{Maintaining a solution.}
We now discuss how to maintain a $(\mu+\varepsilon)$-approximate optimal set cover $\mathcal{Q}_\text{appx}$ for $(S,\mathcal{Q})$.
Let $\mathsf{opt}$ denote the optimum of the current $(S,\mathcal{Q})$.
Set $\delta = \min\{(8\mu + 4\eps + 2) \cdot r^2 / \eps,n\}$.
If $\mathsf{opt} \leq \delta$, then the output-sensitive algorithm can compute a $\mu$-approximate optimal set cover for $(S,\mathcal{Q})$ in $\widetilde{O}(\mu \delta)$ time.
Thus, we simulate the output-sensitive algorithm within that amount of time.
If the algorithm successfully computes a solution, we use it as our $\mathcal{Q}_\text{appx}$.
Otherwise, we construct $\mathcal{Q}_\text{appx}$ as follows.
We say the cell $\Box_{i,j}$ is \textit{coverable} if there exists $Q \in \mathcal{Q}$ that contains $\Box_{i,j}$ and \textit{uncoverable} otherwise.
Let $P = \{(i,j): \Box_{i,j} \text{ is coverable}\}$ and $P' = \{(i,j): \Box_{i,j} \text{ is uncoverable}\}$.
We try to use the quadrants in $\mathcal{I}$ to ``cover'' all coverable cells.
That is, for each $(i,j) \in P$, we find a quadrant in $\mathcal{Q}$ that contains $\Box_{i,j}$, and denote by $\mathcal{Q}^*$ the set of all these quadrants.
Then we consider the uncoverable cells.
If for some $(i,j) \in P'$, the data structure $\mathcal{D}_\text{old}^{(i,j)}$ tells us that the instance $(S_{i,j},\mathcal{Q}_{i,j})$ has no set cover, then we immediately make a no-solution decision, i.e., decide that the current $(S,\mathcal{Q})$ has no feasible set cover, and continue to the next operation.
Otherwise, for each $(i,j) \in P'$, the data structure $\mathcal{D}_\text{old}^{(i,j)}$ maintains a $(\mu+\tilde{\eps})$-approximate optimal set cover $\mathcal{Q}_{i,j}^*$ for $(S_{i,j},\mathcal{Q}_{i,j})$.
We then define $\mathcal{Q}_\text{appx} = \mathcal{Q}^* \sqcup \left( \bigsqcup_{(i,j) \in P'} \mathcal{Q}_{i,j}^* \right)$.

We will see later that $\mathcal{Q}_\text{appx}$ is always a $(\mu+\varepsilon)$-approximate optimal set cover for $(S,\mathcal{Q})$.
Before this, let us briefly discuss how to store $\mathcal{Q}_\text{appx}$ to support the desired queries.
If $\mathcal{Q}_\text{appx}$ is computed by the output-sensitive algorithm, then we have all the elements of $\mathcal{Q}_\text{appx}$ in hand and can easily store them in a data structure to support the queries.
Otherwise, $\mathcal{Q}_\text{appx}$ is defined as the disjoint union of $\mathcal{Q}^*$ and $\mathcal{Q}_{i,j}^*$'s.
In this case, the size and reporting queries can be handled in the same way as that in the 1D problem, by taking advantage of the fact that $\mathcal{Q}_{i,j}^*$ is maintained in $\mathcal{D}_\text{old}^{(i,j)}$.
However, the situation for the membership query is more complicated, because now a quadrant in $\mathcal{Q}$ may belong to many $\mathcal{Q}_{i,j}^*$'s.
This issue can be handled by collecting all special quadrants in $\mathcal{Q}_\text{appx}$ and building on them a data structure that supports the membership query.
We defer the detailed discussion to Appendix~\ref{appx-anaquad}.

\paragraph{Correctness.}
We now prove the correctness of our data structure $\mathcal{D}_\text{new}$.
First, we show that the no-solution decision made by our data structure is correct.
\begin{lemma} \label{lem-noansquad}
    $\mathcal{D}_\textnormal{new}$ makes a no-solution decision iff the current $(S,\mathcal{Q})$ has no set cover.
\end{lemma}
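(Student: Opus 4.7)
The plan is to prove both directions via straightforward case analysis, with the forward direction relying on the dominance property of the special quadrants included in each $\mathcal{Q}_{i,j}$.

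The easy direction ($\Leftarrow$) goes as follows. If the current $(S,\mathcal{Q})$ has no set cover, then some point $p \in S$ is not covered by any quadrant in $\mathcal{Q}$. Let $(i,j)$ be the index with $p \in \Box_{i,j}$, so $p \in S_{i,j}$. Since $\mathcal{Q}_{i,j} \subseteq \mathcal{Q}$, the point $p$ is not covered by any quadrant in $\mathcal{Q}_{i,j}$ either, so $(S_{i,j},\mathcal{Q}_{i,j})$ has no set cover. Moreover, no quadrant in $\mathcal{Q}$ can contain $\Box_{i,j}$ (such a quadrant would cover $p$), so $\Box_{i,j}$ is uncoverable and $(i,j) \in P'$. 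Thus $\mathcal{D}_\text{old}^{(i,j)}$ correctly reports no set cover for $(S_{i,j},\mathcal{Q}_{i,j})$ and $\mathcal{D}_\text{new}$ makes a no-solution decision.

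For the harder direction ($\Rightarrow$), suppose $\mathcal{D}_\text{new}$ makes a no-solution decision, witnessed by some $(i,j) \in P'$ for which $(S_{i,j},\mathcal{Q}_{i,j})$ has no set cover. Pick a point $p \in S_{i,j}$ that no quadrant in $\mathcal{Q}_{i,j}$ covers; I aim to show that no quadrant in the whole $\mathcal{Q}$ covers $p$, which gives the claim. Take any $Q \in \mathcal{Q}$ containing $p$ and split into three cases: (i) the vertex of $Q$ lies in $\Box_{i,j}$, in which case $Q \in \mathcal{Q}_{i,j}$ by construction, contradicting the choice of $p$; (ii) $Q$ contains $\Box_{i,j}$ entirely, contradicting $(i,j) \in P'$; (iii) $Q$'s vertex lies outside $\Box_{i,j}$ and $Q \cap \Box_{i,j} \neq \emptyset$ but $\Box_{i,j} \not\subseteq Q$.

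Case (iii) is the core of the argument. The first geometric observation is that a quadrant whose vertex is outside an axis-aligned rectangle and which intersects the rectangle without containing it must fully contain at least one of its four edges; this is a short case analysis on the position of the vertex relative to the rectangle's $x$- and $y$-strips. Thus $Q$ left/right/top/bottom-intersects $\Box_{i,j}$; by symmetry assume it left-intersects. The second observation is that, restricted to $\Box_{i,j}$, the coverage of left-intersecting quadrants is monotone in the $x$-coordinate of the vertex: if $Q'$ is another left-intersecting quadrant (of the same orientation) whose vertex has a larger $x$-coordinate, then $Q \cap \Box_{i,j} \subseteq Q' \cap \Box_{i,j}$. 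Applying this to the rightmost left-intersecting quadrant $Q^* \in \mathcal{Q}_{i,j}$ gives $p \in Q \cap \Box_{i,j} \subseteq Q^* \cap \Box_{i,j}$, contradicting the fact that no member of $\mathcal{Q}_{i,j}$ covers $p$. The main obstacle is precisely this dominance claim: verifying that the four special quadrants chosen in $\mathcal{Q}_{i,j}$ indeed absorb every possible partial intersection requires carefully matching the quadrant's orientation with which edge of $\Box_{i,j}$ it contains, which is why exactly these four extremal quadrants (and no fewer) were included in the construction.
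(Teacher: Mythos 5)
Your proof is correct and takes essentially the same approach as the paper: both directions are handled by the same observation that any quadrant meeting $\Box_{i,j}$ either has its vertex inside the cell, contains the cell (ruled out since $(i,j)\in P'$), or contains one full edge of the cell, in which case its intersection with $\Box_{i,j}$ is dominated by one of the four extremal special quadrants stored in $\mathcal{Q}_{i,j}$. One small remark: your monotonicity observation is stated for left-intersecting quadrants ``of the same orientation,'' but you then apply it to the rightmost left-intersecting quadrant $Q^*$, which may have a different orientation than $Q$; this is harmless because for any left-intersecting quadrant with vertex $x$-coordinate $x_0$, the intersection with $\Box_{i,j}=[x_1,x_2]\times[y_1,y_2]$ is the same rectangle $[x_1,x_0]\times[y_1,y_2]$ regardless of whether the quadrant is SW- or NW-oriented, so the dominance holds across orientations.
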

\noindent
Next, we show that the solution $\mathcal{Q}_\text{appx}$ maintained by $\mathcal{D}_\text{new}$ is truly a $(\mu+\varepsilon)$-approximate optimal set cover for $(S,\mathcal{Q})$.
If $\mathcal{Q}_\text{appx}$ is computed by the output-sensitive algorithm, then it is a $\mu$-approximate optimal set cover for $(S,\mathcal{Q})$.
Otherwise, $\mathsf{opt} > \delta = \min\{(8\mu + 4\eps + 2) \cdot r^2/\eps,n\}$, i.e., either $\mathsf{opt} > (8\mu + 4\eps + 2) \cdot r^2/\eps$ or $\mathsf{opt} > n$.
If $\mathsf{opt} > n$, then $(S,\mathcal{Q})$ has no set cover (i.e., $\mathsf{opt} = \infty$) and $\mathcal{D}_\text{new}$ makes a no-solution decision by Lemma~\ref{lem-noansquad}.
So assume $\mathsf{opt} > (8\mu + 4\eps + 2)r^2/\eps$.
In this case, $\mathcal{Q}_\text{appx} = \mathcal{Q}^* \sqcup (\bigsqcup_{(i,j) \in P'} \mathcal{Q}_{i,j}^*)$.
For each $(i,j) \in P'$, let $\mathsf{opt}_{i,j}$ be the optimum of $(S_{i,j},\mathcal{Q}_{i,j})$.
Then we have $|\mathcal{Q}_{i,j}^*| \leq (\mu+\tilde{\eps}) \cdot \mathsf{opt}_{i,j}$ for all $(i,j) \in P'$ where $\tilde{\eps} = \eps/2$.
Since $|\mathcal{Q}^*| \leq r^2$, we have
\begin{equation} \label{eq-quadappx}
    \mathcal{Q}_\text{appx} = |\mathcal{Q}^*| + \sum_{(i,j) \in P'} |\mathcal{Q}_{i,j}^*| \leq
    r^2 + \left( \mu+\frac{\eps}{2} \right) \sum_{(i,j) \in P'} \mathsf{opt}_{i,j}.
\end{equation}
Let $\mathcal{Q}_{i,j}' \subseteq \mathcal{Q}_{i,j}$ consist of the non-special quadrants, i.e., those whose vertices are in $\Box_{i,j}$.
\begin{lemma} \label{lem-quadopt}
    We have $|\mathcal{Q}_\textnormal{opt} \cap \mathcal{Q}_{i,j}'| + 4 \geq \mathsf{opt}_{i,j}$ for all $(i,j) \in P'$, and in particular,
    \begin{equation} \label{eq-quadij}
        \mathsf{opt} + 4r^2 = |\mathcal{Q}_\textnormal{opt}| + 4r^2 \geq \sum_{(i,j) \in P'} \mathsf{opt}_{i,j}.
    \end{equation}
\end{lemma}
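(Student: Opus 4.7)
The plan is to establish the per-cell inequality $\mathsf{opt}_{i,j} \leq |\mathcal{Q}_\text{opt} \cap \mathcal{Q}_{i,j}'| + 4$ first and then derive Inequality~\ref{eq-quadij} by summing over $(i,j) \in P'$. For the per-cell bound I would exhibit an explicit set cover of the sub-instance $(S_{i,j}, \mathcal{Q}_{i,j})$ of cardinality at most $|\mathcal{Q}_\text{opt} \cap \mathcal{Q}_{i,j}'| + 4$, namely
$\mathcal{C}_{i,j} := (\mathcal{Q}_\text{opt} \cap \mathcal{Q}_{i,j}') \sqcup \mathcal{S}_{i,j}$,
where $\mathcal{S}_{i,j} \subseteq \mathcal{Q}_{i,j}$ denotes the at-most-four special quadrants. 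The size bound is immediate from the definition, so the whole argument reduces to checking that $\mathcal{C}_{i,j}$ covers every point $p \in S_{i,j}$.

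For this, take any $Q \in \mathcal{Q}_\text{opt}$ with $p \in Q$. If the vertex of $Q$ lies in $\Box_{i,j}$ then $Q \in \mathcal{Q}_\text{opt} \cap \mathcal{Q}_{i,j}' \subseteq \mathcal{C}_{i,j}$ and we are done; so assume the vertex of $Q$ is outside $\Box_{i,j}$. Since $Q \cap \Box_{i,j}$ contains $p$ and since $\Box_{i,j}$ is uncoverable we have $Q \not\supseteq \Box_{i,j}$, so $Q$ partially intersects $\Box_{i,j}$. The structural fact at the center of the proof is that any quadrant whose vertex is outside $\Box_{i,j}$ and which partially intersects $\Box_{i,j}$ must contain an entire edge of $\Box_{i,j}$, and therefore left, right, top, or bottom intersects it. A short coordinate check establishes this: for a top-left quadrant $(-\infty,a]\times[b,+\infty)$ with $(a,b)\notin \Box_{i,j} = [x_L,x_R]\times[y_B,y_T]$, combining $Q\cap\Box_{i,j}\neq\emptyset$ with $Q\not\supseteq\Box_{i,j}$ forces either $a\geq x_R$ (so $Q$ contains the top edge) or $b\leq y_B$ (so $Q$ contains the left edge), and the three other orientations follow by symmetry. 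Assume without loss of generality that $Q$ left intersects $\Box_{i,j}$. A second brief check shows that every left-intersecting quadrant of $\mathcal{Q}$, irrespective of orientation, meets $\Box_{i,j}$ in exactly the vertical strip $[x_L,a]\times[y_B,y_T]$, where $a$ is the $x$-coordinate of its vertex. Therefore the rightmost left-intersecting quadrant of $\mathcal{Q}$ dominates $Q$ on $\Box_{i,j}$ and in particular covers $p$; by definition this dominating quadrant belongs to $\mathcal{S}_{i,j}\subseteq \mathcal{C}_{i,j}$, as required.

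Inequality~\ref{eq-quadij} then follows by summing the per-cell bound: every quadrant of $\mathcal{Q}_\text{opt}$ has its vertex in at most one cell, so $\sum_{(i,j)\in P'} |\mathcal{Q}_\text{opt} \cap \mathcal{Q}_{i,j}'| \leq |\mathcal{Q}_\text{opt}| = \mathsf{opt}$, and $|P'| \leq r^2$ gives $\sum_{(i,j)\in P'} 4 \leq 4r^2$.

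The main obstacle is the two-layer case analysis just sketched: first showing that a quadrant with external vertex which only partially intersects $\Box_{i,j}$ is forced to contain a full edge of $\Box_{i,j}$, and then showing that on each of the four sides the extremal (rightmost/leftmost/bottommost/topmost) quadrant stored in $\mathcal{Q}_{i,j}$ dominates all other quadrants that intersect that side. Both checks are elementary but must be carried out uniformly across the four orientations of quadrants to be sure that the four special quadrants in $\mathcal{S}_{i,j}$ really stand in for every possible external coverage of $\Box_{i,j}$.
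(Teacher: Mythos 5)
Your proposal is correct and takes essentially the same route as the paper's: both exhibit the cover $(\mathcal{Q}_\text{opt} \cap \mathcal{Q}_{i,j}') \cup \mathcal{Q}_{i,j}''$ of size at most $|\mathcal{Q}_\text{opt} \cap \mathcal{Q}_{i,j}'| + 4$, both argue that a quadrant of $\mathcal{Q}_\text{opt}$ covering $p \in S_{i,j}$ either has its vertex in $\Box_{i,j}$ or contains an edge of $\Box_{i,j}$ and is therefore dominated on $\Box_{i,j}$ by the corresponding stored special quadrant, and both then sum over $P'$ using the fact that each vertex lies in one cell. The paper routes the argument through an auxiliary set $\mathcal{Q}^\sim$ of quadrants in $\mathcal{Q}_\text{opt}$ intersecting $\Box_{i,j}$, while you spell out the coordinate case analysis more explicitly, but the underlying ideas coincide.
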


\noindent
Using Equations~\ref{eq-quadappx} and~\ref{eq-quadij}, we deduce that
\begin{equation*}
    \begin{aligned}
        |\mathcal{Q}_\text{appx}| & \leq r^2+\left(\mu+\frac{\eps}{2}\right) \sum_{(i,j) \in P'} \mathsf{opt}_{i,j} \\
        & \leq r^2+\left(\mu+\frac{\eps}{2}\right) (\mathsf{opt} + 4r^2) \\
        & \leq (4\mu + 2\eps +1) \cdot r^2+\left(\mu+\frac{\eps}{2}\right) \cdot \mathsf{opt} \\
        & < \frac{\eps}{2} \cdot \mathsf{opt}+\left(\mu+\frac{\eps}{2}\right) \cdot \mathsf{opt} = (\mu+\eps) \cdot \mathsf{opt},
    \end{aligned}
\end{equation*}
where the last inequality follows from the fact that $\mathsf{opt} > (8\mu + 4\eps +2) \cdot r^2/\eps$.

\paragraph{Time analysis.}
We briefly discuss the update and construction time of $\mathcal{D}_\text{new}$; a detailed analysis can be found in Appendix~\ref{appx-anaquad}.
It suffices to consider the first period (i.e., the period before the first reconstruction).
We first observe the following fact.
\begin{lemma} \label{lem-bound}
    At any time in the first period, we have $\sum_{k=1}^r (|S_{i,k}| + |\mathcal{Q}_{i,k}|) = O(f(n_0,\varepsilon)+r)$ for all $i \in \{1,\dots,r\}$ and $\sum_{k=1}^r (|S_{k,j}| + |\mathcal{Q}_{k,j}|) = O(f(n_0,\varepsilon)+r)$ for all $j \in \{1,\dots,r\}$.
\end{lemma}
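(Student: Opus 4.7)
The plan is to decompose the row sum into three pieces: the points in $S_{i,k}$, the non-special quadrants in $\mathcal{Q}_{i,k}$ (those whose vertex lies in $\Box_{i,k}$), and the at most four special quadrants contributed by each cell $\Box_{i,k}$. The column case is symmetric, so I would only spell out the row case.

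For the points, the grid is constructed so that initially each row contains $O(f(n_0,\varepsilon))$ points of $S$. A point insertion or deletion touches exactly one $S_{i',k}$, namely the cell containing that point, and therefore changes $\sum_k |S_{i,k}|$ only when $i'=i$. Since the first period consists of at most $f(n_0,\varepsilon)$ operations in total, the row count can grow by at most $f(n_0,\varepsilon)$, yielding $\sum_k |S_{i,k}| = O(f(n_0,\varepsilon))$ throughout the period.

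For the non-special quadrants, each such quadrant in $\mathcal{Q}_{i,k}$ is uniquely associated with the cell containing its vertex, so summing over $k$ gives exactly the number of quadrants in $\mathcal{Q}$ whose vertex lies in the $i$-th row of the grid. By the choice of the grid this quantity is initially $O(f(n_0,\varepsilon))$, and each operation on $\mathcal{Q}$ inserts or deletes a single vertex, so the same $f(n_0,\varepsilon)$-operation budget bounds the change. Finally, by definition each $\mathcal{Q}_{i,k}$ holds at most four special quadrants, contributing at most $4r$ to the row sum. Adding the three bounds gives $O(f(n_0,\varepsilon)+r)$ for the row.

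The only subtlety I anticipate is that an update on $\mathcal{Q}$ can silently change which quadrant plays each of the four special roles for many cells in row $i$ (for example, deleting the current rightmost left-intersecting quadrant of $\Box_{i,k}$ forces a new one to be chosen). This does not threaten the bound, however, because the count of special entries in any single $\mathcal{Q}_{i,k}$ is still at most four regardless of which update history produced it; it only matters for the time analysis of maintenance, not for the size bound stated here. Hence the lemma follows.
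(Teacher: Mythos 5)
Your proof is correct and follows essentially the same decomposition as the paper: bound the points in the row by the initial grid construction plus the period's operation budget, bound the non-special quadrants identically by tracking vertices in the row, and bound the special quadrants by the trivial $4$-per-cell count giving $O(r)$. The subtlety you flag about special-quadrant roles shifting under updates is a reasonable sanity check, and you correctly observe it is immaterial to this size bound.
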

The above lemma implies that the sum of the sizes of all $(S_{i,j},\mathcal{Q}_{i,j})$ is $O(n_0+r^2)$ at any time in the first period.
Therefore, constructing $\mathcal{D}_\text{new}$ can be easily done in $\widetilde{O}(n_0 + r^2)$ time.
The update time of $\mathcal{D}_\text{new}$ consists of the time for reconstruction, the time for updating $\mathcal{A}$ and $\mathcal{D}_\text{old}^{(i,j)}$'s, and the time for maintaining the solution.
Using almost the same analysis as in the 1D problem, we can show that the reconstruction takes $\widetilde{O}(r + r^2/f(n_0,\varepsilon))$ amortized time and maintaining the solution can be done in $\widetilde{O}(\delta + r^2)$ time, with a careful implementation.
The time for updating the $\mathcal{D}_\text{old}$ data structures requires a different analysis.
Let $m_{i,j}$ denote the current size of $(S_{i,j},\mathcal{Q}_{i,j})$.
As argued before, we in fact only need to update the $\mathcal{D}_\text{old}$ data structures in one row and one column (say the $i$-th row and $j$-th column).
Hence, updating the $\mathcal{D}_\text{old}$ data structures takes $\widetilde{O}(\sum_{k=1}^r m_{i,k}^\alpha/\varepsilon^{1-\alpha} + \sum_{k=1}^r m_{k,j}^\alpha/\varepsilon^{1-\alpha})$ amortized time.
Lemma~\ref{lem-bound} implies that $\sum_{k=1}^r m_{i,k} = O(f(n_0,\varepsilon)+r)$ and $\sum_{k=1}^r m_{k,j} = O(f(n_0,\varepsilon)+r)$.
Since $\alpha \leq 1$, by H\"older's inequality and Lemma~\ref{lem-bound},
\begin{equation*}
    \sum_{k=1}^r m_{i,k}^\alpha \leq \left(\frac{\sum_{k=1}^r m_{i,k}}{r}\right)^\alpha \cdot r = O(r^{1-\alpha} \cdot (f(n_0,\eps)+r)^\alpha)
\end{equation*}
and similarly $\sum_{k=1}^r m_{k,j}^\alpha = O(r^{1-\alpha} \cdot (f(n_0,\eps)+r)^\alpha)$.
It follows that updating the $\mathcal{D}_\text{old}$ data structures takes $\widetilde{O}(r^{1-\alpha} \cdot (f(n_0,\eps) + r)^\alpha/ \eps^{1-\alpha})$ amortized time. 
In total, the amortized update time of $\mathcal{D}_\text{new}$ (during the first period) is $\widetilde{O}(r^{1-\alpha} \cdot (f(n_0,\eps)+r)^\alpha + \delta + r^2)$.
If we set $f(n,\eps) = \min\{n^{1-\alpha'/2}/(\sqrt{\eps})^{\alpha'},n/2\}$ where $\alpha'$ is as defined in Theorem~\ref{thm-bootquad}, a careful calculation (see Appendix~\ref{appx-anaquad}) shows that the amortized update time becomes $\widetilde{O}(n^{\alpha'}/\varepsilon^{1-\alpha'})$ .

\subsubsection{An output-sensitive quadrant set cover algorithm} \label{sec-osquad}
We propose an $O(1)$-approximate output-sensitive algorithm for quadrant set cover, which is needed for applying Theorem~\ref{thm-bootquad}.
Let $(S,\mathcal{Q})$ be a quadrant set cover instance of size $n$, and $\mathsf{opt}$ be its optimum.
Our goal is to compute an $O(1)$-approximate optimal set cover for $(S,\mathcal{Q})$ in $\widetilde{O}(\mathsf{opt})$ time, using some basic data structure built on $(S,\mathcal{Q})$.

For simplicity, let us assume that $(S,\mathcal{Q})$ has a set cover; how to handle the no-solution case is discussed in Appendix~\ref{appx-nosol}.
There are four types of quadrants in $\mathcal{Q}$, southeast, southwest, northeast, northwest; we denote by $\mathcal{Q}^\text{SE},\mathcal{Q}^\text{SW},\mathcal{Q}^\text{NE},\mathcal{Q}^\text{NW} \subseteq \mathcal{Q}$ the sub-collections of these types of quadrants, respectively.
Let $U^\text{SE}$ denote the union of the quadrants in $\mathcal{Q}^\text{SE}$, and define $U^\text{SW},U^\text{NE},U^\text{NW}$ similarly.
Since $(S,\mathcal{Q})$ has a set cover, we have $S = (S \cap U^\text{SE}) \cup (S \cap U^\text{SW}) \cup (S \cap U^\text{NE}) \cup (S \cap U^\text{NW})$.
Therefore, if we can compute $O(1)$-approximate optimal set covers for $(S \cap U^\text{SE},\mathcal{Q})$, $(S \cap U^\text{SW},\mathcal{Q})$, $(S \cap U^\text{NE},\mathcal{Q})$, and $(S \cap U^\text{NW},\mathcal{Q})$, then the union of these four set covers is an $O(1)$-approximate optimal set cover for $(S,\mathcal{Q})$.

\begin{figure}[h]
    \centering
    \includegraphics[height=4.5cm]{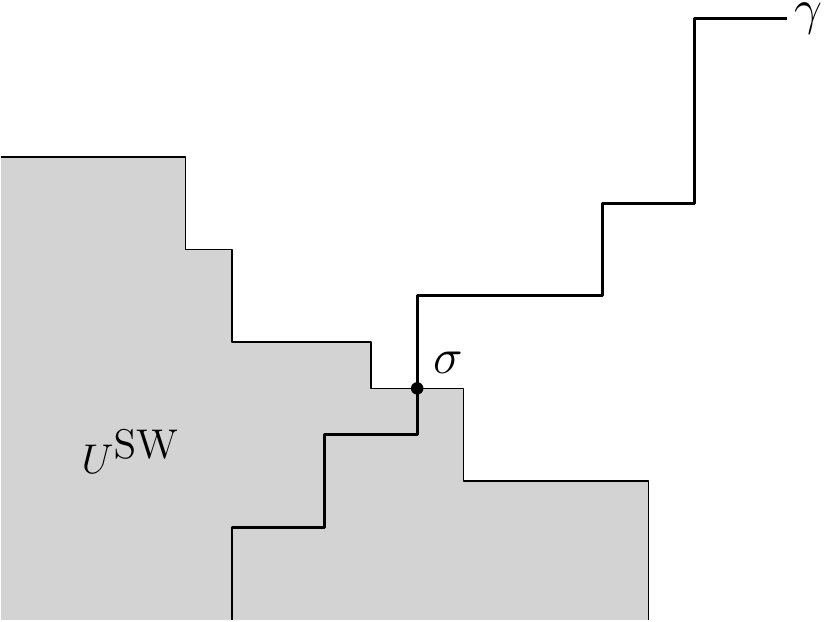}
    \caption{Illustrating the curve $\gamma$ and the point $\sigma$.}
    \label{fig-sigma}
\end{figure}

With this observation, it now suffices to show how to compute an $O(1)$-approximate optimal set cover for $(S \cap U^\text{SE},\mathcal{Q})$ in $\widetilde{O}(\mathsf{opt}^\text{SE})$ time, where $\mathsf{opt}^\text{SE}$ is the optimum of $(S \cap U^\text{SE},\mathcal{Q})$.
The main challenge is to guarantee the running time and approximation ratio simultaneously.
We begin by introducing some notation.
Let $\gamma$ denote the boundary of $U^\text{SE}$, which is an orthogonal staircase curve from bottom-left to top-right.
If $\gamma \cap U^\text{SW} \neq \emptyset$, then $\gamma \cap U^\text{SW}$ is a connected portion of $\gamma$ that contains the bottom-left end of $\gamma$.
Define $\sigma$ as the ``endpoint'' of $\gamma \cap U^\text{SW}$, i.e., the point on $\gamma \cap U^\text{SW}$ that is closest the top-right end of $\gamma$.
See Figure~\ref{fig-sigma} for an illustration.
If $\gamma \cap U^\text{SW} = \emptyset$, we define $\sigma$ as the bottom-left end of $\gamma$ (which is a point whose $y$-coordinate equals to $-\infty$).
For a number $\tilde{y} \in \mathbb{R}$, we define $\phi(\tilde{y})$ as the \textit{leftmost} point in $S \cap U^\text{SE}$ whose $y$-coordinate is greater than $\tilde{y}$; we say $\phi(\tilde{y})$ does not exist if no point in $S \cap U^\text{SE}$ has $y$-coordinate greater than $\tilde{y}$.
For a point $a \in \mathbb{R}^2$ and a collection $\mathcal{P}$ of quadrants, we define $\varPhi_\rightarrow(a,\mathcal{P})$ and $\varPhi_\uparrow(a,\mathcal{P})$ as the rightmost and topmost quadrants in $\mathcal{P}$ that contains $a$, respectively.
For a quadrant $Q$, we denote by $x(Q)$ and $y(Q)$ the $x$- and $y$-coordinates of the vertex of $Q$, respectively.

To get some intuition, let us consider a very simple case, where $\mathcal{Q}$ only consists of southeast quadrants.
In this case, one can compute an optimal set cover for $(S \cap U^\text{SE},\mathcal{Q})$ using a greedy algorithm similar to the 1D interval set cover algorithm: repeatedly pick the leftmost uncovered point in $S \cap U^\text{SE}$ and cover it using the topmost (southeast) quadrant in $\mathcal{Q}$.
Using the notations defined above, we can describe this algorithm as follows.
Set $\mathcal{Q}_\text{ans} \leftarrow \emptyset$ and $\tilde{y} \leftarrow -\infty$ initially, and repeatedly do $a \leftarrow \phi(\tilde{y})$, $Q \leftarrow \varPhi_\uparrow(\sigma,\mathcal{Q}^\text{SE})$, $\mathcal{Q}_\text{ans} \leftarrow \mathcal{Q}_\text{ans} \cup \{Q\}$, $\tilde{y} \leftarrow y(Q)$ until $\phi(\tilde{y})$ does not exist.
Eventually, $\mathcal{Q}_\text{ans}$ is the set cover we want.

Now we try to extend this algorithm to the general case.
However, the situation here becomes much more complicated, since we may have three other types of quadrants in $\mathcal{Q}$, which have to be carefully dealt with in order to guarantee the correctness.
But the intuition remains the same: we still construct the solution in a greedy manner.
The following procedure describes our algorithm.
\begin{enumerate}
    \item $\mathcal{Q}_\text{ans} \leftarrow \emptyset$. $\tilde{y} \leftarrow -\infty$.
    If $\phi(\tilde{y})$ does not exist, then go to Step~6.
    \item $\mathcal{Q}_\text{ans} \leftarrow \{\varPhi_\rightarrow(\sigma,\mathcal{Q}^\text{SW}),\varPhi_\uparrow(\sigma,\mathcal{Q}^\text{SE})\}$. $\tilde{y} \leftarrow y(\varPhi_\uparrow(\sigma,\mathcal{Q}^\text{SE}))$. If $\phi(\tilde{y})$ exists, then $a \leftarrow \phi(\tilde{y})$, else go to Step~6.
    \item If $a \in U^\text{NE}$, then $\mathcal{Q}_\text{ans} \leftarrow \mathcal{Q}_\text{ans} \cup \{\varPhi_\uparrow(a,\mathcal{Q}^\text{NE}),\varPhi_\uparrow(a,\mathcal{Q}^\text{SE})\}$ and go to Step~6.
    \item If $a \in U^\text{NW}$, then $\mathcal{Q}_\text{ans} \leftarrow \mathcal{Q}_\text{ans} \cup \{\varPhi_\rightarrow(a,\mathcal{Q}^\text{NW}),\varPhi_\uparrow(a,\mathcal{Q}^\text{SE})\}$ and $Q \leftarrow \varPhi_\uparrow(v,\mathcal{Q}^\text{SE})$ where $v$ is the vertex of $\varPhi_\rightarrow(a,\mathcal{Q}^\text{NW})$, otherwise $Q \leftarrow \varPhi_\uparrow(a,\mathcal{Q}^\text{SE})$.
    \item $\mathcal{Q}_\text{ans} \leftarrow \mathcal{Q}_\text{ans} \cup \{Q\}$. $\tilde{y} \leftarrow y(Q)$.
    If $\phi(\tilde{y})$ exists, then $a \leftarrow \phi(\tilde{y})$ and go to Step~3.
    \item Output $\mathcal{Q}_\text{ans}$.
\end{enumerate}

\noindent
The following lemma proves the correctness of our algorithm.
\begin{lemma} \label{lem-qans}
$\mathcal{Q}_\textnormal{ans}$ covers all points in $S \cap U^\textnormal{SE}$, and $|\mathcal{Q}_\textnormal{ans}| = O(\mathsf{opt}^\textnormal{SE})$.
\end{lemma}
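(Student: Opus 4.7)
The plan is to prove the two claims of the lemma separately: coverage first, then the size bound.

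For coverage, I would proceed by induction on the iterations of the loop spanning Steps~3--5, maintaining the invariant that immediately after each iteration, every point of $S \cap U^{\text{SE}}$ with $y$-coordinate at most the current value of $\tilde{y}$ is contained in some quadrant already placed into $\mathcal{Q}_{\text{ans}}$. The base case is Step~2: by the definition of $\sigma$, every point $p \in S \cap U^{\text{SE}}$ with $y(p) \leq y(\sigma)$ lies in $U^{\text{SW}}$ and is therefore covered by $\varPhi_\rightarrow(\sigma,\mathcal{Q}^{\text{SW}})$, while the points of $S \cap U^{\text{SE}}$ with $y$-coordinate in $(y(\sigma),\tilde{y}]$ are covered by $\varPhi_\uparrow(\sigma,\mathcal{Q}^{\text{SE}})$, using that $\sigma$ sits on the staircase $\gamma$ bounding $U^{\text{SE}}$. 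For the inductive step, let $a = \phi(\tilde{y})$; because $a$ is the leftmost point of $S \cap U^{\text{SE}}$ above $\tilde{y}$, every as-yet-uncovered point $p$ satisfies $x(p) \geq x(a)$. In Step~3 ($a \in U^{\text{NE}}$), a direct calculation using the coordinates of the chosen quadrants shows that the union of $\varPhi_\uparrow(a,\mathcal{Q}^{\text{NE}})$ and $\varPhi_\uparrow(a,\mathcal{Q}^{\text{SE}})$ contains the entire half-plane $\{(x,y) : x \geq x(a)\}$, so all remaining points of $S \cap U^{\text{SE}}$ are covered and the algorithm can safely terminate. In the NW subcase of Step~4, $\varPhi_\uparrow(a,\mathcal{Q}^{\text{SE}})$ covers the points up to its own top edge, $\varPhi_\rightarrow(a,\mathcal{Q}^{\text{NW}})$ handles the points lying to the left of its vertex $v$, and the new $Q = \varPhi_\uparrow(v,\mathcal{Q}^{\text{SE}})$ handles the remaining points lying to the right of $v$ and below the new $\tilde{y}$. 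The default branch (when $a$ belongs only to $U^{\text{SE}}$) is analogous and simpler, since a single SE quadrant at $a$ together with the leftmost-point invariant suffices.

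For the size bound, Step~2 contributes at most $2$ quadrants and each iteration of Steps~3--5 contributes at most $3$, so it suffices to show that the number of iterations is $O(\mathsf{opt}^{\text{SE}})$. I would set up a charging scheme against a fixed optimal cover $\mathcal{Q}_{\text{opt}}$ of $(S \cap U^{\text{SE}},\mathcal{Q})$: in each iteration the point $a$ must be hit by some $Q^* \in \mathcal{Q}_{\text{opt}}$. The claim is that no $Q^*$ can be charged twice: between two successive iterations with points $a_k$ and $a_{k+1}$, the value of $\tilde{y}$ strictly increases, and because our algorithm always selects the SE (resp.\ NE, NW) quadrant with the topmost (resp.\ topmost, rightmost) vertex containing the current pivot, any quadrant of the optimum that could cover both $a_k$ and $a_{k+1}$ would be dominated in the relevant direction by the quadrant we already picked at step $k$, contradicting that $a_{k+1}$ is uncovered at the start of iteration $k{+}1$.

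The main obstacle I expect is the coverage argument for Step~4 in the NW subcase. Proving that $\varPhi_\uparrow(v,\mathcal{Q}^{\text{SE}})$ is well-defined (i.e., that some SE quadrant actually contains $v$) and that the three selected quadrants together leave no point of $S \cap U^{\text{SE}}$ uncovered below the new $\tilde{y}$ requires careful geometric analysis of the staircase boundary $\gamma$ near $v$, together with use of the leftmost-point invariant to rule out uncovered points in the corner to the left of $v$ and above the horizontal edge of $\varPhi_\rightarrow(a,\mathcal{Q}^{\text{NW}})$. The charging argument for the size bound is also somewhat delicate in this case, since a single iteration places three quadrants but should still be chargeable to only one quadrant of $\mathcal{Q}_{\text{opt}}$ up to a constant factor.
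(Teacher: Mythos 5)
Your overall strategy --- an inductive invariant over the loop iterations for coverage, plus a charging argument showing that the pivots $a_1,\dots,a_k$ must be covered by pairwise distinct quadrants of an optimal solution --- is the same as the paper's. However, there are genuine gaps in the plan as written. The most important one: your invariant is too weak to support your own charging argument. You maintain only the one-sided statement that every point of $S \cap U^{\text{SE}}$ with $y$-coordinate at most the current $\tilde{y}$ is covered, but the charging step then invokes the claim that ``$a_{k+1}$ is uncovered at the start of iteration $k{+}1$.'' That does not follow from the one-sided invariant: $a_{k+1}$ has $y$-coordinate above $\tilde{y}$, and nothing you have proved prevents it from already being covered by a quadrant chosen in an earlier iteration. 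The paper's invariant is two-sided --- points below the threshold are covered \emph{and} every point of $S \cap U^{\text{SE}}$ above the current pivot is \emph{not} covered --- and the second half is exactly what makes each pivot chargeable. (One could instead salvage the charge without uncoveredness, using the strict monotonicity of $\tilde{y}$ together with the extremality of $\varPhi_\uparrow$ and $\varPhi_\rightarrow$, but that is not the argument you describe.)

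Two further problems. Your charging argument silently ignores southwest quadrants: the optimal quadrant hitting a pivot $a_i$ could belong to $\mathcal{Q}^{\text{SW}}$, and since the algorithm never selects a SW quadrant after Step~2, there is no ``quadrant we already picked at step $k$'' available to dominate it. The paper closes this case by proving that every pivot lies strictly above and strictly to the right of $\sigma$, hence outside $U^{\text{SW}}$ entirely --- this uses the two quadrants added in Step~2 and must be argued explicitly. Finally, your base case contains a false claim: a point of $S \cap U^{\text{SE}}$ with $y$-coordinate at most $y(\sigma)$ need not lie in $U^{\text{SW}}$ (take a point far to the right with small $y$-coordinate). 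The correct case split is by $x$-coordinate relative to $\sigma$: points to the left of $\sigma$ are forced below it by the staircase structure of $\gamma$ and land in $\varPhi_\rightarrow(\sigma,\mathcal{Q}^{\text{SW}})$, while points to the right of $\sigma$ with $y$-coordinate at most $\tilde{y}$ land in $\varPhi_\uparrow(\sigma,\mathcal{Q}^{\text{SE}})$. The conclusion of your base case is correct, but the stated justification is not.
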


The remaining task is to show how to perform our algorithm in $\widetilde{O}(\mathsf{opt}^\text{SE})$ time using basic data structures.
It is clear that our algorithm terminates in $O(\mathsf{opt}^\text{SE})$ steps, since we include at least one quadrant to $\mathcal{Q}_\text{ans}$ in each iteration of the loop Step 3--5 and eventually $|\mathcal{Q}_\text{ans}| = O(\mathsf{opt}^\text{SE})$ by Lemma~\ref{lem-qans}.
Thus, it suffices to show that each step can be done in $\widetilde{O}(1)$ time.
In every step of our algorithm, all work can be done in constant time except the tasks of computing the point $\sigma$, testing whether $a \in U^\text{NE}$ and $a \in U^\text{NW}$ for a given point $a$,
computing the quadrants $\varPhi_\rightarrow(a,\mathcal{Q}^\text{SW}),\varPhi_\rightarrow(a,\mathcal{Q}^\text{NW}),\varPhi_\uparrow(a,\mathcal{Q}^\text{SE}),\varPhi_\uparrow(a,\mathcal{Q}^\text{NE})$ for a given point $a$, and computing $\phi(\tilde{y})$ for a given number $\tilde{y}$.
All these tasks except the computation of $\phi(\cdot)$ can be easily done in $\widetilde{O}(1)$ time by storing the quadrants in binary search trees.
To compute $\phi(\cdot)$ in $\widetilde{O}(1)$ time is more difficult, and we achieve this by properly using range trees built on both $S$ and $\mathcal{Q}^\text{SE}$.
The details are presented in Appendix~\ref{appx-quados}.

Using the above algorithm, we can compute $O(1)$-approximate optimal set covers for $(S \cap U^\text{SE},\mathcal{Q})$, $(S \cap U^\text{SW},\mathcal{Q})$, $(S \cap U^\text{NE},\mathcal{Q})$, and $(S \cap U^\text{NW},\mathcal{Q})$.
As argued before, the union of these four set covers, denoted by $\mathcal{Q}^*$, is an $O(1)$-approximate optimal set covers for $(S,\mathcal{Q})$.
\begin{theorem} \label{thm-osquad}
Quadrant set cover admits an $O(1)$-approximate output-sensitive algorithm.
\end{theorem}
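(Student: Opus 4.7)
The plan is to decompose the problem by quadrant type and apply the six-step procedure of the excerpt to each subinstance. After handling the no-solution case as deferred to Appendix~\ref{appx-nosol}, I may assume $(S,\mathcal{Q})$ admits a set cover, so $S = (S \cap U^\text{SE}) \cup (S \cap U^\text{SW}) \cup (S \cap U^\text{NE}) \cup (S \cap U^\text{NW})$. Any set cover for $(S,\mathcal{Q})$ restricts to a feasible cover of each subinstance, so $\mathsf{opt}^\star \leq \mathsf{opt}$ for each type $\star \in \{\text{SE}, \text{SW}, \text{NE}, \text{NW}\}$. Hence, if I produce an $O(1)$-approximate set cover of $(S \cap U^\star, \mathcal{Q})$ in $\widetilde{O}(\mathsf{opt}^\star)$ time for each $\star$, their union is an $O(1)$-approximate set cover for $(S,\mathcal{Q})$ in $\widetilde{O}(\mathsf{opt})$ time.

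By symmetry I focus on the SE case and run the procedure described in the excerpt. Lemma~\ref{lem-qans} already supplies correctness and the bound $|\mathcal{Q}_\text{ans}| = O(\mathsf{opt}^\text{SE})$, so the main loop (Steps~3--5) terminates after $O(\mathsf{opt}^\text{SE})$ iterations since each iteration contributes at least one quadrant to $\mathcal{Q}_\text{ans}$. To obtain $\widetilde{O}(\mathsf{opt}^\text{SE})$ total running time, it remains to support every primitive of the procedure in $\widetilde{O}(1)$ time using basic data structures. The simple primitives are the extremal quadrant lookups $\varPhi_\rightarrow(a,\cdot)$ and $\varPhi_\uparrow(a,\cdot)$, the membership tests $a \in U^\text{NE}$ and $a \in U^\text{NW}$, and the computation of $\sigma$; each reduces to searching along a sorted coordinate order on the vertex set of the appropriate $\mathcal{Q}^\star$ or along the staircase envelope of $U^\text{SW}$, and all of these are supported by balanced BSTs keyed on coordinates that are dynamizable in $\widetilde{O}(1)$ per update.

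The main obstacle is implementing $\phi(\tilde{y})$, the leftmost point of $S \cap U^\text{SE}$ with $y$-coordinate exceeding $\tilde{y}$. The difficulty is that membership of $p \in S$ in $U^\text{SE}$ is not intrinsic to $p$: it holds iff some vertex $v$ of a quadrant in $\mathcal{Q}^\text{SE}$ satisfies $x(v) \leq x(p)$ and $y(v) \geq y(p)$, so the query entangles $S$ with $\mathcal{Q}^\text{SE}$. My plan is to maintain a two-level range tree on $S$, keyed by $x$ at the outer level and by $y$ at the inner level, augmented so that each canonical node records, for its associated point set, a witness indicating whether any point is SE-dominated by some $\mathcal{Q}^\text{SE}$-vertex; equivalently, the test reduces to comparing the point against the upper-left staircase of the $\mathcal{Q}^\text{SE}$-vertices, which is maintained separately in $\widetilde{O}(1)$ per update. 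To answer $\phi(\tilde{y})$, I descend the outer tree left-to-right and, on each canonical $x$-slab, use the secondary tree to find a point with $y > \tilde{y}$ certified by the staircase, returning the leftmost such point in $\widetilde{O}(1)$ time. With this structure (deferred in detail to Appendix~\ref{appx-quados}) together with the simple primitives above, each of the four type-specific runs completes in $\widetilde{O}(\mathsf{opt}^\star)$ time, and unioning the outputs finishes the proof.
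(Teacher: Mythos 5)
Your overall plan coincides with the paper's: split $S$ by which of $U^\text{SE},U^\text{SW},U^\text{NE},U^\text{NW}$ covers each point, observe $\mathsf{opt}^\star\le\mathsf{opt}$ for each type, run the greedy procedure whose correctness and $O(\mathsf{opt}^\text{SE})$ size bound are Lemma~\ref{lem-qans}, and reduce the running time to supporting each primitive in $\widetilde{O}(1)$ time with basic data structures. The easy primitives ($\sigma$, the membership tests, the $\varPhi$ lookups) are handled essentially as in the paper.

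The gap is in your implementation of $\phi(\tilde{y})$. You propose a two-level range tree on $S$ alone (outer level by $x$, inner by $y$), with each canonical node storing a ``witness'' for whether some point in its subtree is dominated by a vertex of $\mathcal{Q}^\text{SE}$, the staircase being ``maintained separately.'' This does not work as stated, for two reasons. First, the query has a runtime parameter $\tilde{y}$: you need the leftmost $p\in S$ with $y(p)>\tilde{y}$ \emph{and} $y(p)\le \mathsf{ht}_{x(p)}(U^\text{SE})$. A stored per-node bit cannot encode the constraint $y(p)>\tilde{y}$, so at query time each outer-tree node requires an emptiness test for $S\cap(\text{$x$-slab})\cap\{y>\tilde{y}\}\cap U^\text{SE}$; over a wide slab the staircase has many steps, so this is a bichromatic dominance-pair query between $S$ and $V(\mathcal{Q}^\text{SE})$, not a test you get in $\widetilde{O}(1)$ from a secondary $y$-tree on $S$ plus a separately stored staircase. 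Second, even a $\tilde{y}$-free witness is problematic to maintain: inserting or deleting one quadrant changes the covered/uncovered status of up to $\Theta(n)$ points of $S$, and in an $x$-keyed tree over $S$ there is no reason the affected witnesses are confined to $\widetilde{O}(1)$ nodes. The paper avoids both issues by building a single $y$-keyed range tree over $S\cup V(\mathcal{Q}^\text{SE})$, so that a quadrant update touches only $O(\log n)$ canonical subsets; each node $\mathbf{u}$ stores the leftmost point of $S(\mathbf{u})$ covered by the quadrants whose vertices lie in the \emph{same} canonical subset, and the cross-node interactions collapse to a single quadrant $L(\mathbf{u}_j)$ (the leftmost one at node $\mathbf{u}_j$), so each of the $O(\log^2 n)$ cross terms at query time becomes one 3-sided range-minimum query on $S$. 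You would need to replace your structure with something of this flavor (or otherwise justify $\widetilde{O}(1)$ update and query) for the time bound to hold.
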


\subsubsection{Putting everything together}
With the bootstrapping theorem in hand, we are now able to design our dynamic quadrant set cover data structure.
Again, the starting point is a ``trivial'' data structure which uses the output-sensitive algorithm of Theorem~\ref{thm-osquad} to re-compute an optimal quadrant set cover after each update.
Clearly, the update time of this data structure is $\widetilde{O}(n)$ and the construction time is $\widetilde{O}(n_0)$.
Let $\mu = O(1)$ be the approximation ratio of the output-sensitive algorithm.
The trivial data structure implies the existence of a $(\mu+\varepsilon)$-approximate dynamic quadrant set cover data structure with $\widetilde{O}(n^{\alpha_0}/\varepsilon^{1-\alpha_0})$ amortized update time for $\alpha_0 = 1$ and $\widetilde{O}(n_0)$ construction time.
Define $\alpha_i = 2\alpha_{i-1}/(1+2\alpha_{i-1})$ for $i \geq 1$.
By applying Theorem~\ref{thm-bootquad} $i$ times for a constant $i \geq 1$, we see the existence of a $(\mu+\varepsilon)$-approximate dynamic quadrant set cover data structure with $\widetilde{O}(n^{\alpha_i}/\varepsilon^{1-\alpha_i})$ amortized update time and $\widetilde{O}(n_0)$ construction time.
One can easily verify that $\alpha_i = 2^i/(2^{i+1}-1)$ for all $i \geq 0$.
Therefore, for any constant $\alpha>0$, we have an index $i \geq 0$ satisfying $\alpha_i < 1/2 + \alpha$ and hence $\widetilde{O}(n^{\alpha_i}/\varepsilon^{1-\alpha_i}) = O(n^{1/2 + \alpha}/\varepsilon)$.
Setting $\varepsilon$ to be any constant, we finally conclude the following.
\begin{theorem} \label{thm-QSC}
For any constant $\alpha>0$, there exists an $O(1)$-approximate dynamic quadrant set cover data structure with $O(n^{1/2+\alpha})$ amortized update time and $\widetilde{O}(n_0)$ construction time.
\end{theorem}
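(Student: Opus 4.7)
The plan is to iterate the bootstrapping result Theorem~\ref{thm-bootquad} starting from a trivial baseline data structure, and then choose the number of iterations based on the target exponent $1/2+\alpha$. The base case is immediate: by Theorem~\ref{thm-osquad}, there is a $\mu$-approximate output-sensitive algorithm for quadrant set cover for some absolute constant $\mu=O(1)$, so rebuilding from scratch after every update gives a $(\mu+\eps)$-approximate dynamic data structure with $\widetilde{O}(n)$ amortized update time and $\widetilde{O}(n_0)$ construction time. Writing this as $\widetilde{O}(n^{\alpha_0}/\eps^{1-\alpha_0})$ with $\alpha_0=1$ puts it into exactly the input form required by the bootstrapping theorem.

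Next, I would apply Theorem~\ref{thm-bootquad} repeatedly, defining the recursion $\alpha_i = 2\alpha_{i-1}/(1+2\alpha_{i-1})$. After $i$ applications (for any constant $i$, so that $\widetilde{O}(\cdot)$ absorbs the accumulated polylog overhead), we obtain a $(\mu+\eps)$-approximate dynamic quadrant set cover data structure with amortized update time $\widetilde{O}(n^{\alpha_i}/\eps^{1-\alpha_i})$ and $\widetilde{O}(n_0)$ construction time. The crucial computation is to solve the recursion explicitly; I would verify by induction that $\alpha_i = 2^i/(2^{i+1}-1)$ for all $i\geq 0$. Indeed $\alpha_0 = 1 = 2^0/(2^1-1)$, and if $\alpha_{i-1} = 2^{i-1}/(2^i-1)$, then
\[
\alpha_i \;=\; \frac{2\cdot 2^{i-1}/(2^i-1)}{1 + 2\cdot 2^{i-1}/(2^i-1)} \;=\; \frac{2^i}{(2^i-1)+2^i} \;=\; \frac{2^i}{2^{i+1}-1},
\]
as claimed. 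In particular $\alpha_i \to 1/2$ from above as $i \to \infty$.

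Given a target constant $\alpha>0$, I would then pick the constant $i$ so that $\alpha_i < 1/2+\alpha$; the sequence $\alpha_i$ decreases monotonically with limit $1/2$, so such an $i$ exists. Fixing $\eps$ to be any positive constant, the factor $\eps^{1-\alpha_i}$ disappears into the hidden constant, and the amortized update time simplifies to $O(n^{1/2+\alpha})$ (the polylog factors from $\widetilde{O}$ are absorbed since $n^{\alpha-(\alpha_i-1/2)}$ dominates any polylog when $\alpha_i<1/2+\alpha$). The construction time of the final data structure is $\widetilde{O}(n_0)$ since each application of Theorem~\ref{thm-bootquad} preserves the $\widetilde{O}(n_0)$ construction bound. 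The approximation ratio is $\mu+\eps=O(1)$.

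The only delicate step is the bookkeeping at the last stage: we must make sure that after absorbing $\eps$ into the constant, the resulting bound still fits inside $O(n^{1/2+\alpha})$ rather than $\widetilde{O}(n^{1/2+\alpha})$. Since $i$ is a fixed constant (depending only on $\alpha$), we have $\alpha_i = 1/2 + \beta$ for some $\beta<\alpha$, and the polylogarithmic overhead is beaten by the slack $n^{\alpha-\beta}$; hence the $\widetilde{O}$ can be replaced by $O$. This completes the reduction to the stated bound, and Theorem~\ref{thm-QSC} follows.
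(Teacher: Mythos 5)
Your proposal is correct and follows essentially the same route as the paper: start from the trivial $\widetilde{O}(n)$-update data structure given by the output-sensitive algorithm of Theorem~\ref{thm-osquad}, iterate Theorem~\ref{thm-bootquad} a constant number of times using the recursion $\alpha_i = 2\alpha_{i-1}/(1+2\alpha_{i-1})$ with closed form $\alpha_i = 2^i/(2^{i+1}-1) \to 1/2$, and then fix $\eps$ to a constant and choose $i$ so that $\alpha_i < 1/2+\alpha$. Your extra remark about the polylog factors being absorbed by the slack $n^{\alpha-\beta}$ is a fine (and correct) bit of bookkeeping that the paper leaves implicit.
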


\noindent
By the reduction of Lemma~\ref{lem-sqreduction}, we have the following corollary.
\begin{corollary}
For any constant $\alpha>0$, there exist $O(1)$-approximate dynamic unit-square set cover, dynamic unit-square hitting set, and dynamic quadrant hitting set data structures with $O(n^{1/2+\alpha})$ amortized update time and $\widetilde{O}(n_0)$ construction time.
\end{corollary}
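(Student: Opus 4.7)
The plan is to derive this corollary as a direct consequence of Theorem~\ref{thm-QSC} together with the reduction provided by Lemma~\ref{lem-sqreduction}. Essentially no new ideas are required beyond plugging in the bounds.

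More concretely, I would first observe that Theorem~\ref{thm-QSC} gives, for any constant $\alpha > 0$, an $O(1)$-approximate dynamic quadrant set cover data structure with amortized update time $f(n) = O(n^{1/2+\alpha})$ and construction time $\widetilde{O}(n_0)$. This $f$ is an increasing (polynomial) function of $n$, so it satisfies the hypothesis of Lemma~\ref{lem-sqreduction} verbatim. Invoking that lemma, I obtain $O(1)$-approximate data structures for dynamic unit-square set cover, dynamic unit-square hitting set, and dynamic quadrant hitting set, each with amortized update time $\widetilde{O}(f(n)) = \widetilde{O}(n^{1/2+\alpha})$ and construction time $\widetilde{O}(n_0)$.

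The only cosmetic point to address is that the bound in the corollary is stated as $O(n^{1/2+\alpha})$ without the $\widetilde{O}$ hiding logarithmic factors. This is handled by the standard trick of absorbing the polylogarithmic factors into the polynomial: since $\alpha > 0$ is an arbitrarily small positive constant and we are free to choose it, we apply the previous step with some $\alpha' \in (0,\alpha)$ in place of $\alpha$, so that $\widetilde{O}(n^{1/2+\alpha'}) = O(n^{1/2+\alpha})$. The construction time remains $\widetilde{O}(n_0)$ as claimed. There is no real obstacle here; the substantive work was already done in proving Theorem~\ref{thm-QSC} (the bootstrapping for quadrant set cover) and Lemma~\ref{lem-sqreduction} (the reductions between quadrant/unit-square set cover and hitting set).
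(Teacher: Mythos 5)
Your proposal is correct and follows exactly the paper's intended derivation: the paper states this corollary immediately after Theorem~\ref{thm-QSC} with the one-line justification that it follows by the reduction of Lemma~\ref{lem-sqreduction}. Your extra remark about absorbing the polylogarithmic overhead by shrinking $\alpha$ is a reasonable bookkeeping step that the paper leaves implicit.
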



\section{Second framework: Local modification} \label{sec-localmod}

In this section, we give our second framework for dynamic geometric set cover and hitting set, which results in poly-logarithmic update time data structures for dynamic interval hitting set, as well as dynamic quadrant and unit-square set cover (resp., hitting set) in the partially dynamic setting.
This framework applies to \textit{stable} dynamic geometric set cover and hitting set problems, in which each operation cannot change the optimum of the problem instance significantly.
One can easily see that any dynamic set cover (resp., hitting set) problem in the partially dynamic setting is stable.
Interestingly, we will see that the dynamic interval hitting set problem, even in the fully dynamic setting, is stable.

The basic idea behind this framework is quite simple, and we call it \textit{local modification}.
Namely, at the beginning, we compute a good solution $S^*$ for the problem instance using an output-sensitive algorithm, and after each operation, we slightly modify $S^*$, by adding/removing a constant number of elements, to guarantee that $S^*$ is a feasible solution for the current instance.
After processing many operations, we re-compute a new $S^*$ and repeat this procedure.
The stability of the problem can guarantee that $S^*$ is always a good approximation for the optimal solution.

Next, we formally define the notion of \textit{stability}.
The \textit{quasi-optimum} of a set cover instance $(S,\mathcal{R})$ is the size of an optimal set cover for $(S',\mathcal{R})$, where $S' \subseteq S$ consists of the points that can be covered by $\mathcal{R}$.
The quasi-optimum of a hitting set instance is defined in a similar way.
For a dynamic set cover (resp., hitting set) instance $\varPi$, we use $\mathsf{opt}_i(\varPi)$ to denote the quasi-optimum of $\varPi$ at the time $i$ (i.e., after the $i$-th operation).
\begin{definition}
A dynamic set cover (resp., hitting set) instance $\varPi$ is \textbf{stable} if $|\mathsf{opt}_i(\varPi) - \mathsf{opt}_{i+1}(\varPi)| \leq 1$ for all $i \geq 0$.
A dynamic set cover (resp., hitting set) problem is \textbf{stable} if any instance of the problem is stable.
\end{definition}

\begin{lemma} \label{lem-stab}
Any dynamic set cover (resp., hitting set) problem in the partially dynamic setting is stable.
The dynamic interval hitting set problem is stable.
\end{lemma}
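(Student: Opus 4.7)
My plan is to verify each of the two assertions by showing that a single update changes the quasi-optimum by at most one. For the partially dynamic set cover setting, only point updates occur and the range family $\mathcal{R}$ is fixed. I would split on whether the affected point $p$ is coverable by $\mathcal{R}$: if it is not, then $S'$ (and hence the quasi-optimum) is unchanged. If it is, then $S'$ changes by exactly one element, and two very short arguments close the gap. Monotonicity (any cover of a larger $S'$ covers any subset) gives one inequality, while ``add one more range to cover $p$'' gives the other: for an insertion, $H_{\mathrm{old}} \sqcup \{R_p\}$ with any $R_p \in \mathcal{R}$ containing $p$ is a feasible cover for the new $S'$, yielding $\mathsf{opt}_{i+1} \leq \mathsf{opt}_i + 1$, and deletion is symmetric. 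The partially dynamic hitting set case is identical with the roles of points and ranges swapped.

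For the dynamic interval hitting set problem in the fully dynamic setting, range operations reduce directly to the partially dynamic hitting set argument above (applied for that single step), so the only new case is point updates. Consider an insertion of a point $p$ into $S$: the hittable family $\mathcal{R}'$ can only grow, gaining exactly those intervals that contain $p$ and had no hitter in $S_{\mathrm{old}}$. The bound $\mathsf{opt}_{i+1} \leq \mathsf{opt}_i + 1$ follows immediately from $H_{\mathrm{old}} \cup \{p\}$. The reverse direction $\mathsf{opt}_i \leq \mathsf{opt}_{i+1} + 1$ is where the real work lives: given an optimal new hitting set $H_{\mathrm{new}}$, if $p \notin H_{\mathrm{new}}$ then $H_{\mathrm{new}} \subseteq S_{\mathrm{old}}$ already hits $\mathcal{R}'_{\mathrm{old}}$; otherwise I must replace $p$ by a bounded number of points in $S_{\mathrm{old}}$ without losing any interval of $\mathcal{R}'_{\mathrm{old}}$.

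This replacement step is the main obstacle, and it is handled by a one-dimensional structural claim: \emph{for any family $\mathcal{J}$ of intervals all containing a common point $p$ and any point set $T \subseteq \mathbb{R}$ such that every interval in $\mathcal{J}$ meets $T$, there exist at most two elements of $T$ that together hit every interval in $\mathcal{J}$}. The witnesses are $q_L := \max(T \cap (-\infty, p])$ and $q_R := \min(T \cap [p, \infty))$ (whenever these are defined): a hitter $t \in T$ of $[a,b] \in \mathcal{J}$ with $t \leq p$ satisfies $a \leq t \leq q_L \leq p \leq b$, so $q_L$ hits $[a,b]$, and the case $t \geq p$ is handled symmetrically by $q_R$. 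Applying this with $T = S_{\mathrm{old}}$ and $\mathcal{J}$ equal to the intervals of $\mathcal{R}'_{\mathrm{old}}$ that $H_{\mathrm{new}}$ hits only through $p$ shows that $(H_{\mathrm{new}} \setminus \{p\}) \cup \{q_L, q_R\}$ is a hitting set for $\mathcal{R}'_{\mathrm{old}}$, proving $\mathsf{opt}_i \leq \mathsf{opt}_{i+1} + 1$. A symmetric use of the same lemma handles point deletion: the intervals of $\mathcal{R}'_{\mathrm{old}}$ uniquely hit by the deleted $p$ in $H_{\mathrm{old}}$ all contain $p$, so one can either hit them in $S_{\mathrm{new}}$ via the two-point replacement (for $\mathsf{opt}_{i+1} \leq \mathsf{opt}_i + 1$) or simply reinsert $p$ into $H_{\mathrm{new}}$ to recover the deleted ranges (for $\mathsf{opt}_i \leq \mathsf{opt}_{i+1} + 1$). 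Because the structural lemma is the sole place where the geometry of the range class is exploited, it is precisely what distinguishes intervals and allows the interval hitting set problem to remain stable even in the fully dynamic regime.
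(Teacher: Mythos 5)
Your proposal is correct and follows essentially the same route as the paper: the partially dynamic cases are handled by the same ``unchanged or add one element'' argument, and your two-point structural claim with $q_L,q_R$ is exactly the paper's replacement of the inserted/deleted point $a$ by its nearest neighbors $a^-,a^+$ in $S$. The only difference is presentational — you isolate the replacement step as a standalone lemma — but the underlying argument is identical.
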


\subsection{Dynamic interval hitting set}
We show how to use the idea of local modification to obtain a $(1+\varepsilon)$-approximate dynamic interval hitting set data structure $\mathcal{D}$ with $\widetilde{O}(1/\varepsilon)$ amortized update time.

Similarly to interval set cover, the interval hitting set problem can also be solved using a simple greedy algorithm, which can be performed in $\widetilde{O}(\mathsf{opt})$ time if we store (properly) the points and intervals in some basic data structure (e.g., binary search trees).
We omit the proof of the following lemma, as it is almost the same as the proof of Lemma~\ref{lem-osint}.
\begin{lemma}
    The interval hitting set problem yields an exact output-sensitive algorithm.
\end{lemma}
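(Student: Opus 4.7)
The plan is to mirror the treatment of Lemma~\ref{lem-osint} for interval set cover, replacing the ``cover the leftmost uncovered point by the rightmost-ending interval'' greedy with the classical greedy for interval stabbing: \emph{take the unhit interval with the smallest right endpoint, and hit it using the rightmost point of $S$ that lies inside it}.

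First I would state and justify the greedy. Sort the intervals in $\mathcal{I}$ by their right endpoint. Process them in this order, maintaining the current hitting set $S_{\mathrm{ans}}$ initialized to $\emptyset$. When we encounter an interval $I$ not yet hit by $S_{\mathrm{ans}}$, add to $S_{\mathrm{ans}}$ the rightmost point of $S$ that lies in $I$; if no such point exists, then $(S,\mathcal{I})$ has no hitting set and we report that. Correctness is the standard exchange argument: take any optimal hitting set and consider the leftmost-ending interval $I$ it must hit; replacing the point it uses by the rightmost point of $S\cap I$ still hits $I$ and hits a superset of subsequent intervals, so the transformed solution is still optimal; iterating gives a solution identical in structure to the greedy output, so $|S_{\mathrm{ans}}|$ matches the optimum.

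Next I would describe the basic data structures that make the greedy output-sensitive. Store $S$ in a balanced binary search tree keyed by coordinate, so that the rightmost point of $S$ in a query interval $[a,b]$ can be returned in $O(\log n)$ time by a predecessor query. Store $\mathcal{I}$ in a balanced BST keyed by right endpoint, augmented so that every subtree records the maximum left endpoint of intervals it contains. The critical primitive is: \emph{given the point $p$ just added and a pointer to the interval $I_i$ just hit, advance to the first interval $I_j$ with $j>i$ whose left endpoint exceeds $p$}. This is exactly the interval with smallest right endpoint whose left endpoint is $>p$, found by a single augmented-tree descent in $O(\log n)$ time: at each node, go left if the left subtree contains an interval with left endpoint $>p$ (by consulting the stored max), otherwise right. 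Both data structures are basic in the sense of Section~\ref{sec-notation} since they are standard balanced BSTs that can be built in $\widetilde O(n)$ time and updated in $\widetilde O(1)$ time per insertion/deletion.

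Finally I would bound the running time. Each iteration of the greedy adds exactly one point to $S_{\mathrm{ans}}$ and uses $O(\log n)$ time for the predecessor query in $S$ and the augmented-tree descent in $\mathcal{I}$; the number of iterations is $\mathsf{opt}$, so the total cost is $O(\mathsf{opt}\log n)=\widetilde O(\mathsf{opt})$. The main subtlety — and what I would view as the only real technical step — is the skipping operation: naively one would touch every interval hit by the freshly chosen point $p$, which could be $\Omega(n)$ per iteration and destroy output-sensitivity; the augmented BST on right endpoints with max-left-endpoint annotation is exactly what lets us skip the entire block of newly hit intervals in logarithmic time, giving the claimed $\widetilde O(\mathsf{opt})$ bound.
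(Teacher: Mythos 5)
Your proof is correct and follows essentially the same approach the paper intends: the paper omits this proof, stating only that it is ``almost the same as the proof of Lemma~\ref{lem-osint},'' and your argument is exactly the natural dual of that proof. You mirror the set-cover greedy (process intervals by increasing right endpoint, hit each fresh unhit interval with the rightmost point of $S$ inside it) and mirror the two basic BST structures (one on $S$ for the rightmost point in a query interval, one on $\mathcal{I}$ with a max-left-endpoint augmentation to jump to the next unhit interval in $\widetilde O(1)$ time), which is precisely the role played by Lemmas~\ref{lem-aaa} and~\ref{lem-bbb} in the set-cover case.
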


Let $(S,\mathcal{I})$ be a dynamic interval hitting set instance.
We denote by $n$ (resp., $n_0$) the size of the current (resp., initial) $(S,\mathcal{I})$, and by $\mathsf{opt}$ the optimum of the current $(S,\mathcal{I})$.

\paragraph{The data structure.}
Our data structure $\mathcal{D}$ basically consists of three parts.
The first part is the basic data structure $\mathcal{A}$ built on $(S,\mathcal{I})$ required for the output-sensitive algorithm.
The second part is the following dynamic data structure $\mathcal{B}$ which can tell us whether the current instance has a hitting set.
\begin{lemma} \label{lem-dsexhs}
    There exists a dynamic data structure $\mathcal{B}$ built on $(S,\mathcal{I})$ with $\widetilde{O}(1)$ update time and $\widetilde{O}(n_0)$ construction time that can indicate whether the current $(S,\mathcal{I})$ has a hitting set or not.
\end{lemma}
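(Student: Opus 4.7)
The plan is to maintain a single counter $c$ equal to the number of \emph{bad} intervals in $\mathcal{I}$, where an interval $I\in\mathcal{I}$ is called bad iff $I\cap S=\emptyset$. Since $(S,\mathcal{I})$ admits a hitting set iff every interval is hit, i.e.\ iff $c=0$, the data structure $\mathcal{B}$ simply reports ``yes'' exactly when $c=0$, which takes $O(1)$ time per query.

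The geometric heart of the construction is the observation that $I=[\ell,r]\in\mathcal{I}$ is bad iff $\ell$ and $r$ both lie in the same maximal gap of $\mathbb{R}\setminus S$ (including the two unbounded ones). As an immediate consequence, when a point $p$ is inserted into (resp., deleted from) $S$ and $(q,q')$ denotes the maximal gap of $\mathbb{R}\setminus S$ containing $p$ just before insertion (resp., just after deletion), a short endpoint case analysis shows that the set of intervals whose bad/good status flips is exactly
\[
    \bigl\{[\ell,r]\in\mathcal{I} \,:\, q < \ell \leq p \leq r < q'\bigr\}.
\]
Under the representation $[\ell,r]\mapsto(\ell,r)\in\mathbb{R}^2$, this is precisely the set of points of $\mathcal{I}$ that fall into the axis-aligned rectangle $(q,p]\times[p,q')$.

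Guided by this, I would maintain three auxiliary structures: (i) a balanced BST $T_S$ on $S$ supporting $O(\log n)$-time predecessor/successor queries; (ii) a dynamic two-dimensional orthogonal range-counting structure $T_{\mathcal{I}}$ on the planar point set $\{(\ell,r):[\ell,r]\in\mathcal{I}\}$, e.g., a dynamic range tree with $O(\log^2 n)$ update/query time; and (iii) the counter $c$. An interval insertion or deletion updates $T_{\mathcal{I}}$ and decides whether the affected interval is bad via a single successor query in $T_S$ (checking whether $\mathrm{succ}_S(\ell)>r$), incrementing or decrementing $c$ accordingly. A point insertion (resp., deletion) at $p$ first looks up $q:=\mathrm{pred}_S(p)$ and $q':=\mathrm{succ}_S(p)$ in $T_S$ (treating $\pm\infty$ in the usual way), issues one range-count query on $T_{\mathcal{I}}$ over $(q,p]\times[p,q')$ to obtain the number $k$ of flipping intervals, and adjusts $c$ by $-k$ or $+k$ respectively before updating $T_S$.

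Every update thus runs in $O(\log^2 n)=\widetilde{O}(1)$ amortized time, and initializing $T_S$, $T_{\mathcal{I}}$, and $c$ (by testing each input interval once against $T_S$) takes $\widetilde{O}(n_0)$ time. The only delicate step I expect is verifying the rectangular characterization of the flipping set by the endpoint case analysis sketched above; the remaining ingredients are just standard dynamic $1$D predecessor search and $2$D orthogonal range counting structures.
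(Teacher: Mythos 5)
Your proof is correct but takes a genuinely different route from the paper's. The paper maintains one bit per gap of $\mathbb{R}\setminus S$: a balanced tree $\mathcal{T}_2$ over $S$ stores, at the leaf for each point $a$, a flag $\sigma$ indicating whether some interval of $\mathcal{I}$ lies strictly inside the gap $(a,a^+)$, together with a counter $\sigma^*$ of set flags, and feasibility corresponds to $\sigma^*=0$. Each update affects only $O(1)$ gap-flags, and each flag is recomputed by a single one-dimensional query on a BST over $\mathcal{I}$ keyed by left endpoint and augmented with the minimum right endpoint in each subtree (a successor-and-range-min query, which is really a $2$D \emph{emptiness} test in disguise, collapsed to $1$D because $\ell\leq r$). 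Your variant instead maintains the count $c$ of bad intervals directly and pays for the larger set of affected objects on a point update by issuing a genuine $2$D orthogonal range-\emph{counting} query over $\{(\ell,r)\}$ on the rectangle $(q,p]\times[p,q')$; the characterization of the flip set is correct and is the crux of the argument. Both approaches give $\widetilde{O}(1)$ update and $\widetilde{O}(n_0)$ construction. The trade-off is that the paper's invariant needs only emptiness (so lighter, purely $1$D machinery and no dynamic $2$D range tree), whereas your bad-interval counter is conceptually more direct and the rectangular flip lemma is a clean reusable observation.
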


\noindent
The third part consists of the following two basic data structures $\mathcal{C}_1$ and $\mathcal{C}_2$.
\begin{lemma} \label{lem-supleftright}
One can store $S$ in a basic data structure $\mathcal{C}_1$ such that given a point $a$, the rightmost (resp., leftmost) point in $S$ to the left (resp., right) of $a$ can be reported in $\widetilde{O}(1)$ time (if they exist).
\end{lemma}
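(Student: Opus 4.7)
The plan is to realize $\mathcal{C}_1$ as a standard balanced binary search tree (for instance, a red-black tree or a weight-balanced tree) storing the points of $S$ in sorted order by their coordinate on the real line. Construction takes $O(|S| \log |S|) = \widetilde{O}(n_0)$ time by sorting and building the tree bottom-up, and each insertion or deletion of a point can be handled in $O(\log |S|) = \widetilde{O}(1)$ time by a standard search followed by rebalancing, which matches the requirements for a \emph{basic} data structure as defined in Section~\ref{sec-notation}.

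Given a query point $a$, I would answer the ``rightmost point of $S$ to the left of $a$'' query by performing a predecessor search for $a$ in the tree: descend from the root, at each node compare the stored coordinate with $a$ and branch appropriately, while keeping track of the largest coordinate strictly less than $a$ encountered along the way. The successor query (leftmost point of $S$ to the right of $a$) is symmetric. Both queries finish in $O(\log |S|) = \widetilde{O}(1)$ time, and the algorithm naturally reports nonexistence whenever all coordinates in $S$ lie weakly on one side of $a$.

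Since the construction, update, and query bounds all meet the specifications, this proves the lemma. There is essentially no obstacle here: the only minor care needed is to make sure the comparison is strict (so that a point coinciding with $a$ is not returned as being strictly to its left or right), which is a trivial modification of the standard predecessor/successor routine.
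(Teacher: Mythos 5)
Your proposal is correct and matches the paper's proof exactly: both store $S$ in a balanced binary search tree and answer the queries by predecessor/successor search, with the same $\widetilde{O}(n_0)$ construction, $\widetilde{O}(1)$ update, and $\widetilde{O}(1)$ query bounds. You merely spell out the standard details the paper leaves implicit.
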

\begin{lemma} \label{lem-suphitint}
One can store $S$ in a basic data structure $\mathcal{C}_2$ such that given an interval $I$, a point $a \in S$ that is contained in $I$ can be reported in $\widetilde{O}(1)$ time (if it exists).
\end{lemma}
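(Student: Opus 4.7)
The plan is to use the most natural 1D data structure: a balanced binary search tree (e.g., a red--black tree) storing the points of $S$ sorted by coordinate. Concretely, $\mathcal{C}_2$ maintains $S$ keyed by the real-line position of each point, supporting the standard successor operation \textsc{Succ}$(x)$ that returns the smallest element of $S$ whose coordinate is at least $x$.

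Given a query interval $I = [\ell, r]$, I would answer as follows. First compute $a := \textsc{Succ}(\ell)$ in $\widetilde{O}(1)$ time using the BBST. If $a$ does not exist, or if $a > r$, then no point of $S$ lies in $I$ and we report failure; otherwise $\ell \le a \le r$ shows $a \in S \cap I$ and we output $a$. Correctness is immediate: if some point of $S$ were in $[\ell,r]$, then in particular $[\ell,r]$ contains a point $\ge \ell$, and the smallest such point, namely $\textsc{Succ}(\ell)$, must also lie in $[\ell,r]$ since it is no larger than any other point $\ge \ell$ in $S$.

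For the resource bounds: sorting the $n_0$ initial points and building a balanced BBST takes $\widetilde{O}(n_0)$ time, and insertion or deletion of a point in a BBST takes $O(\log n) = \widetilde{O}(1)$ time, so $\mathcal{C}_2$ is basic in the sense of Section~\ref{sec-notation}. The query time is $O(\log n) = \widetilde{O}(1)$ as required.

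There is really no significant obstacle here: the lemma is a clean 1D stabbing-witness query and reduces to a standard predecessor/successor query on a balanced search tree. The only thing one must be slightly careful about is handling the boundary cases when $\textsc{Succ}(\ell)$ is undefined (no point of $S$ lies to the right of $\ell$) and when the returned point falls strictly to the right of $r$; both are checked in $O(1)$ additional time after the BBST lookup.
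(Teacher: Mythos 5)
Your proof is correct and uses the same approach as the paper: store $S$ in a balanced binary search tree and answer the query with a successor search at the left endpoint of $I$. The paper's proof is the same argument, stated more tersely.
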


\paragraph{Maintaining a solution.}
Let $(S_0,\mathcal{I}_0)$ be the initial $(S,\mathcal{I})$, and define $\mathcal{I}_0' \subseteq \mathcal{I}_0$ as the sub-collection consisting of all intervals that are hit by $S_0$.
In the construction of $\mathcal{D}$, after building the data structures $\mathcal{A}$ and $\mathcal{B}$, we compute an optimal hitting set $S^*$ for $(S_0,\mathcal{I}_0')$.
Set $\mathsf{cnt} = 0$ and $\mathsf{opt}^\sim = |S^*|$ initially.
The data structure $\mathcal{D}$ handles the operations as follows.
After each operation, $\mathcal{D}$ first increment $\mathsf{cnt}$ by 1 and query the data structure $\mathcal{B}$ to see whether the current $(S,\mathcal{I})$ has a hitting set.
Whenever $\mathsf{cnt} \geq \varepsilon \cdot \mathsf{opt}^\sim /(2+\varepsilon)$ and the current $(S,\mathcal{I})$ has a hitting set, we re-compute an optimal hitting set $S^*$ for the current $(S,\mathcal{I})$ using the output-sensitive algorithm and reset $\mathsf{cnt} = 0$ and $\mathsf{opt}^\sim = |S^*|$.
Otherwise, we apply local modification on $S^*$ (i.e., ``slightly'' modify $S^*$) as follows.
\begin{itemize}
    \item If the operation inserts a point $a$ to $S$, then include $a$ in $S^*$.
    If the operation deletes a point $a$ from $S$, then remove $a$ from $S^*$ (if $a \in S^*$) and include in $S^*$ the rightmost point in $S$ to the left of $a$ and the leftmost point in $S$ to the right of $a$ (if they exist), which can be found using the data structure $\mathcal{C}_1$ of Lemma~\ref{lem-supleftright}.
    \item If the operation inserts an interval $I$ to $\mathcal{I}$, then include in $S^*$ an arbitrary point in $S$ that hits $I$ (if it exists), which can be found using the data structure $\mathcal{C}_2$ of Lemma~\ref{lem-suphitint}.
    If the operation deletes an interval from $\mathcal{I}$, keep $S^*$ unchanged.
\end{itemize}
\noindent
Note that we do local modification on $S^*$ no matter whether the current $(S,\mathcal{I})$ has a hitting set or not.
After this, our data structure $\mathcal{D}$ uses $S^*$ as the solution for the current $(S,\mathcal{I})$, if the current $(S,\mathcal{I})$ has a hitting set.
In order to support the desired queries for $S^*$, we can simply store $S^*$ in a (dynamic) binary search tree.
When we re-compute $S^*$, we re-build the binary search tree that stores $S^*$.
For each local modification on $S^*$, we do insertion/deletion on the binary search tree according to the change of $S^*$.

\paragraph{Correctness.}
To see the correctness of our data structure, we have to show that $S^*$ is always a $(1+\varepsilon)$-approximate optimal hitting set for the current $(S,\mathcal{I})$, whenever $(S,\mathcal{I})$ has a hitting set.
To this end, we observe some simple facts.
First, the counter $\mathsf{cnt}$ is exactly the number of the operations processed after the last re-computation of $S^*$, and the number $\mathsf{opt}^\sim$ is the optimum at the last re-computation of $S^*$.
Second, after each local modification, the size of $S^*$ either increases by 1 or keeps unchanged.
Based on these two facts and the stability of dynamic interval hitting set (Lemma~\ref{lem-stab}), we can easily bound the size of $S^*$.
\begin{lemma} \label{lem-inthscorrect}
We have $|S^*| \leq (1+\varepsilon) \cdot \mathsf{opt}$ at any time.
\end{lemma}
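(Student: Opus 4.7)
The plan is to use the two observations highlighted just above the lemma, together with the stability of dynamic interval hitting set and the specific form of the re-computation threshold. Let $t$ be the time of the most recent re-computation of $S^*$ (or the time of the initial construction if no re-computation has yet occurred), let $s$ be the current time, and write $\mathsf{cnt} = s - t$. Immediately after time $t$ we have $|S^*| = \mathsf{opt}^\sim$, and this common value coincides with the quasi-optimum at time $t$: at initialization this is because $S^*$ is computed as an optimal hitting set for the hittable sub-collection $\mathcal{I}_0'$, and at every subsequent re-computation because the triggering condition requires the current instance to admit a hitting set, so its quasi-optimum equals its optimum. With this setup in hand, the two facts in the paragraph preceding the lemma read: \textbf{(i)} $|S^*| \leq \mathsf{opt}^\sim + \mathsf{cnt}$, since each of the $\mathsf{cnt}$ subsequent local modifications increases $|S^*|$ by at most one; and \textbf{(ii)} by the stability of dynamic interval hitting set (Lemma~\ref{lem-stab}) the quasi-optimum drifts by at most $1$ per operation, so the current quasi-optimum is at least $\mathsf{opt}^\sim - \mathsf{cnt}$.

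From here I case-split on the current state of $(S, \mathcal{I})$. If it has no hitting set, then $\mathsf{opt} = \infty$ and the claim is vacuous. If the most recent operation triggered a re-computation, then $|S^*| = \mathsf{opt}$ and the bound holds trivially. Otherwise a hitting set exists and no re-computation occurred at time $s$, which forces $\mathsf{cnt} < \varepsilon \cdot \mathsf{opt}^\sim / (2+\varepsilon)$. Because a hitting set exists, the current quasi-optimum coincides with $\mathsf{opt}$, so (ii) gives $\mathsf{opt}^\sim \leq \mathsf{opt} + \mathsf{cnt}$. Substituting this into the threshold inequality yields $(2+\varepsilon) \cdot \mathsf{cnt} < \varepsilon \cdot (\mathsf{opt} + \mathsf{cnt})$, i.e., $2 \cdot \mathsf{cnt} < \varepsilon \cdot \mathsf{opt}$. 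Combining with (i),
\[
|S^*| \leq \mathsf{opt}^\sim + \mathsf{cnt} \leq \mathsf{opt} + 2 \cdot \mathsf{cnt} < (1+\varepsilon) \cdot \mathsf{opt},
\]
which is the desired bound.

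The part I expect to require most care is the bookkeeping around long stretches in which the instance transiently has no hitting set: during such stretches re-computations are suppressed while $\mathsf{cnt}$ keeps growing, so one must argue that the chain of inequalities is still valid the moment the instance becomes hittable again. The key point is that the quasi-optimum is defined for every instance, hittable or not, so Lemma~\ref{lem-stab} controls its drift uniformly throughout any such stretch; hence observation (ii) applies as stated, and the algebra above goes through without modification. The constants in the threshold $\varepsilon \cdot \mathsf{opt}^\sim / (2 + \varepsilon)$ are tight for this plan: they are exactly what is needed to make the $+\mathsf{cnt}$ drift of $|S^*|$ and the $-\mathsf{cnt}$ drift of $\mathsf{opt}$ combine into an $\varepsilon \cdot \mathsf{opt}$ slack.
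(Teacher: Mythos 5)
Your proof is correct and follows essentially the same route as the paper's: both combine the drift bound $|S^*| \leq \mathsf{opt}^\sim + \mathsf{cnt}$, the stability bound $\mathsf{opt}^\sim \leq \mathsf{opt}' + \mathsf{cnt}$ from Lemma~\ref{lem-stab}, and the re-computation threshold to get $2\,\mathsf{cnt} \leq \varepsilon \cdot \mathsf{opt}$. Your explicit case analysis for stretches where the instance has no hitting set (where the bound is vacuous since $\mathsf{opt}=\infty$) is a slightly more careful treatment of a point the paper glosses over, but the argument is the same.
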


\noindent
With the above lemma in hand, it now suffices to show that $S^*$ is a feasible hitting set.
\begin{lemma} \label{lem-inthscorrect2}
Whenever the current $(S,\mathcal{I})$ has a hitting set, $S^*$ is a hitting set for $(S,\mathcal{I})$.
\end{lemma}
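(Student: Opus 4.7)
The plan is to prove, by induction on the number of operations processed, the slightly stronger invariant that $S^* \subseteq S$ and $S^*$ hits every interval $I \in \mathcal{I}$ that contains at least one point of $S$. This invariant immediately implies the lemma, since whenever $(S,\mathcal{I})$ admits a hitting set $H \subseteq S$, every $I \in \mathcal{I}$ must contain at least one point of $H \subseteq S$ and is therefore hit by $S^*$.

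For the base case, the construction initializes $S^*$ as an optimal hitting set for $(S_0,\mathcal{I}_0')$, where $\mathcal{I}_0'$ is by definition the sub-collection of intervals in $\mathcal{I}_0$ hit by $S_0$, so the invariant holds. For the inductive step, I would separate the re-computation case from the local-modification case. In the re-computation case the current instance is assumed to admit a hitting set and $S^*$ is recomputed to be an optimal hitting set via the output-sensitive algorithm, so the invariant is trivially restored. The task then reduces to checking the four local-modification subcases. Point-insertion, interval-insertion, and interval-deletion are essentially immediate: a newly hittable interval created by inserting a point $a$ must contain $a$, which is added to $S^*$; an inserted interval $I$ that is hit by some point of $S$ has a witness added to $S^*$ via the data structure $\mathcal{C}_2$ of Lemma~\ref{lem-suphitint}; and interval deletion only shrinks the set of constraints while $S^*$ is left unchanged.

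The main obstacle is the point-deletion case. When $a$ is removed from $S$, the potentially troublesome intervals are those $I \in \mathcal{I}$ that in the old $S^*$ were hit only by $a$. Suppose $I$ is still hit by the updated $S$, and pick any witness $b \in S \setminus \{a\}$ with $b \in I$. Here I would invoke the defining one-dimensional geometric property of intervals: since $I$ is convex and contains both $a$ and $b$, it contains the whole segment between them. Consequently, if $b < a$, the rightmost point $p$ of the updated $S$ lying to the left of $a$ satisfies $b \leq p < a$ and therefore $p \in [b,a] \subseteq I$; the symmetric statement holds if $b > a$. Because the local-modification rule inserts exactly these two neighbors of $a$ into $S^*$ (found via $\mathcal{C}_1$ from Lemma~\ref{lem-supleftright}), the invariant is restored and the induction closes. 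This step hinges on the convexity of intervals on the line, which is precisely the geometric reason the local-modification framework yields a feasible solution for dynamic interval hitting set.
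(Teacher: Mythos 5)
Your proposal is correct and takes essentially the same route as the paper: the same induction on operation count, the same invariant (the paper phrases it as ``$S_i^*$ is a hitting set for $(S_i,\mathcal{I}_i')$,'' which is precisely your ``$S^*\subseteq S$ hits every interval of $\mathcal{I}$ containing at least one point of $S$''), and the same treatment of point deletion using the betweenness/convexity property of intervals on the line. The only cosmetic difference is that you phrase the key step as a convexity argument, while the paper states directly that $a^-$ lies between $b$ and $a$; these are the same observation.
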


\paragraph{Time analysis.}
The construction time of $\mathcal{D}$ is clearly $\widetilde{O}(n_0)$.
Specifically, when constructing $\mathcal{D}$, we need to build the data structures $\mathcal{A},\mathcal{B},\mathcal{C}_1,\mathcal{C}_2$, all of which can be built in $\widetilde{O}(n_0)$ time.
Also, we need to compute an optimal hitting set for $(S_0,\mathcal{I}_0')$ using the output-sensitive algorithm, which can also be done in $\widetilde{O}(n_0)$ time.
We show that the amortized update time of $\mathcal{D}$ is $\widetilde{O}(1/\varepsilon)$.
After each operation, we need to update the data structures $\mathcal{A},\mathcal{B},\mathcal{C}_1,\mathcal{C}_2$, which takes $\widetilde{O}(1)$ time.
If we do local modification on $S^*$, then it takes $\widetilde{O}(1)$ time.
Indeed, in each local modification, there are $O(1)$ elements to be added to or removed from $S^*$.
These elements can be found in $\widetilde{O}(1)$ time using the data structures $\mathcal{C}_1$ and $\mathcal{C}_2$.
After these elements are found, we need to modify the binary search tree that stores $S^*$ for supporting the desired queries, which can also be done in $\widetilde{O}(1)$ time.
In the case where we re-compute a new $S^*$, it takes $\widetilde{O}(\mathsf{opt})$ time.
However, we can amortize this time over the operations in between the last re-computation of $S^*$ and the current one.
Let $\mathsf{opt}^\sim$ be the quasi-optimum of $(S,\mathcal{I})$ at the last re-computation.
Suppose there are $t$ operations in between the two re-computations.
Then $t \geq \varepsilon \cdot \mathsf{opt}^\sim/(2+\varepsilon)$.
By the stability of dynamic interval hitting set (Lemma~\ref{lem-stab}), we have $\mathsf{opt} \leq \mathsf{opt}^\sim + t$, because $\mathsf{opt}$ is also the quasi-optimum of the current $(S,\mathcal{I})$.
It follows that
\begin{equation*}
    \frac{\mathsf{opt}}{t} \leq \frac{\mathsf{opt}^\sim + t}{t} \leq \frac{2+\varepsilon}{\varepsilon} +1 = O(1/\varepsilon).
\end{equation*}
Thus, the amortized time for the current re-computation is $\widetilde{O}(1/\varepsilon)$.
We conclude the following.
\begin{theorem}
For a given approximation factor $\eps>0$, there is a $(1+\varepsilon)$-approximate dynamic interval hitting data structure $\mathcal{D}$ with $\widetilde{O}(1/\eps)$ amortized update time and $\widetilde{O}(n_0)$ construction time.
\end{theorem}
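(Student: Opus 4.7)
The plan is to combine the three ingredients already established—the output-sensitive algorithm (analog of Lemma~\ref{lem-osint} for hitting set), the feasibility-checking structure $\mathcal{B}$ from Lemma~\ref{lem-dsexhs}, and the query support structures $\mathcal{C}_1, \mathcal{C}_2$ from Lemmas~\ref{lem-supleftright} and~\ref{lem-suphitint}—with the stability property from Lemma~\ref{lem-stab} to amortize out the cost of the occasional full recomputation. First I would construct $\mathcal{A}, \mathcal{B}, \mathcal{C}_1, \mathcal{C}_2$ on the initial instance in $\widetilde{O}(n_0)$ time and compute an optimal hitting set $S^*$ for $(S_0, \mathcal{I}_0')$ using the output-sensitive algorithm; store $S^*$ in a dynamic balanced BST so that the size, membership, and reporting queries have the required time bounds.

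Next, I would maintain $S^*$ through a combination of local repairs and periodic rebuilds. A counter $\mathsf{cnt}$ records how many operations have occurred since the last rebuild, and $\mathsf{opt}^\sim$ records $|S^*|$ at that rebuild. Each operation is processed by the local modification rules already described (adding the inserted point; patching the neighbors of a deleted point via $\mathcal{C}_1$; adding a witness point for a newly inserted interval via $\mathcal{C}_2$; doing nothing on interval deletion), all of which take $\widetilde{O}(1)$ time. Whenever $\mathsf{cnt} \geq \varepsilon \cdot \mathsf{opt}^\sim / (2+\varepsilon)$ and $\mathcal{B}$ reports feasibility, I rebuild $S^*$ from scratch in $\widetilde{O}(\mathsf{opt})$ time and reset $\mathsf{cnt}$ and $\mathsf{opt}^\sim$.

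For correctness, I would invoke Lemmas~\ref{lem-inthscorrect} and~\ref{lem-inthscorrect2}: the first guarantees $|S^*| \leq (1+\varepsilon) \cdot \mathsf{opt}$ at all times (using stability, which says the quasi-optimum drifts by at most one per operation, combined with the fact that each local modification grows $S^*$ by at most one), and the second guarantees $S^*$ is a genuine hitting set for the current $(S,\mathcal{I})$ whenever a hitting set exists. For the time analysis, each operation costs $\widetilde{O}(1)$ for updating $\mathcal{A}, \mathcal{B}, \mathcal{C}_1, \mathcal{C}_2$ and doing the local modification, plus the amortized rebuild cost. Since at least $\varepsilon \cdot \mathsf{opt}^\sim/(2+\varepsilon)$ operations elapse between rebuilds and stability gives $\mathsf{opt} \leq \mathsf{opt}^\sim + t$ where $t$ is the number of intervening operations, the amortized rebuild cost is $\widetilde{O}(\mathsf{opt}/t) = \widetilde{O}(1/\varepsilon)$.

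The main obstacle I anticipate is the verification of Lemma~\ref{lem-inthscorrect2}, i.e., showing that the tiny local patches actually keep $S^*$ a hitting set whenever one exists. The tricky case is point deletion: when $a$ disappears from $S$, any interval previously hit only by $a$ must be re-hit by some point currently in $S$, and the rule of inserting the immediate left and right neighbors of $a$ must be shown to suffice—this relies on the one-dimensional geometry of intervals, since any interval containing $a$ and still hit by $S$ must contain a point of $S$ adjacent to $a$ in sorted order. The remaining subtleties (interval insertion requires a witness point to exist, interval deletion is trivial, point insertion is monotone) are comparatively routine.
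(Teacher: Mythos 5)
Your proposal is exactly the paper's construction: the same data structures $\mathcal{A},\mathcal{B},\mathcal{C}_1,\mathcal{C}_2$, the same local modification rules (insert/delete a point, insert/delete an interval), the same rebuild threshold $\mathsf{cnt}\geq\varepsilon\cdot\mathsf{opt}^\sim/(2+\varepsilon)$, the same invocations of Lemmas~\ref{lem-inthscorrect} and~\ref{lem-inthscorrect2} for correctness, and the same $\mathsf{opt}\leq\mathsf{opt}^\sim+t$ amortization giving $\widetilde{O}(1/\varepsilon)$. You have also correctly pinned down the delicate case (point deletion and why the two sorted-order neighbors suffice), which is indeed where the paper spends its effort in Lemma~\ref{lem-inthscorrect2}.
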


\subsection{Dynamic set cover and hitting set in the partially dynamic settings}
The idea of local modification can also be applied to dynamic geometric set cover (resp., hitting set) problems in the partially dynamic setting (which are stable by Lemma~\ref{lem-stab}), as long as we have an output-sensitive algorithm for the problem and an efficient dynamic data structure that can indicate whether the current instance has a feasible solution or not.

For an example, let us consider dynamic quadrant set cover in the partially dynamic setting.
Let $(S,\mathcal{Q})$ be a dynamic quadrant set cover instance with only point updates.
As before, we denote by $n$ (resp., $n_0$) the size of the current (resp., initial) $(S,\mathcal{Q})$, and by $\mathsf{opt}$ the optimum of the current $(S,\mathcal{Q})$.

\paragraph{The data structure.}
Our data structure $\mathcal{D}$ consists of three parts.
The first part is the basic data structure $\mathcal{A}$ required for performing the output-sensitive algorithm of Theorem~\ref{thm-osquad}.
The second part is the following dynamic data structure $\mathcal{B}$ which can tell us whether the current instance has a hitting set.
\begin{lemma} \label{lem-dsexsc}
    There exists a dynamic data structure $\mathcal{B}$ built on $(S,\mathcal{Q})$ with $\widetilde{O}(1)$ update time and $\widetilde{O}(n_0)$ construction time that can indicate whether the current $(S,\mathcal{Q})$ has a set cover or not.
\end{lemma}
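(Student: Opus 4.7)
The plan is to exploit the fact that in this partially dynamic setting the collection $\mathcal{Q}$ is fixed, while only the point set $S$ is subject to insertions and deletions. Under this assumption, $(S,\mathcal{Q})$ has a set cover iff every current point of $S$ lies in the static union $U = \bigcup_{Q \in \mathcal{Q}} Q$. So it suffices to preprocess $U$ once for fast point-in-$U$ queries, and then maintain a counter of points in $S$ that lie outside $U$; the instance has a set cover iff this counter is zero.

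First, I would decompose $\mathcal{Q}$ into its four orientation classes $\mathcal{Q}^\text{SE}, \mathcal{Q}^\text{SW}, \mathcal{Q}^\text{NE}, \mathcal{Q}^\text{NW}$ and build the union of each class separately. Each such union is a staircase region: for example, a point $(x_0,y_0)$ lies in $\bigcup_{Q \in \mathcal{Q}^\text{SE}} Q$ iff there exists a vertex $(a,b)$ of some $Q \in \mathcal{Q}^\text{SE}$ with $a \leq x_0$ and $b \geq y_0$, i.e., iff $\max\{b : (a,b) \text{ vertex of } Q \in \mathcal{Q}^\text{SE},\, a \leq x_0\} \geq y_0$. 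Sorting the vertices of $\mathcal{Q}^\text{SE}$ by $x$-coordinate and storing prefix maxima of $y$-coordinates in a balanced binary search tree supports point-in-$U^\text{SE}$ queries in $O(\log n_0)$ time, and the analogous construction works for the other three orientations. The total preprocessing cost is $\widetilde{O}(n_0)$, and a point lies in $U$ iff it lies in at least one of the four staircase regions, which is decidable in $\widetilde{O}(1)$ time total.

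Next, during initialization I would test each initial point of $S$ against $U$ and initialize a counter $c$ equal to the number of uncovered points, recording for each $p \in S$ (in a dictionary keyed by $p$) the Boolean ``$p \in U$.'' The data structure $\mathcal{B}$ then answers the query ``does $(S,\mathcal{Q})$ have a set cover?'' simply by reporting whether $c = 0$, which is $O(1)$. Updates are handled as follows: when a point $p$ is inserted into $S$, test membership in $U$ and increment $c$ if $p \notin U$; when a point $p$ is deleted, decrement $c$ if the stored bit says $p \notin U$. Since $\mathcal{Q}$ (and hence $U$) never changes, each update costs $\widetilde{O}(1)$.

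There is no real obstacle here beyond verifying the staircase-membership reduction, which is standard. The only subtlety worth flagging is that the data structure must correctly remember, for each point currently in $S$, whether it was classified as covered or uncovered at the time of insertion, so that deletions can update $c$ in $O(1)$ without re-querying $U$; a simple auxiliary dictionary on $S$ handles this. Construction time is dominated by building the four staircase search structures and the initial membership tests, both $\widetilde{O}(n_0)$, as required.
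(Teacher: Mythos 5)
Your proof is correct and follows essentially the same approach as the paper: build a static point-in-union-of-quadrants data structure (the paper cites a standard orthogonal stabbing structure; you instantiate it concretely via four staircase prefix-maxima structures) and maintain a counter of uncovered points in $S$, updated in $\widetilde{O}(1)$ per operation. The auxiliary dictionary you flag is a valid but unnecessary refinement, since $\mathcal{Q}$ is static and one can simply re-query the stabbing structure on deletion.
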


\noindent
The third part is a simple (static) data structure $\mathcal{C}$ built on $\mathcal{Q}$ that can report in $\widetilde{O}(1)$ time, for a given point $a \in \mathbb{R}^2$, a quadrant in $\mathcal{Q}$ that contains $a$ (if it exists); this can be done using a standard orthogonal range-search data structure, e.g., range tree, whose construction time is $\widetilde{O}(|\mathcal{Q}|)$.

\paragraph{Maintaining a solution.}
Let $S_0$ be the initial $S$, and define $S_0' \subseteq S_0$ as the subset consisting of the points covered by $\mathcal{Q}$.
At the beginning, we compute a $\mu$-approximate optimal set cover $\mathcal{Q}^*$ for $(S_0',\mathcal{Q})$, where $\mu$ is the approximation ratio of the output-sensitive algorithm.
This can be done as follows.
We first compute $S_0'$ by testing whether each point in $S$ is covered by some range in $\mathcal{Q}$ using the data structure $\mathcal{C}$, and then build on $(S_0',\mathcal{Q})$ the basic data structure required for the output-sensitive algorithm and perform the output-sensitive algorithm on $(S_0',\mathcal{Q})$.
Set $\mathsf{cnt} = 0$ and $\mathsf{opt}^\sim = |\mathcal{Q}^*|$.
After each iteration, $\mathcal{D}$ first increments $\mathsf{cnt}$ by 1 and checks whether the current $(S,\mathcal{Q})$ has a set cover or not using the data structure $\mathcal{B}$.
Whenever $\mathsf{cnt} \geq (\varepsilon/\mu) \cdot \mathsf{opt}^\sim / (2+\varepsilon)$ and the current $(S,\mathcal{Q})$ has a set cover, we re-compute a $\mu$-approximate optimal set cover $\mathcal{Q}^*$ for the current $(S,\mathcal{Q})$ using the output-sensitive algorithm, and reset $\mathsf{cnt} = 0$ and $\mathsf{opt}^\sim = |\mathcal{Q}^*|$.
Otherwise, we apply local modification on $\mathcal{Q}^*$ as follows.
If the operation inserts a point $a$ to $S$, then include in $\mathcal{Q}^*$ an arbitrary range in $\mathcal{Q}$ that contains $a$, which can be found using the data structure $\mathcal{C}$ (if it exists).
If the operation deletes a point $a$ from $S$, then keep $\mathcal{Q}^*$ unchanged.
After this, our data structure $\mathcal{D}$ uses $\mathcal{Q}^*$ as the solution for the current $(S,\mathcal{Q})$.
As in the last section, we can store $\mathcal{Q}^*$ in a (dynamic) binary search tree to support the desired queries (by fixing some order on the family of quadrants).

\paragraph{Correctness.}
Using the same argument as in the last section, we can show that $|\mathcal{Q}^*| \leq (\mu+\varepsilon) \cdot \mathsf{opt}$ at any time.
Furthermore, it is also easy to see that $\mathcal{Q}^*$ is always a feasible set cover.
\begin{lemma} \label{lem-quadsccorrect}
Whenever the current $(S,\mathcal{Q})$ has a set cover, $\mathcal{Q}^*$ is a set cover for $(S,\mathcal{Q})$.
\end{lemma}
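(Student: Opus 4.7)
The plan is to verify by a direct case analysis on how each point $a \in S$ got into the current instance that $\mathcal{Q}^*$ always contains a quadrant covering $a$, provided we work under the hypothesis that the current $(S,\mathcal{Q})$ admits a set cover. The crucial structural observations are: (i) $\mathcal{Q}$ is static in the partially dynamic setting, so a quadrant in $\mathcal{Q}$ that covers a point $a$ at some time still covers $a$ at any later time; and (ii) between two consecutive recomputations the only local modifications to $\mathcal{Q}^*$ are \emph{insertions}, so $\mathcal{Q}^*$ monotonically grows in that period.

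Let $t$ be the current time, and assume the current $(S,\mathcal{Q})$ has a set cover. Let $t^\sim$ denote the most recent recomputation time (taking $t^\sim = 0$ if no recomputation has happened yet). I would first argue the base case that immediately after any recomputation, $\mathcal{Q}^*$ is by construction a $\mu$-approximate set cover of the instance at that moment, and the initial $\mathcal{Q}^*$ is a set cover of $(S_0',\mathcal{Q})$, where $S_0'$ is exactly the subset of $S_0$ coverable by $\mathcal{Q}$. Next, pick any $a$ in the current $S$ and split into two cases. If $a$ was present in $S$ at time $t^\sim$ and was never deleted in the interval $(t^\sim, t]$, then at time $t^\sim$ the set $\mathcal{Q}^*$ already contained a quadrant covering $a$: in the recomputation case this is immediate, and in the initial case it follows because $(S_t,\mathcal{Q})$ having a set cover forces $a$ to be coverable by $\mathcal{Q}$ and therefore $a \in S_0'$. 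By the monotonicity observation (ii), that quadrant still lies in $\mathcal{Q}^*$ at time $t$.

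In the remaining case, $a$ was inserted into $S$ at some time $s \in (t^\sim, t]$ (if $a$ was deleted and re-inserted several times in that window, take $s$ to be the most recent insertion). Because $a$ lies in the current $S$ and the current $(S,\mathcal{Q})$ has a set cover, some quadrant in $\mathcal{Q}$ covers $a$; by observation (i) that same quadrant covered $a$ at time $s$ as well. Hence, when the local-modification rule for point insertions fired at time $s$, it successfully located (via the data structure $\mathcal{C}$) and inserted a quadrant covering $a$ into $\mathcal{Q}^*$. By monotonicity (ii) this quadrant is still in $\mathcal{Q}^*$ at time $t$.

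Combining the two cases shows that every $a \in S$ is covered by $\mathcal{Q}^*$, which is exactly the lemma. The only subtle point — and what I would flag as the main thing to get right — is the logical chain justifying that a covering quadrant is actually available at each insertion time; this relies jointly on the hypothesis that the current instance admits a set cover and on the fact that $\mathcal{Q}$ is static, so that coverability propagates backward in time from $t$ to the earlier insertion moment $s$. Once this is spelled out, the rest is a routine invariant argument about $\mathcal{Q}^*$ never shrinking between recomputations.
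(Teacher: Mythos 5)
Your proof is correct. The paper's argument proceeds by forward induction on time, maintaining the stronger invariant that $\mathcal{Q}_i^*$ is always a set cover for $(S_i',\mathcal{Q})$, where $S_i'\subseteq S_i$ is the subset of points coverable by $\mathcal{Q}$; the lemma then follows because $S_i'=S_i$ whenever $(S_i,\mathcal{Q})$ has a set cover. Your approach instead fixes the current time $t$, and for each $a\in S_t$ performs a backward trace to either the last recomputation or $a$'s most recent insertion time, arguing that a covering quadrant was added to $\mathcal{Q}^*$ at that moment and never removed afterward (since $\mathcal{Q}^*$ is nondecreasing between recomputations in the partially dynamic setting). Both proofs turn on the same two facts --- $\mathcal{Q}$ is static so coverability propagates across time, and local modifications never shrink $\mathcal{Q}^*$ --- but you avoid introducing the auxiliary sets $S_i'$ and the explicit invariant, which is a mild simplification for this lemma in isolation; the paper's formulation has the advantage of running exactly parallel to the proof of the interval hitting set analogue (Lemma~\ref{lem-inthscorrect2}), where the fully dynamic setting makes the backward-trace version more delicate.
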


\paragraph{Time analysis.}
Using the same analysis as in the last section, we can show that the construction time of the data structure $\mathcal{D}$ is $\widetilde{O}(n)$, and the amortized update time is $\widetilde{O}(1/\varepsilon)$.
Setting $\varepsilon$ to be any constant, we conclude the following.
\begin{theorem}
There exists an $O(1)$-approximate partially dynamic quadrant set cover data structure with $\widetilde{O}(1)$ amortized update time and $\widetilde{O}(n_0)$ construction time.
\end{theorem}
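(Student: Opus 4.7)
The plan is to mirror the argument the paper gives for dynamic interval hitting set, adapted to quadrant set cover in the partially dynamic (point insertion/deletion) setting. The three pieces $\mathcal{A}$, $\mathcal{B}$, and $\mathcal{C}$ have already been assembled, so the content of the proof is to verify feasibility, approximation quality, and the claimed time bounds.

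For construction, I would first build $\mathcal{A}$ (from Theorem~\ref{thm-osquad}), $\mathcal{B}$ (from Lemma~\ref{lem-dsexsc}), and $\mathcal{C}$ (a standard 2D range tree on $\mathcal{Q}$), each in $\widetilde{O}(n_0)$ time. Then I would compute $S_0'$ by querying $\mathcal{C}$ once per point of $S_0$, rebuild $\mathcal{A}$ on $(S_0',\mathcal{Q})$ if needed, and invoke the output-sensitive algorithm of Theorem~\ref{thm-osquad} to obtain the initial $\mathcal{Q}^*$. Since $(S_0',\mathcal{Q})$ has a set cover by definition, Theorem~\ref{thm-osquad} runs in $\widetilde{O}(\mu\cdot\mathsf{opt}) = \widetilde{O}(n_0)$ time, yielding the $\widetilde{O}(n_0)$ construction bound. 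Because $\mathcal{Q}$ is static in the partially dynamic setting, $\mathcal{C}$ can be a static structure, which is the one simplification over the interval hitting set case.

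For correctness, feasibility of $\mathcal{Q}^*$ whenever the instance is coverable is exactly Lemma~\ref{lem-quadsccorrect}. The key quantitative step is the bound $|\mathcal{Q}^*| \le (\mu+\varepsilon)\cdot\mathsf{opt}$. I would argue this by observing that (a) each local modification increases $|\mathcal{Q}^*|$ by at most one, so if $t$ operations have passed since the last recomputation then $|\mathcal{Q}^*| \le \mathsf{opt}^\sim + t$; (b) at recomputation $\mathsf{opt}^\sim \le \mu\cdot\mathsf{opt}^{\text{old}}$, where $\mathsf{opt}^{\text{old}}$ is the quasi-optimum at that moment; and (c) Lemma~\ref{lem-stab} (partially dynamic implies stable) gives $\mathsf{opt}^{\text{old}} \le \mathsf{opt} + t$. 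Chaining these with the trigger condition $t < (\varepsilon/\mu)\cdot\mathsf{opt}^\sim/(2+\varepsilon)$ solves a short inequality in $\mathsf{opt}^\sim$ and yields the claimed approximation ratio; I do not write out the algebra here since it is the exact analog of Lemma~\ref{lem-inthscorrect}.

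For the time bound, every update pays $\widetilde{O}(1)$ for maintaining $\mathcal{A}$, $\mathcal{B}$ (by Lemma~\ref{lem-dsexsc}), the binary search tree storing $\mathcal{Q}^*$, and the local modification itself (which inserts at most one quadrant found via a single query to $\mathcal{C}$). The only nontrivial cost is a recomputation, which takes $\widetilde{O}(\mu\cdot\mathsf{opt}) = \widetilde{O}(\mathsf{opt})$ time via Theorem~\ref{thm-osquad}; amortizing over the $t \ge (\varepsilon/\mu)\cdot\mathsf{opt}^\sim/(2+\varepsilon)$ operations in the period and using Lemma~\ref{lem-stab} to relate $\mathsf{opt}^\sim$ to $\mathsf{opt}$ as in the last paragraph of Section~\ref{sec-localmod} gives amortized $\widetilde{O}(\mu/\varepsilon) = \widetilde{O}(1/\varepsilon)$. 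Fixing $\varepsilon$ to a constant absorbs the $1/\varepsilon$ factor and yields $\widetilde{O}(1)$. The main obstacle is essentially bookkeeping: one must carefully distinguish $\mathsf{opt}^\sim$ (the size of the $\mu$-approximate solution produced at the last recomputation) from the true quasi-optimum at that moment, because the multiplicative $\mu$ slack enters both the approximation inequality and the amortization, and it is this interaction that forces the threshold to include the factor $\varepsilon/\mu$ rather than simply $\varepsilon$.
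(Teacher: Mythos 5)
Your proposal follows the paper's own construction essentially verbatim: the same three components $\mathcal{A},\mathcal{B},\mathcal{C}$, the same local-modification rule and recomputation trigger $\mathsf{cnt}\geq(\varepsilon/\mu)\cdot\mathsf{opt}^\sim/(2+\varepsilon)$, feasibility via Lemma~\ref{lem-quadsccorrect}, the approximation bound via the analog of Lemma~\ref{lem-inthscorrect} using stability (Lemma~\ref{lem-stab}), and the same amortization of the $\widetilde{O}(\mathsf{opt})$ recomputation cost. This matches the paper's argument, and your closing remark about the $\mu$ factor in the threshold correctly identifies the only place where the quadrant case differs from the interval hitting set case.
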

Using the reduction of Lemma~\ref{lem-sqreduction}, we also have the following result.
Note that although the reduction of Lemma~\ref{lem-sqreduction} is described for the fully dynamic setting, it actually applies to reduce dynamic unit-square set cover (resp., hitting set) in the partially dynamic setting and dynamic quadrant hitting set in the partially dynamic setting to dynamic quadrant set cover data structure in the partially dynamic setting.
\begin{corollary}
There exist an $O(1)$-approximate partially dynamic unit-square set cover, partially dynamic unit-square hitting set, and partially dynamic quadrant hitting set data structures with $\widetilde{O}(1)$ amortized update time and $\widetilde{O}(n_0)$ construction time.
\end{corollary}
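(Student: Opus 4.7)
The plan is to combine the preceding Theorem (giving an $O(1)$-approximate partially dynamic quadrant set cover data structure with $\widetilde{O}(1)$ amortized update time and $\widetilde{O}(n_0)$ construction time) with Lemma~\ref{lem-sqreduction}, the only novelty being a careful inspection that the reductions used to establish Lemma~\ref{lem-sqreduction} preserve the partially dynamic setting. Concretely, for each of the three target problems I would revisit the reduction to quadrant set cover and verify that each operation on the source instance triggers only operations of the \emph{same} restricted type (point-only or range-only) on the constructed quadrant set cover instances.

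First I would handle dynamic quadrant hitting set. The standard reduction flips the roles of points and quadrants via the point–quadrant incidence duality: a point $p$ lies in a southeast quadrant with vertex $q$ iff $q$ lies in the northwest quadrant with vertex $p$. Under this duality a point update on the original instance becomes a quadrant update on the reduced instance and vice versa, so a partially dynamic quadrant hitting set instance becomes a partially dynamic quadrant set cover instance, which is directly fed into the Theorem. For unit-square set cover, I would recall the reduction that replaces each unit square by a constant number of quadrants (e.g., intersections of quadrants in the four orientations, handled by running a constant number of quadrant set cover instances in parallel); a point insertion/deletion on the unit-square instance produces point insertions/deletions on each quadrant sub-instance, so partial dynamism is preserved. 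Unit-square hitting set is handled by first dualizing to unit-square set cover (unit squares are translates of a common shape, so the duality is symmetric) and then applying the previous reduction.

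After these verifications, each operation on the source problem incurs $O(1)$ operations on the constructed partially dynamic quadrant set cover data structures. Applying the preceding Theorem to each, we pay $\widetilde{O}(1)$ amortized time per update, and the construction time blows up by only a constant factor, remaining $\widetilde{O}(n_0)$. The approximation factor is multiplied by the constant blow-up of the reduction, so the overall approximation remains $O(1)$, and the query interfaces (size, membership, reporting) lift through the reductions in the usual way, possibly with an auxiliary binary search tree storing the lifted solution.

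The main obstacle is not the arithmetic but the bookkeeping in the unit-square reductions: I would need to confirm that when a single unit square is inserted or deleted, the $O(1)$ quadrants it generates are updated consistently across the parallel sub-instances, and that membership/reporting queries on the unit-square solution can still be answered in the required times by aggregating over the quadrant sub-instances. Once this is verified, the corollary follows immediately from the preceding Theorem plus Lemma~\ref{lem-sqreduction}.
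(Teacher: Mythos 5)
Your proposal matches the paper's argument exactly: the corollary is obtained by invoking Lemma~\ref{lem-sqreduction} together with the preceding theorem, after observing that each reduction (unit-square set cover $\leftrightarrow$ hitting set duality, the unit-grid decomposition into quadrant sub-instances, and the point--quadrant duality for quadrant hitting set) maps updates of the allowed type to $O(1)$ updates of the allowed type on the quadrant set cover instances, so partial dynamism is preserved. The only cosmetic slip is that a point update touches exactly one grid cell rather than ``each'' sub-instance, which only strengthens the bound.
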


\bibliography{my_bib}

\appendix

\section{Missing proofs} \label{appx-proof}

In this section, we give the proofs of the technical lemmas in Section~\ref{sec-DISC} and Section~\ref{sec-DQSC}, which are missing in the main body of the paper.

\subsection{Proof of Lemma~\ref{lem-osint}}
Consider an interval set cover instance $(S,\mathcal{I})$ of size $n$.
\begin{lemma} \label{lem-aaa}
One can store $S$ in a basic data structure that can report in $\widetilde{O}(1)$ time, for a query point $q \in \mathbb{R}$, the leftmost point in $S$ that is to the right of $q$ (if it exists).
\end{lemma}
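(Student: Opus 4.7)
The plan is to store $S$ in a standard balanced binary search tree (such as a red-black tree or AVL tree), keyed by the coordinate of each point on $\mathbb{R}$. This is a textbook data structure, so the proof will mostly consist of recalling its properties and checking that it meets the ``basic data structure'' requirements (construction in $\widetilde{O}(n)$ time, update in $\widetilde{O}(1)$ time) as well as the $\widetilde{O}(1)$ query-time requirement.

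First, I would describe the construction: sort the points of $S$ in $O(n\log n)$ time and build a balanced BST from the sorted list in $O(n)$ time, for a total of $\widetilde{O}(n)$ construction time. Insertion and deletion of a point are handled by standard BBST routines in $O(\log n) = \widetilde{O}(1)$ time, so the structure is indeed basic (in the dynamized sense defined in Section~\ref{sec-notation}).

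Next, I would describe the query. Given $q \in \mathbb{R}$, we perform a standard successor search: starting at the root, whenever we visit a node with key $v$, if $v > q$ we record $v$ as the best candidate so far and recurse into the left subtree, otherwise we recurse into the right subtree. When we reach a leaf, the last recorded candidate (if any) is the leftmost point of $S$ strictly to the right of $q$; if no candidate was ever recorded, no such point exists. The search follows a single root-to-leaf path, so it takes $O(\log n) = \widetilde{O}(1)$ time.

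There is no real obstacle here; the only thing to verify is that this successor primitive fits the ``basic'' framework and query-time bound stated in the lemma, which it does. If one wanted slightly better query time one could use a van Emde Boas tree or $y$-fast trie assuming integer coordinates, but the lemma only asks for $\widetilde{O}(1)$ and a balanced BST suffices.
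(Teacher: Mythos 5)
Your proposal is correct and matches the paper's proof essentially verbatim: both store $S$ in a balanced binary search tree and answer the query with a standard successor search in $O(\log n) = \widetilde{O}(1)$ time, with $\widetilde{O}(n)$ construction and $\widetilde{O}(1)$ updates making the structure basic. Nothing further is needed.
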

\begin{proof}
We simply store $S$ in a binary search tree.
Given a query point $q \in \mathbb{R}$, one can find the leftmost point in $S$ to the right of $q$ by searching in the tree in $\widetilde{O}(1)$ time.
Clearly, the tree can be built in $\widetilde{O}(|S|)$ time and dynamized with $\widetilde{O}(1)$ update time, and hence it is basic.
\end{proof}
\begin{lemma} \label{lem-bbb}
One can store $\mathcal{I}$ in a basic data structure that can report in $\widetilde{O}(1)$ time, for a query point $q \in \mathbb{R}$, the interval in $\mathcal{I}$ with the rightmost right endpoint that contains $q$ (if it exists).
\end{lemma}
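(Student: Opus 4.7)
}
The plan is to reduce the query to a standard one-dimensional augmented search. For each interval $I = [l_I, r_I] \in \mathcal{I}$, store $I$ at a leaf of a balanced binary search tree $T$ keyed by the left endpoint $l_I$. At every internal node $v$ of $T$, maintain the augmented value $\mathsf{M}(v) = \max\{r_I : I \text{ stored in subtree of } v\}$ together with a pointer to one interval achieving this maximum. This tree can be built in $\widetilde{O}(|\mathcal{I}|)$ time by sorting and a bottom-up traversal, and insertions/deletions maintain both the tree shape and the augmented $\mathsf{M}$-values along the affected root-to-leaf path in $\widetilde{O}(1)$ time per update, so the structure is basic.

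To answer a query on a point $q \in \mathbb{R}$, I would walk down from the root of $T$ tracking the ``prefix'' of leaves whose key satisfies $l_I \le q$: whenever the walk goes right at a node, the entire left subtree consists of intervals with $l_I \le q$ and contributes its stored $\mathsf{M}$-value as a candidate; whenever it goes left, no additional subtree is collected. Taking the maximum of the $O(\log |\mathcal{I}|)$ collected $\mathsf{M}$-values, together with the witness pointer, yields in $\widetilde{O}(1)$ time an interval $I^* \in \mathcal{I}$ with $l_{I^*} \le q$ and $r_{I^*}$ maximum among all such intervals. If $r_{I^*} \ge q$, report $I^*$; otherwise report that no interval in $\mathcal{I}$ contains $q$.

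For correctness, observe the following simple dichotomy. Let $\mathcal{I}_q = \{I \in \mathcal{I} : l_I \le q\}$ and let $I^*$ attain $\max_{I \in \mathcal{I}_q} r_I$. If $r_{I^*} \ge q$, then $I^*$ contains $q$, and since any interval containing $q$ lies in $\mathcal{I}_q$, $I^*$ also has the largest right endpoint among all intervals in $\mathcal{I}$ that contain $q$. Otherwise $r_{I^*} < q$, and then every interval in $\mathcal{I}_q$ (including every potential container of $q$) has right endpoint $< q$, so no interval in $\mathcal{I}$ contains $q$. Thus the procedure is correct.

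I do not expect a genuine obstacle here; the only mild subtlety is being careful that the augmented $\mathsf{M}$-values and their witness pointers are consistently updated under insertions, deletions, and rotations of a balanced BST, which is routine for any standard augmented search tree (e.g., a red-black tree or weight-balanced tree).
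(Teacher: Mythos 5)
Your proposal is correct and follows essentially the same approach as the paper: a balanced BST keyed on left endpoints, augmented at each node with the interval of maximum right endpoint in its subtree, queried by finding the interval with the rightmost right endpoint among those with key at most $q$ and checking whether it contains $q$. The correctness dichotomy you state is exactly the one the paper relies on.
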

\begin{proof}
We store $\mathcal{I}$ in a binary search tree $\mathcal{T}$ where the key of each interval is its left endpoint.
We augment each node $\mathbf{u} \in \mathcal{T}$ with an additional field which stores the interval in the subtree rooted at $\mathbf{u}$ that has the rightmost right endpoint.
Given a query point $q \in \mathbb{R}$, we first look for the interval in $\mathcal{T}$ with the rightmost right endpoint whose key is smaller than or equal to $q$.
With the augmented fields, this interval can be found in $\widetilde{O}(1)$ time.
If this interval contains $q$, then we report it; otherwise, no interval in $\mathcal{I}$ contains $q$.
Clearly, $\mathcal{T}$ can be built in $\widetilde{O}(|\mathcal{I}|)$ time and dynamized with $\widetilde{O}(1)$ update time, and hence it is basic.
\end{proof}
\noindent
We store $S$ in the data structure of Lemma~\ref{lem-aaa} and $\mathcal{I}$ in the data structure of Lemma~\ref{lem-bbb}.

Recall the greedy algorithm for the static interval set cover problem.
The algorithm computes an optimal set cover $\mathcal{I}_\text{opt}$ for $(S,\mathcal{I})$ by repeatedly picking the leftmost uncovered point and covering it using the right-rightmost interval.
Formally, the procedure is presented below.
\begin{enumerate}
    \item $q \leftarrow -\infty$ and $\mathcal{I}_\text{opt} \leftarrow \emptyset$.
    \item Find the leftmost point $a \in S$ to the right of $q$.
    \item $\mathcal{I}_\text{opt} \leftarrow \mathcal{I}_\text{opt} \cup \{I_a\}$, where $I_a$ is the right-rightmost interval in $\mathcal{I}$ covering $a$.
    \item $q \leftarrow$ the right endpoint of $I_a$, then go to Step~2.
\end{enumerate}
With the data structures of Lemma~\ref{lem-aaa} and Lemma~\ref{lem-bbb} in hand, the greedy algorithm can be performed in $\widetilde{O}(\mathsf{opt})$ time, where $\mathsf{opt}$ is the optimum of $(S,\mathcal{I})$.
Indeed, the data structure of Lemma~\ref{lem-aaa} can be used to implement Step~2 and the data structure of Lemma~\ref{lem-bbb} can be used to implement Step~3, both in $\widetilde{O}(1)$ time.
Since the algorithm terminates after $O(\mathsf{opt})$ iterations, the total time cost is $\widetilde{O}(\mathsf{opt})$.

\subsection{Proof of Lemma~\ref{lem-intnoans}}
Our algorithm makes a no-solution decision iff $(S_i,\mathcal{I}_i)$ has no set cover for some $i \in P'$.
If $(S_i,\mathcal{I}_i)$ has a set cover for all $i \in P'$, then $(S,\mathcal{I})$ clearly has a set cover, because the portions $J_i$ for $i \in P$ are coverable.
If $(S_i,\mathcal{I}_i)$ has no set cover for some $i \in P'$, then $(S,\mathcal{I})$ has no set cover, because $J_i$ is uncoverable and thus the points in $S_i$ can only be covered by the intervals in $\mathcal{I}_i$.
Therefore, the no-solution decision of our data structure is correct.

\subsection{Proof of Lemma~\ref{lem-atmost2}}
Suppose the portions $J_1,\dots,J_r$ are sorted from left to right.
Let $s_i$ be the separation point of $J_i$ and $J_{i+1}$.
Observe that an interval $I \in \mathcal{I}_\text{opt}$ belongs to exactly two $\mathcal{I}_i$'s only if $I$ contains one of the separation points $s_1,\dots,s_{r-1}$.
We claim that for each $s_i$, at most two intervals in $\mathcal{I}_\text{opt}$ contain $s_i$.
Assume there are three intervals $I^-,I,I^+$ that contain $s_i$.
Without loss of generality, assume that $I^-$ (resp., $I^+$) has the leftmost left endpoint (resp., the rightmost right endpoint) among $I^-,I,I^+$.
Then one can easily see that $I \subseteq I^- \cup I^+$.
Therefore, $\mathcal{I}_\text{opt} \backslash \{I\}$ is also a set cover for $(S,\mathcal{I})$, contradicting the optimality of $\mathcal{I}_\text{opt}$.
Thus, at most two intervals in $\mathcal{I}_\text{opt}$ contain $s_i$.
It follows that there are at most $2(r-1)$ intervals in $\mathcal{I}_\text{opt}$ that contain some separation point, and only these intervals can belong to exactly two $\mathcal{I}_i$'s, which proves the lemma.

\subsection{Proof of Lemma~\ref{lem-sqreduction}}
We first observe that unit-square set cover and hitting set are equivalent.
A unit square $Z$ contains a point $a \in \mathbb{R}^2$ iff the unit square centered at $a$ contains the center of $Z$.
Therefore, given a unit-square set cover instance $(S,\mathcal{Z})$, if we replace each point $a \in S$ with a unit square $Z_a$ centered at $a$ and replace each unit square $Z \in \mathcal{Z}$ with the center $c_Z$ of $Z$, then $(S,\mathcal{Z})$ is reduced to an equivalent unit-square hitting set instance $(\{c_Z: Z \in \mathcal{Z}\},\{Z_a: a \in S\})$.
In the same way, one can also reduce a unit-square hitting set instance to an equivalent unit-square set cover instance.
Thus, we only need to consider unit-square set cover.

In order to reduce dynamic unit-square set cover to dynamic quadrant set cover, we apply a grid technique.
To this end, let us first consider a static unit-square set cover instance $(S,\mathcal{Z})$.
We build a grid $\varGamma$ on the plane consisting of square cells of side-length 1.
For each cell $\Box$ of the grid, we define $S_\Box = S \cap \Box$ and $\mathcal{Z}_\Box = \{Z \in \mathcal{Z}: Z \cap \Box \neq \emptyset\}$.
We call a cell \textit{nonempty} if $S_\Box \neq \emptyset$ or $\mathcal{Z}_\Box \neq \emptyset$, and \textit{truly nonempty} if $S_\Box \neq \emptyset$.
Let $\varPsi$ be the set of all truly nonempty cells.
It is clear that $(S,\mathcal{Z})$ has a set cover iff $(S_\Box,\mathcal{Z}_\Box)$ has a set cover for all $\Box \in \varPsi$, since the points in $S_\Box$ can only be covered using the unit squares in $\mathcal{Z}_\Box$.
Let $\mathcal{Z}_\Box^* \subseteq \mathcal{Z}_\Box$ be a $c$-approximate optimal set cover for $(S_\Box,\mathcal{Z}_\Box)$.
We claim that $\bigsqcup_{\Box \in \varPsi} \mathcal{Z}_\Box^*$ is a $4c$-approximate optimal set cover for $(S,\mathcal{Z})$.
Consider an optimal set cover $\mathcal{Z}^*$ for $(S,\mathcal{Z})$.
We have $|\mathcal{Z}_\Box^*| \leq c \cdot |\mathcal{Z}^* \cap \mathcal{Z}_\Box|$, because $\mathcal{Z}^* \cap \mathcal{Z}_\Box$ is a set cover for $(S_\Box,\mathcal{Z}_\Box)$.
On the other hand, we have $\sum_{\Box \in \varPsi} |\mathcal{Z}^* \cap \mathcal{Z}_\Box| \leq 4 |\mathcal{Z}^*|$, since any unit square can intersect at most four cells of $\varGamma$.
It follows that
\begin{equation*}
    \left| \bigsqcup_{\Box \in \varPsi} \mathcal{Z}_\Box^* \right| = \sum_{\Box \in \varPsi} |\mathcal{Z}_\Box^*| \leq \sum_{\Box \in \varPsi} c \cdot |\mathcal{Z}^* \cap \mathcal{Z}_\Box| \leq 4c \cdot |\mathcal{Z}^*|.
\end{equation*}
In this way, we reduce the instance $(S,\mathcal{Z})$ to the instances $(S_\Box,\mathcal{Z}_\Box)$ for $\Box \in \varPsi$.
It seems that solving each $(S_\Box,\mathcal{Z}_\Box)$ is still a unit-square set cover problem.
However, it is in fact a quadrant set cover problem.
Indeed, a cell $\Box$ is a square of side-length 1, hence the intersection of any unit square $Z$ and $\Box$ is equal to the intersection of a quadrant $Q_Z$ and $\Box$, where $Q_Z$ is the quadrant obtained by removing the two edges of $Z$ outside $\Box$.
Since the points in $S_\Box$ are all contained in $\Box$, the quadrant $Q_Z \in \mathcal{Z}_\Box$ covers the same set of points in $S_\Box$ as the unit square $Z$.
Therefore, $(S_\Box,\mathcal{Z}_\Box)$ is equivalent to the quadrant set cover instance $(S_\Box,\{Q_Z: Z \in \mathcal{Z}_\Box\})$.

The above observation gives us a reduction from dynamic unit-square set cover to dynamic quadrant set cover.
Suppose we have a $c$-approximate dynamic quadrant set cover data structure $\mathcal{D}$ with $f(n)$ amortized update time and $\widetilde{O}(n_0)$ construction time, where $f$ is an increasing function.
We are going to design a $4c$-approximate dynamic unit-square set cover data structure $\mathcal{D}'$.
Let $(S,\mathcal{Z})$ be a dynamic unit-square set cover instance.
When constructing $\mathcal{D}'$, we build the grid $\varGamma$ described above and compute $(S_\Box,\mathcal{Z}_\Box)$ for all nonempty cells.
The grid $\varGamma$ is always fixed, while the instances $(S_\Box,\mathcal{Z}_\Box)$ change along with the change of $S$ and $\mathcal{Z}$.
As argued before, we can regard each $(S_\Box,\mathcal{Z}_\Box)$ as a dynamic quadrant set cover instance.
We then construct a data structure $\mathcal{D}_\Box$ for each $(S_\Box,\mathcal{Z}_\Box)$, which is the dynamic quadrant set cover data structure $\mathcal{D}$ built on $(S_\Box,\mathcal{Z}_\Box)$.
If $(S_\Box,\mathcal{Z}_\Box)$ has a set cover, then $\mathcal{D}_\Box$ maintains a $c$-approximate optimal set cover for $(S_\Box,\mathcal{Z}_\Box)$, which we denote by $\mathcal{Z}_\Box^*$.
Furthermore, we also need two data structures to maintain the nonempty and truly nonempty cells of $\varGamma$, respectively.
We store the nonempty cells in a (dynamic) binary search tree $\mathcal{T}_1$ (by fixing some total order of the cells of $\varGamma$).
Each node $\mathbf{u} \in \mathcal{T}_1$ has a pointer pointing to the data structure $\mathcal{D}_\Box$ where $\Box$ is the cell corresponding to $\mathbf{u}$.
Similarly, we store all the truly nonempty cells in a (dynamic) binary search tree $\mathcal{T}_2$, with a pointer at each node pointing to the data structure $\mathcal{D}_\Box$ for the corresponding cell $\Box$.
The data structures $\mathcal{D}_\Box$ and the binary search trees $\mathcal{T}_1,\mathcal{T}_2$ form our dynamic unit-square set cover data structure $\mathcal{D}'$.
Besides this, $\mathcal{D}'$ also maintains two numbers $m$ and $s$, where $m$ is the number of the truly nonempty cells $\Box$ such that $(S_\Box,\mathcal{Z}_\Box)$ has no set cover and $s$ is the sum of $|\mathcal{Z}_\Box^*|$ over all truly nonempty cells $\Box$ such that $(S_\Box,\mathcal{Z}_\Box)$ has a set cover.
The construction time of $\mathcal{D}'$ is $\widetilde{O}(n_0)$.
Indeed, finding the nonempty (and truly nonempty) cells and building the binary search trees $\mathcal{T}_1,\mathcal{T}_2$ can be easily done in $\widetilde{O}(n_0)$ time.
Furthermore, the sum of the sizes of all $(S_\Box,\mathcal{Z}_\Box)$ is $\widetilde{O}(n_0)$, since each point in $S$ lies in exactly one cell and each unit-square in $\mathcal{Z}$ intersects at most four cells.
Therefore, computing the instances $(S_\Box,\mathcal{Z}_\Box)$ can be done in $\widetilde{O}(n_0)$ time.
The time for building each data structure $\mathcal{D}_\Box$ is near-linear in the size of $(S_\Box,\mathcal{Z}_\Box)$, thus building all $\mathcal{D}_\Box$ takes $\widetilde{O}(n_0)$ time.

Next, we consider how to handle updates on $(S,\mathcal{Z})$.
For an update on $(S,\mathcal{Z})$, there are (at most) four cells $\Box$ for which $(S_\Box,\mathcal{Z}_\Box)$ changes, and we call them \textit{involved} cells.
For each involved cell $\Box$, we do the following.
First, if $\Box$ is previously empty (resp., not truly nonempty) and becomes nonempty (resp., truly nonempty) after the update, we insert a new node to the binary search tree $\mathcal{T}_1$ (resp., $\mathcal{T}_2$) corresponding to $\Box$.
Conversely, if $\Box$ is previously nonempty (resp., truly nonempty) and becomes empty (resp., not truly nonempty) after the update, we delete the node corresponding to $\Box$ from $\mathcal{T}_1$ (resp., $\mathcal{T}_2$).
After this, we need to update the data structure $\mathcal{D}_\Box$.
If $\Box$ is previously empty and becomes nonempty after the update, we create a new data structure $\mathcal{D}_\Box$ for $\Box$, otherwise we find the data structure $\mathcal{D}_\Box$ (which can be done by searching the node $\mathbf{u}$ corresponding to $\Box$ in $\mathcal{T}_1$ and using the pointer at $\mathbf{u}$) and update it.
We observe that all the above work can be done in $\widetilde{O}(f(n))$ amortized time.
Indeed, updating the binary search trees $\mathcal{T}_1$ and $\mathcal{T}_2$ takes logarithmic time.
By assumption, the amortized update time of $\mathcal{D}_\Box$ is $f(n_\Box)$, where $n_\Box$ is the size of the current $(S_\Box,\mathcal{Z}_\Box)$, which is smaller than or equal to $f(n)$ as $f$ is an increasing function.
Since there are at most four involved cells, the amortized update time of $\mathcal{D}'$ is $\widetilde{O}(f(n))$.

At any point, if $m > 0$, then $\mathcal{D}'$ directly decides that the current $(S,\mathcal{Z})$ has no set cover.
Otherwise, $\mathcal{D}'$ uses $\mathcal{Z}_\text{appx} = \bigsqcup_{\Box \in \varPsi} \mathcal{Z}_\Box^*$ as the set cover solution for the current $(S,\mathcal{Z})$, where $\varPsi$ is the set of the truly nonempty cells.
By our previous observation, $\mathcal{Z}_\text{appx}$ is a $4c$-approximate optimal set cover for $(S,\mathcal{Z})$.
To support the desired queries for $\mathcal{Z}_\text{appx}$ is quite easy.
First, the size of $\mathcal{Z}_\text{appx}$ is just the number $s$ maintained by $\mathcal{D}'$, hence the size query for $\mathcal{Z}_\text{appx}$ can be answered in $O(1)$ time.
To answer a membership query, let $Z \in \mathcal{Z}$ be the query element.
We want to know the multiplicity of $Z$ in $\mathcal{Z}_\text{appx}$.
There are at most four cells of $\varGamma$ that intersect $Z$.
For each such cell $\Box$, we search in the binary search tree $\mathcal{T}_2$.
If we cannot find a node of $\mathcal{T}_2$ corresponding to $\Box$, then $\Box \notin \varPsi$.
Otherwise, the node $\mathbf{u} \in \mathcal{T}_2$ corresponding to $\Box$ gives us a pointer to the data structure $\mathcal{D}_\Box$, which can report the multiplicity of $Z$ in $\mathcal{Z}_\Box^*$ in $O(\log |\mathcal{Z}_\Box^*|)$ time.
The sum of all these multiplicities is just the multiplicity of $Z$ in $\mathcal{Z}_\text{appx}$.
The time cost for answer the membership query is $O(\log |\varPsi| + \log |\mathcal{Z}_\text{appx}|)$, since the size of $\mathcal{T}_2$ is $|\varPsi|$.
Note that $|\varPsi| \leq |\mathcal{Z}_\text{appx}|$, since $\mathcal{Z}_\Box^* \neq \emptyset$ for all $\Box \in \varPsi$ (for otherwise $\mathcal{Z}_\Box^*$ is not a set cover for $S_\Box$).
Thus, a membership query takes $O(\log |\mathcal{Z}_\text{appx}|)$ time.
To answer the reporting query, we do a traversal in $\mathcal{T}_2$.
For each node $\mathbf{u} \in \mathcal{T}_2$, we use the data structure $\mathcal{D}_\Box$ to report the elements in $\mathcal{Z}_\Box^*$ in $O(|\mathcal{Z}_\Box^*|)$ time, where $\Box \in \varPsi$ is the truly nonempty cell corresponding to $\mathbf{u}$.
The time cost is $O(|\varPsi| + |\mathcal{Z}_\text{appx}|)$.
Again, since $|\varPsi| \leq |\mathcal{Z}_\text{appx}|$, the reporting query can be answered in $O(|\mathcal{Z}_\text{appx}|)$ time.
To summarize, $\mathcal{D}'$ is a $4c$-approximate dynamic unit-square set cover data structure with $\widetilde{O}(f(n))$ amortized update time and $\widetilde{O}(n_0)$ construction time.

Finally, we show how to reduce dynamic quadrant hitting set to dynamic quadrant set cover.
Let us first consider a static quadrant hitting set instance $(S,\mathcal{Q})$.
There are four types of quadrants in $\mathcal{Q}$, southeast, southwest, northeast, and northwest; we denote by $\mathcal{Q}^\text{SE},\mathcal{Q}^\text{SW},\mathcal{Q}^\text{NE},\mathcal{Q}^\text{NW} \subseteq \mathcal{Q}$ the sub-collections consisting of these types of quadrants, respectively.
Let $S_1,S_2,S_3,S_4$ be $c$-approximate optimal hitting sets for the instances $(S,\mathcal{Q}^\text{SE}),(S,\mathcal{Q}^\text{SE}),(S,\mathcal{Q}^\text{SE}),(S,\mathcal{Q}^\text{SE})$, respectively.
We claim that $\bigcup_{i=1}^4 S_i$ is a $4c$-approximate hitting set for $(S,\mathcal{Q})$.
Indeed, we have $|S_1| \leq c \cdot \mathsf{opt}^{SE} \leq c \cdot \mathsf{opt}$, where $\mathsf{opt}^{SE}$ is the optimum of $(S,\mathcal{Q}^\text{SE})$ and $\mathsf{opt}$ is the optimum of $(S,\mathcal{Q})$.
Similarly, we can show that $|S_i| \leq c \cdot \mathsf{opt}$ for all $i \in \{1,\dots,4\}$.
Thus, $|\bigcup_{i=1}^4 S_i| = \sum_{i=1}^4 |S_i| \leq 4c \cdot \mathsf{opt}$.
Therefore, if we can solve dynamic quadrant hitting set for a single type of quadrants with an approximation factor $c$, we are able to solve the general dynamic quadrant hitting set problem with an approximation factor $4c$.

It is easy to see that dynamic quadrant hitting set for a single type of quadrants is equivalent to dynamic quadrant set cover for a single type of quadrants.
To see this, let us consider southeast quadrants.
Note that a point $a$ is contained in a southeast quadrant $Q$ iff the northwest quadrant whose vertex is $a$ (denoted by $Q_a$) contains the vertex $v_Q$ of $Q$.
Therefore, given a quadrant hitting set instance $(S,\mathcal{Q}^\text{SE})$ where $\mathcal{Q}^\text{SE}$ only contains southeast quadrants, if we replace each point $a \in S$ with the northwest quadrant $Q_a$ and replace each southeast quadrant $Q \in \mathcal{Q}^\text{SE}$ with its vertex $v_Q$, then $(S,\mathcal{Q}^\text{SE})$ is reduced to an equivalent quadrant set cover instance $(\{v_Q: Q \in \mathcal{Q}^\text{SE}\},\{Q_a: a \in S\})$.
To summarize, if we can solve dynamic quadrant set cover with an approximation factor $c$, then we can also solve dynamic quadrant hitting set for a single type of quadrants with an approximation factor $c$ and in turn solve the general dynamic quadrant hitting set problem with an approximation factor $4c$.
This completes the proof.

\subsection{Proof of Lemma~\ref{lem-noansquad}}
Our algorithm makes a no-solution decision iff $(S_{i,j},\mathcal{Q}_{i,j})$ has no set cover for some $(i,j) \in P'$.
If $(S_{i,j},\mathcal{Q}_{i,j})$ has a set cover for all $(i,j) \in P'$, then $(S,\mathcal{Q})$ clearly has a set cover, because the cells $\Box_{i,j}$ for $(i,j) \in P$ are coverable.
Suppose $(S_{i,j},\mathcal{Q}_{i,j})$ has no set cover for some $(i,j) \in P'$.
Then there exists a point $a \in S_{i,j}$ that is not covered by any quadrant in $\mathcal{Q}_{i,j}$.
We claim that $a$ is not covered by any quadrant in $\mathcal{Q}$.
Consider a quadrant $Q \in \mathcal{Q}$.
If $Q$ does not intersect $\Box_{i,j}$, then $a \notin Q$.
Otherwise, $Q$ must partially intersect $\Box_{i,j}$, because $\Box_{i,j}$ is uncoverable.
If the vertex of $Q$ lies in $\Box_{i,j}$, then $Q \in \mathcal{Q}_{i,j}$ and thus $a \notin Q$.
The remaining case is that $Q$ partially intersects $\Box_{i,j}$ and contains one edge of $\Box_{i,j}$.
Without loss of generality, assume $Q$ left intersects $\Box_{i,j}$.
The rightmost quadrant $Q'$ that left intersects $\Box_{i,j}$ is contained in $\mathcal{Q}_{i,j}$.
Since $Q \cap \Box_{i,j} \subseteq Q' \cap \Box_{i,j}$ and $a \notin Q'$, we have $a \notin Q$.
It follows that $a$ is not covered by $\mathcal{Q}$ and hence $(S,\mathcal{Q})$ has no set cover.
Therefore, the no-solution decision of our data structure is correct.

\subsection{Proof of Lemma~\ref{lem-quadopt}}
Fix $(i,j) \in P'$.
Let $\mathcal{Q}^\sim \subseteq \mathcal{Q}_\textnormal{opt}$ consist of all quadrants in $\mathcal{Q}_\textnormal{opt}$ that intersect $\Box_{i,j}$.
Then $\mathcal{Q}^\sim$ covers all points in $S_{i,j}$, because the points in $S_{i,j}$ can only be covered by quadrants that intersect $\Box_{i,j}$.
Since $\Box_{i,j}$ is uncoverable by $\mathcal{Q}$, all quadrants in $\mathcal{Q}^\sim$ must partially intersect $\Box_{i,j}$.
It follows that a quadrant in $\mathcal{Q}^\sim$ either has its vertex in $\Box_{i,j}$ or contains one edge of $\Box_{i,j}$.
We claim that if a point $a \in \Box_{i,j}$ is covered by $\mathcal{Q}^\sim$, then $a$ is covered by either $\mathcal{Q}^\sim \cap \mathcal{Q}_{i,j}'$ or a special quadrant in $\mathcal{Q}_{i,j}$.
Let $Q \in \mathcal{Q}^\sim$ be a quadrant such that $a \in Q$.
If $Q \in \mathcal{Q}_{i,j}'$, we are done.
Otherwise, $Q$ must contain one edge of $\Box_{i,j}$, say the left edge.
In this case, $\mathcal{Q}_{i,j}$ should contain a special quadrant $Q_0$, which is the rightmost quadrant that left intersects $\Box_{i,j}$.
It is clear that $Q \cap \Box_{i,j} \subseteq Q_0 \cap \Box_{i,j}$, which implies that $a \in Q_0$.
From this claim, it directly follows that all points in $S_{i,j}$ are covered by $(\mathcal{Q}^\sim \cap \mathcal{Q}_{i,j}') \cup \mathcal{Q}_{i,j}''$, where $\mathcal{Q}_{i,j}'' \subseteq \mathcal{Q}_{i,j}$ consists of the special quadrants.
Since $(\mathcal{Q}^\sim \cap \mathcal{Q}_{i,j}') \cup \mathcal{Q}_{i,j}'' \subseteq \mathcal{Q}_{i,j}$ and $(\mathcal{Q}^\sim \cap \mathcal{Q}_{i,j}') \cup \mathcal{Q}_{i,j}''$ covers $S_{i,j}$, we have $|(\mathcal{Q}^\sim \cap \mathcal{Q}_{i,j}') \cup \mathcal{Q}_{i,j}''| \leq \mathsf{opt}_{i,j}$.
Note that $\mathcal{Q}^\sim \cap \mathcal{Q}_{i,j}' = \mathcal{Q}_\mathsf{opt} \cap \mathcal{Q}_{i,j}'$ and $|\mathcal{Q}_{i,j}''| \leq 4$.
So we have
\begin{equation*}
    |\mathcal{Q}_\textnormal{opt} \cap \mathcal{Q}_{i,j}'| + 4 = |\mathcal{Q}^\sim \cap \mathcal{Q}_{i,j}'| + 4 \geq |(\mathcal{Q}^\sim \cap \mathcal{Q}_{i,j}') \cup \mathcal{Q}_{i,j}''| \geq \mathsf{opt}_{i,j}.
\end{equation*}
Equation~\ref{eq-quadij} then follows from the equation
\begin{equation*}
    \mathsf{opt} = |\mathcal{Q}_\textnormal{opt}| = \sum_{i=1}^r \sum_{j=1}^r |\mathcal{Q}_\textnormal{opt} \cap \mathcal{Q}_{i,j}'| \geq \sum_{(i,j) \in P'} |\mathcal{Q}_\textnormal{opt} \cap \mathcal{Q}_{i,j}'|.
\end{equation*}

\subsection{Proof of Lemma~\ref{lem-bound}}
It suffices to show $\sum_{k=1}^r (|S_{i,k}| + |\mathcal{Q}_{i,k}|) = O(f(n_0,\varepsilon)+r)$ for all $i \in \{1,\dots,r\}$ at any time in the first period.
Fix $i \in \{1,\dots,r\}$.
Initially, the $i$-th row of the grid contains $O(f(n_0,\varepsilon))$ points in $S$ and $O(f(n_0,\varepsilon))$ vertices of the quadrants in $\mathcal{Q}$.
Since the period only consists of $f(n_0,\varepsilon)$ operations, the $i$-th row always contains $O(f(n_0,\varepsilon))$ points in $S$ and $O(f(n_0,\varepsilon))$ vertices of the quadrants in $\mathcal{Q}$ during the entire period.
Thus, at any time in the period, $\sum_{k=1}^r |S_{i,k}| = O(f(n_0,\varepsilon))$ and the total number of the non-special quadrants in $\mathcal{Q}_{i,1},\dots,\mathcal{Q}_{i,r}$ is bounded by $O(f(n_0,\varepsilon))$.
The number of the special quadrants in $\mathcal{Q}_{i,1},\dots,\mathcal{Q}_{i,r}$ is clearly $O(r)$, since each $\mathcal{Q}_{i,k}$ contains at most four special quadrants.
Thus, the equation $\sum_{k=1}^r (|S_{i,k}| + |\mathcal{Q}_{i,k}|) = O(f(n_0,\varepsilon)+r)$ holds at any time in the period.

\subsection{Proof of Lemma~\ref{lem-qans}}
We first prove an invariant of our algorithm: whenever $a$ is updated in the algorithm, the following two properties hold.
\begin{enumerate}
    \item $a$ is set to be the leftmost point in $S \cap U^\text{SE}$ that is not covered by $\mathcal{Q}_\text{ans}$.
    \item Any point in $S \cap U^\text{SE}$ above $a$ is not covered by $\mathcal{Q}_\text{ans}$.
\end{enumerate}
The first update of $a$ happens in Step~2.
At this time, $\mathcal{Q}_\text{ans} = \{\varPhi_\rightarrow(\sigma,\mathcal{Q}^\text{SW}),\varPhi_\uparrow(\sigma,\mathcal{Q}^\text{SE})\}$, $\tilde{y} = y(\varPhi_\uparrow(\sigma,\mathcal{Q}^\text{SE}))$, and $a$ is set to be $\phi(\tilde{y})$, i.e., the leftmost point in $S \cap U^\text{SE}$ whose $y$-coordinate is larger than $\tilde{y}$.
In order to prove the invariant, we only need to show that \textbf{(1)} any point in $S \cap U^\text{SE}$ strictly to the left of $a$ is covered by $\mathcal{Q}_\text{ans}$ and \textbf{(2)} any point in $S \cap U^\text{SE}$ above $a$ (including $a$ itself) is not covered by $\mathcal{Q}_\text{ans}$.
Let $b \in S \cap U^\text{SE}$ be a point strictly to the left of $a$.
First, the $y$-coordinate of $b$ is at most $\tilde{y}$, since $a$ is the leftmost point in $S \cap U^\text{SE}$ whose $y$-coordinate is larger than $\tilde{y}$.
Therefore, if $b$ is to the right of $\sigma$, then $b \in \varPhi_\uparrow(\sigma,\mathcal{Q}^\text{SE})$.
Now assume $b$ is strictly to the left of $\sigma$.
Because $\sigma$ is on the boundary $\gamma$ of $U^\text{SE}$, $b$ must be below $\sigma$.
Thus, $b \in \varPhi_\rightarrow(\sigma,\mathcal{Q}^\text{SW})$.
We see that $\mathcal{Q}_\text{ans}$ covers $b$.
Let $a' \in S \cap U^\text{SE}$ be a point above $a$.
Then the $y$-coordinate of $a'$ is greater than $\tilde{y}$, which implies $a' \notin \varPhi_\uparrow(\sigma,\mathcal{Q}^\text{SE})$.
Furthermore, $a'$ must be strictly to the right of $\sigma$.
Indeed, if $a'$ is to the left of $\sigma$, then $a'$ is below $\sigma$ (as argued above) and hence the $y$-coordinate of $a'$ is at most $\tilde{y}$, resulting in a contradiction.
Now we see $a'$ is strictly above $\sigma$ and strictly to the right of $\sigma$.
Note that $\sigma$ is on the boundary of $U^\text{SW}$, which implies $a' \notin U^\text{SW}$ and in particular $a' \notin \varPhi_\rightarrow(\sigma,\mathcal{Q}^\text{SW})$.
So $a'$ is not covered by $\mathcal{Q}_\text{ans}$.
The invariant holds when we update $a$ in Step~2.

In the rest of the procedure, $a$ is only updated in Step~5.
Note that Step~3--5 form a loop in which we include a constant number of quadrants to cover $a$, see Figure~\ref{fig:quad_os_alg}.
It suffices to show that if the two properties hold at the beginning of an iteration of the loop (before Step~3), then they also hold after we update $a$ in Step~5 of that iteration.
Suppose we are at the beginning of an iteration, and the two properties hold.
If $a \in U^\text{NE}$, then we go directly from Step~3 to Step~6 and the algorithm terminates.
Otherwise, we go to Step~4.
Here we need to distinguish two cases: $a \in U^\text{NW}$ and $a \notin U^\text{NW}$.
\smallskip

\begin{figure}
\centering
\includegraphics[width=0.45\textwidth,page=1]{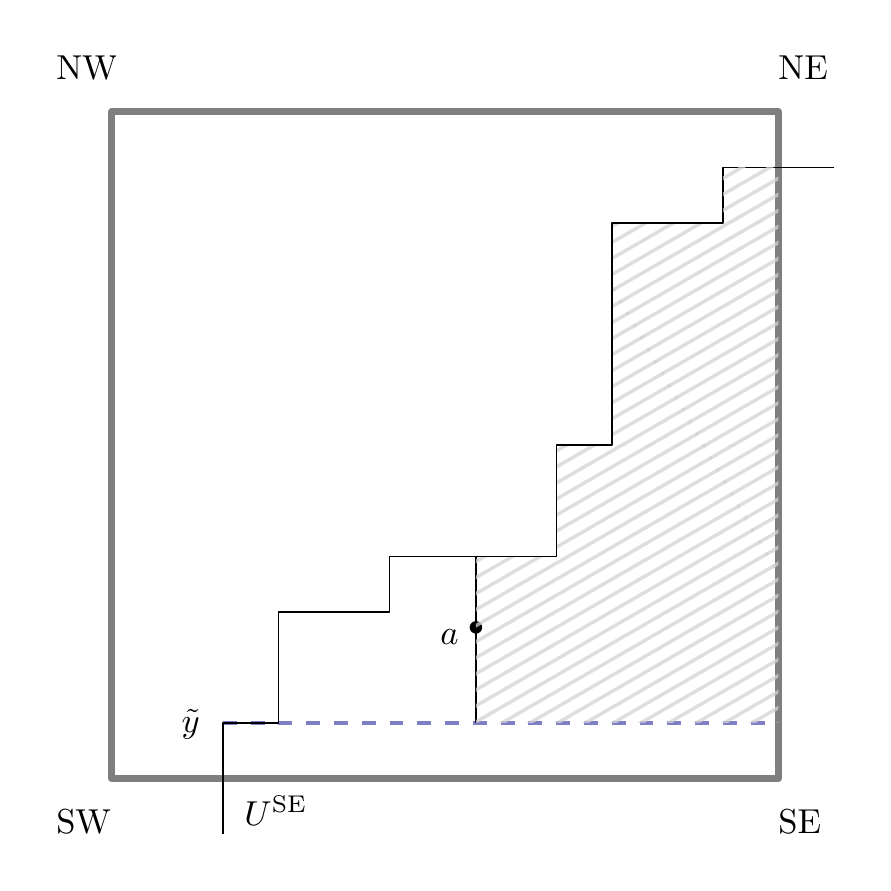}
\includegraphics[width=0.45\textwidth,page=2]{figures/quad_output_sensitive.pdf}\\
\includegraphics[width=0.45\textwidth,page=4]{figures/quad_output_sensitive.pdf}
\includegraphics[width=0.45\textwidth,page=3]{figures/quad_output_sensitive.pdf}
\caption{
\label{fig:quad_os_alg}
Different options for covering $a$, depending on what other quadrant types cover $a$.
The shaded region contains all remaining uncovered points (top-left figure).
If $a \in U^\text{NE}$, then the entire shaded region can be covered by two quadrants (top-right figure).
Otherwise (bottom figures), some progress can be made using up to three quadrants from $\mathcal{Q}^\text{SE}$ and $\mathcal{Q}^\text{NW}$.
After choosing these quadrants, the remaining uncovered points of $S \cap U^\text{SE}$ are not covered by any quadrant of $\mathcal{Q}^\text{SE}$, $\mathcal{Q}^\text{NE}$, and $\mathcal{Q}^\text{NW}$ that also covers $a$.
Note that $a \not\in U^\text{SW}$.
}
\end{figure}

\noindent
\textbf{[Case 1]}
We first consider the case $a \notin U^\text{NW}$.
In this case, we only include in $\mathcal{Q}_\text{ans}$ a new quadrant $Q = \varPhi_\uparrow(a,\mathcal{Q}^\text{SE})$.
Note that after including $Q$, $\mathcal{Q}_\text{ans}$ covers all the points in $S \cap U^\text{SE}$ whose $y$-coordinates are at most $y(Q)$.
To see this, consider a point $b \in S \cap U^\text{SE}$ whose $y$-coordinate is at most $y(Q)$.
If $b$ is strictly to the left of $a$, then $b$ is covered by $\mathcal{Q}_\text{ans}$ even before we include $Q$, since $a$ is the leftmost point in $S \cap U^\text{SE}$ that is not covered by $\mathcal{Q}_\text{ans}$ at the beginning of this iteration (by our assumption).
Otherwise, $b$ is to the right of $a$ and is hence covered by $Q$.
On the other hand, we also notice that after including $Q$, $\mathcal{Q}_\text{ans}$ does not cover any the points in $S \cap U^\text{SE}$ whose $y$-coordinates are greater than $y(Q)$.
To see this, consider a point $b \in S \cap U^\text{SE}$ whose $y$-coordinate is greater than $y(Q)$.
Then $b \notin Q$ and $b$ is strictly above $a$ (since $a \in Q$).
By our assumption, at the beginning of this iteration (when $Q$ is not included in $\mathcal{Q}_\text{ans}$), $\mathcal{Q}_\text{ans}$ does not cover any point above $a$, and in particular does not cover $b$.
Since $b \notin Q$, we see that, after we include $Q$ in $\mathcal{Q}_\text{ans}$, the new $\mathcal{Q}_\text{ans}$ still does not cover $b$.
To summarize, the new $\mathcal{Q}_\text{ans}$ contains all points in $S \cap U^\text{SE}$ whose $y$-coordinates are at most $y(Q)$ and no points in $S \cap U^\text{SE}$ whose $y$-coordinates are greater than $y(Q)$.
Therefore, when we set $a$ to be $\phi(y(Q))$ in Step~5, $a$ is the leftmost point in $S \cap U^\text{SE}$ that is not covered by $\mathcal{Q}_\text{ans}$ and any point in $S \cap U^\text{SE}$ above $a$ is not covered by $\mathcal{Q}_\text{ans}$.
\smallskip

\noindent
\textbf{[Case 2]}
Next, we consider the case $a \in U^\text{NW}$.
In this case, we include in $\mathcal{Q}_\text{ans}$ three new quadrants: $\varPhi_\rightarrow(a,\mathcal{Q}^\text{NW})$, $\varPhi_\uparrow(a,\mathcal{Q}^\text{SE})$, and $Q = \varPhi_\uparrow(v,\mathcal{Q}^\text{SE})$ where $v$ is the vertex of $\varPhi_\rightarrow(a,\mathcal{Q}^\text{NW})$.
We first show that after including these three new quadrants, $\mathcal{Q}_\text{ans}$ covers all the points in $S \cap U^\text{SE}$ whose $y$-coordinates are at most $y(Q)$.
Consider a point $b \in S \cap U^\text{SE}$ whose $y$-coordinate is at most $y(Q)$.
If the $x$-coordinate of $b$ is at least $x(Q)$, then $b \in Q$.
Otherwise, $b$ is strictly to the left of $v$ (because $v \in Q$).
Thus, if $b$ is above $v$, then $b \in \varPhi_\rightarrow(a,\mathcal{Q}^\text{NW})$.
It now suffices to consider the case when $b$ is strictly below $v$.
In this case, $b$ is strictly below $a$, since $v$ is the vertex of $\varPhi_\rightarrow(a,\mathcal{Q}^\text{NW})$.
If $b$ is to the right of $a$, then $b \in \varPhi_\uparrow(a,\mathcal{Q}^\text{SE})$.
If $b$ is strictly to the left of $a$, then $b$ is covered by $\mathcal{Q}_\text{ans}$ even before we include the three new quadrants, since $a$ is the leftmost point in $S \cap U^\text{SE}$ that is not covered by $\mathcal{Q}_\text{ans}$ at the beginning of this iteration (by our assumption).
In any case, $b$ is covered by $\mathcal{Q}_\text{ans}$.
On the other hand, we notice that after including these three new quadrants, $\mathcal{Q}_\text{ans}$ does not cover any points in $S \cap U^\text{SE}$ whose $y$-coordinates are greater than $y(Q)$.
To see this, consider a point $b \in S \cap U^\text{SE}$ whose $y$-coordinate is greater than $y(Q)$.
We first establish some obvious facts.
Since the NW quadrant $\varPhi_\rightarrow(a,\mathcal{Q}^\text{NW})$ and the SE quadrant $\varPhi_\uparrow(a,\mathcal{Q}^\text{SE})$ both contain $a$, we have $v \in \varPhi_\uparrow(a,\mathcal{Q}^\text{SE})$.
It follows that $y(\varPhi_\uparrow(a,\mathcal{Q}^\text{SE})) \leq y(Q)$, by the definition of $Q$.
Therefore, $b$ is strictly above $a$, since the $y$-coordinate of $a$ is at most $y(\varPhi_\uparrow(a,\mathcal{Q}^\text{SE}))$.
By our assumption, at the beginning of this iteration (when the three new quadrants are not included in $\mathcal{Q}_\text{ans}$), $\mathcal{Q}_\text{ans}$ does not cover any point above $a$, and in particular does not cover $b$.
Also, the fact $y(\varPhi_\uparrow(a,\mathcal{Q}^\text{SE})) \leq y(Q)$ implies $b \notin \varPhi_\uparrow(a,\mathcal{Q}^\text{SE})$.
To see $b \notin \varPhi_\rightarrow(a,\mathcal{Q}^\text{NW})$, let $Q' \in \mathcal{Q}^\text{SE}$ be a quadrant contains $b$ (such a quadrant exists as $b \in U^\text{SE}$).
Then $y(Q') > y(Q)$.
By the definition of $Q$, we have $v \notin Q'$ and thus $Q' \cap \varPhi_\rightarrow(a,\mathcal{Q}^\text{NW}) = \emptyset$, which implies $b \notin \varPhi_\rightarrow(a,\mathcal{Q}^\text{NW})$.
Finally, it is clear that $b \notin Q$.
So after we include the three new quadrants in $\mathcal{Q}_\text{ans}$, the new $\mathcal{Q}_\text{ans}$ still does not cover $b$.
To summarize, the new $\mathcal{Q}_\text{ans}$ contains all points in $S \cap U^\text{SE}$ whose $y$-coordinates are at most $y(Q)$ and no points in $S \cap U^\text{SE}$ whose $y$-coordinates are greater than $y(Q)$.
Therefore, when we set $a$ to be $\phi(y(Q))$ in Step~5, $a$ is the leftmost point in $S \cap U^\text{SE}$ that is not covered by $\mathcal{Q}_\text{ans}$ and any point in $S \cap U^\text{SE}$ above $a$ is not covered by $\mathcal{Q}_\text{ans}$.
\smallskip

Combining the discussions for the two cases, the invariant is proved.
With the invariant in hand, we now prove the lemma.
For convenience, we denote by $a_i$ the point $a$ in the $i$-th iteration (before the update of Step~5) of the loop Step~3--5.
The invariant of our algorithm guarantees that $a_i$ is the leftmost point in $S \cap U^\text{SE}$ that is not covered by $\mathcal{Q}_\text{ans}$ at the beginning of the $i$-th iteration.
Furthermore, in the $i$-th iteration, we always add the quadrant $\varPhi_\uparrow(a_i,\mathcal{Q}^\text{SE})$ to $\mathcal{Q}_\text{ans}$, hence $a_i$ is covered by $\mathcal{Q}_\text{ans}$ at the end of the $i$-th iteration.
This implies that our algorithm always terminates, because $\mathcal{Q}_\text{ans}$ covers at least one more point in $S \cap U^\text{SE}$ in each iteration.

We first show (the eventual) $\mathcal{Q}_\text{ans}$ covers all points in $S \cap U^\text{SE}$.
If we go to Step~6 from Step~1, then $S \cap U^\text{SE} = \emptyset$ and nothing needs to be proved.
If we go to Step~6 from Step~2, we know that the $y$-coordinates of all points in $S \cap U^\text{SE}$ are at most $y(\varPhi_\uparrow(\sigma,\mathcal{Q}^\text{SE}))$.
Consider a point $a \in S \cap U^\text{SE}$.
If $a$ is to the right of $\sigma$, then $a \in \varPhi_\uparrow(\sigma,\mathcal{Q}^\text{SE})$, as the $y$-coordinate of $a$ is at most $y(\varPhi_\uparrow(\sigma,\mathcal{Q}^\text{SE}))$.
If $a$ is to the left of $\sigma$, then $a \in \varPhi_\rightarrow(\sigma,\mathcal{Q}^\text{SW})$, because any point in $U^\text{SE}$ to the left of $\sigma$ is covered by $\varPhi_\rightarrow(\sigma,\mathcal{Q}^\text{SW})$.
Thus, $a$ is covered by $\mathcal{Q}_\text{ans}$, implying that all points in $S \cap U^\text{SE}$ are covered by $\mathcal{Q}_\text{ans}$.
The last case is that we go to Step~6 from Step~3.
By the invariant, we know that $a$ is the leftmost point in $S \cap U^\text{SE}$ that is not covered by $\mathcal{Q}_\text{ans}$ just before Step~3.
When we include the two quadrants $\varPhi_\uparrow(a,\mathcal{Q}^\text{NE})$ and $\varPhi_\uparrow(a,\mathcal{Q}^\text{SE})$ in $\mathcal{Q}_\text{ans}$, any point to the right of $a$ is covered by $\mathcal{Q}_\text{ans}$.
Thus, all points in $S \cap U^\text{SE}$ are covered by $\mathcal{Q}_\text{ans}$ when we go from Step~3 to Step~6.
The remaining case is that we go to Step~6 from Step~5.
This case happens only when $\phi(\tilde{y})$ does not exist, or equivalently, the $y$-coordinate of every point in $S \cap U^\text{SE}$ is at most $y(Q)$.
Recall that when proving the invariant, we showed that after we include $Q$ in $\mathcal{Q}_\text{ans}$ in Step~5, $\mathcal{Q}_\text{ans}$ covers all the points in $S \cap U^\text{SE}$ whose $y$-coordinates are at most $y(Q)$.
Hence, all points in $S \cap U^\text{SE}$ are covered by $\mathcal{Q}_\text{ans}$ when we go from Step~5 to Step~6.

We then show that $|\mathcal{Q}_\text{ans}| = O(\mathsf{opt}^\text{SE})$.
To this end, we notice that in each iteration of the loop Step~3--5, we include in $\mathcal{Q}_\text{ans}$ a constant number of quadrants.
Suppose we do $k$ iterations in total during the algorithm.
It suffices to show that $k = O(\mathsf{opt}^\text{SE})$.
We claim that any quadrant $Q \in \mathcal{Q}$ contains at most one of $a_1,\dots,a_k$.
First, we observe that any quadrant in $\mathcal{Q}^\text{NE}$ may only contain $a_k$.
Indeed, if $a_i \in U^\text{SE}$ for some $i<k$, then in the $i$-th iteration, the algorithm goes from Step~3 to Step~6 and terminates.
Next, we show that any quadrant in $\mathcal{Q}^\text{SW}$ does not contain $a_i$ (or equivalently, $a_i \notin U^\text{SW}$) for all $i \in \{1,\dots,k\}$.
Let $i \in \{1,\dots,k\}$.
By the invariant we proved before, $a_i$ is the leftmost point in $S \cap U^\text{SE}$ that is not covered by $\mathcal{Q}_\text{ans}$ at the beginning of the $i$-th iteration.
Since we include $\varPhi_\rightarrow(\sigma,\mathcal{Q}^\text{SW})$ and $\varPhi_\uparrow(\sigma,\mathcal{Q}^\text{SE})$ in $\mathcal{Q}_\text{ans}$ Step~2, we know that $a_i \notin \varPhi_\rightarrow(\sigma,\mathcal{Q}^\text{SW})$ and $a_i \notin \varPhi_\uparrow(\sigma,\mathcal{Q}^\text{SE})$.
Thus, $a_i$ must be strictly above $\sigma$, since any point below $\sigma$ is covered by either $\varPhi_\rightarrow(\sigma,\mathcal{Q}^\text{SW})$ or $\varPhi_\uparrow(\sigma,\mathcal{Q}^\text{SE})$.
It follows that $a_i$ is to the right of $\sigma$, because $a_i \in U^\text{SE}$ and $\sigma$ is on the boundary $\gamma$ of $U^\text{SE}$.
In fact, $a_i$ is strictly to the right of $\sigma$, since any point in $S \cap U^\text{SE}$ that has the same $x$-coordinate as $\sigma$ is covered by $\varPhi_\uparrow(\sigma,\mathcal{Q}^\text{SE})$.
Because $\sigma$ is also on the boundary of $U^\text{SW}$, given the fact that $a_i$ is strictly above $\sigma$ and strictly to the right of $\sigma$, we see that $a_i \notin U^\text{SW}$.

Now it suffices to verify that any quadrant in $\mathcal{Q}^\text{SE}$ or $\mathcal{Q}^\text{NW}$ contains at most one of $a_1,\dots,a_k$.
Let $i,j \in \{1,\dots,k\}$ such that $i<j$.
We want to show that no quadrant in $\mathcal{Q}^\text{SE}$ or $\mathcal{Q}^\text{NW}$ contains both $a_i$ and $a_j$.
Since $a_j$ is not covered by $\mathcal{Q}_\text{ans}$ at the beginning of the $j$-th iteration, it is also not covered by $\mathcal{Q}_\text{ans}$ at any point before the $j$-th iteration and in particular, at the beginning of the $i$-th iteration.
This implies that $a_j$ is to the right of $a_i$.
In the $i$-th iteration, we always add the quadrant $\varPhi_\uparrow(a_i,\mathcal{Q}^\text{SE})$ to $\mathcal{Q}_\text{ans}$.
We have $a_j \notin \varPhi_\uparrow(a_i,\mathcal{Q}^\text{SE})$, since $a_j$ is not covered by $\mathcal{Q}_\text{ans}$ at the end of the $i$-th iteration.
Thus, $a_j$ is strictly above $a_i$.
Now let $Q \in \mathcal{Q}^\text{SE}$ be a quadrant that contains $a_i$.
Note that a point to the right of $a_i$ is contained in $Q$ only if it is contained in $\varPhi_\uparrow(a_i,\mathcal{Q}^\text{SE})$, simply by the definition of $\varPhi_\uparrow$.
Because $a_j \notin \varPhi_\uparrow(a_i,\mathcal{Q}^\text{SE})$, we have $a_j \notin Q$.
Consequently, no quadrant in $\mathcal{Q}^\text{SE}$ contains both $a_i$ and $a_j$.
Next, we show that no quadrant in $\mathcal{Q}^\text{NW}$ contains both $a_i$ and $a_j$.
If $a_i \notin U^\text{NW}$, we are done.
So assume $a_i \in U^\text{NW}$.
In this case, we add the quadrant $\varPhi_\rightarrow(a_i,\mathcal{Q}^\text{NW})$ to $\mathcal{Q}_\text{ans}$ in Step~4 of the $i$-th iteration.
Again, we have $a_j \notin \varPhi_\rightarrow(a_i,\mathcal{Q}^\text{NW})$ for $a_j$ is not covered by $\mathcal{Q}_\text{ans}$ at the end of the $i$-th iteration.
Let $Q \in \mathcal{Q}^\text{NW}$ be a quadrant that contains $a_i$.
Note that a point above $a_i$ is contained in $Q$ only if it is contained in $\varPhi_\rightarrow(a_i,\mathcal{Q}^\text{NW})$, simply by the definition of $\varPhi_\rightarrow$.
Because $a_j \notin \varPhi_\rightarrow(a_i,\mathcal{Q}^\text{NW})$, we have $a_j \notin Q$.
Consequently, no quadrant in $\mathcal{Q}^\text{NW}$ contains both $a_i$ and $a_j$.

To summarize, we showed that any quadrant $Q \in \mathcal{Q}$ contains at most one of $a_1,\dots,a_k$.
Therefore, covering all of $a_1,\dots,a_k$ requires at least $k$ quadrants in $\mathcal{Q}$.
This implies $k \leq \mathsf{opt}^\text{SE}$, because $a_1,\dots,a_k \in S \cap U^\text{SE}$.
As a result, $|\mathcal{Q}_\text{ans}| = O(k) = O(\mathsf{opt}^\text{SE})$.

\subsection{Proof of Lemma~\ref{lem-stab}}
Let $\varPi = (S,\mathcal{R})$ be a dynamic set cover instance with only point updates, and $i \geq 0$ be a number.
We claim that $|\mathsf{opt}_i(\varPi) - \mathsf{opt}_{i+1}(\varPi)| \leq 1$.
Denote by $S_i$ and $S_{i+1}$ be the point set $S$ at the time $i$ and $i+1$.
Also, let $S_i' \subseteq S_i$ and $S_{i+1}' \subseteq S_{i+1}$ consist of the points that are covered by $\mathcal{R}$.
Suppose the $(i+1)$-th operation inserts a point $a$ to $S$, so $S_{i+1} = S_i \cup \{a\}$.
If $a$ is not covered by $\mathcal{R}$, then $S_{i+1}' = S_i'$ and $\mathsf{opt}_{i+1}(\varPi) = \mathsf{opt}_i(\varPi)$.
If $a$ is covered by $\mathcal{R}$, then $S_{i+1}' = S_i' \cup \{a\}$.
Let $R \in \mathcal{R}$ be a range that covers $a$.
An optimal set cover for $(S_i',\mathcal{R})$ together with $R$ is a set cover for $(S_{i+1}',\mathcal{R})$.
It follows that $\mathsf{opt}_i(\varPi) \leq \mathsf{opt}_{i+1}(\varPi) \leq \mathsf{opt}_{i+1}(\varPi)+1$.
The case where the $(i+1)$-th operation deletes a point from $S$ is symmetric.
This shows that dynamic set cover in the partially dynamic setting is stable.
The same argument can also be applied to show that dynamic hitting set in the partially dynamic setting is stable.

Next, we consider the dynamic interval hitting set problem (in the fully dynamic setting).
Let $\varPi = (S,\mathcal{I})$ be a dynamic interval hitting set instance, and $i \geq 0$ be a number.
Denote by $S_i$ and $\mathcal{I}_i$ as the point set $S$ and interval collection $\mathcal{I}$ at the time $i$.
Let $\mathcal{I}_i' \subseteq \mathcal{I}_i$ consist of the intervals that are hit by $S_i$.
We want to show that $|\mathsf{opt}_i(\varPi) - \mathsf{opt}_{i+1}(\varPi)| \leq 1$.
We distinguish two cases.
\smallskip

\noindent
\textbf{[Case 1]} The $(i+1)$-th operation happens on $\mathcal{I}$.
Suppose the $(i+1)$-th operation is an insertion on $\mathcal{I}$, and let $I$ be the interval inserted.
If $I$ is not hit by $S_i$, then $(S_{i+1},\mathcal{I}_{i+1}') = (S_i,\mathcal{I}_i')$ and $\mathsf{opt}_{i+1}(\varPi) = \mathsf{opt}_i(\varPi)$.
If $I$ is hit by $S_i$, then $(S_{i+1},\mathcal{I}_{i+1}') = (S_i,\mathcal{I}_i' \cup \{I\})$.
In this case, we have $\mathsf{opt}_i(\varPi)' \leq \mathsf{opt}_{i+1}(\varPi) \leq \mathsf{opt}_i(\varPi)+1$, since an optimal hitting set for $(S_i,\mathcal{I}_i')$ together with a point hitting $I$ is a hitting set for $(S_{i+1},\mathcal{I}_{i+1}')$.
Thus, $|\mathsf{opt}_i(\varPi) - \mathsf{opt}_{i+1}(\varPi)| \leq 1$.
The case where the $(i+1)$-th operation is a deletion on $\mathcal{I}$ is symmetric.
\smallskip

\noindent
\textbf{[Case 2]} The $(i+1)$-th operation happens on $S$.
Suppose the $(i+1)$-th operation is an insertion on $S$, and let $a$ be the point inserted.
Then $(S_{i+1},\mathcal{I}_{i+1}') = (S_i \cup \{a\},\mathcal{I}_i' \cup \mathcal{J})$ where $\mathcal{J} = \{I \in \mathcal{I}: a \in I\}$.
It is clear that $\mathsf{opt}_{i+1}(\varPi) \leq \mathsf{opt}_i(\varPi) + 1$, because an optimal hitting set for $(S_i,\mathcal{I}_i')$ together with the point $a$ is a hitting set for $(S_{i+1},\mathcal{I}_{i+1}')$.
It suffices to show that $\mathsf{opt}_i(\varPi) \leq \mathsf{opt}_{i+1}(\varPi) + 1$.
Let $S^* \subseteq S_{i+1}$ be an optimal hitting set for $(S_{i+1},\mathcal{I}_{i+1}')$.
If $a \notin S^*$, then $S^*$ is also an optimal hitting set for $(S_i,\mathcal{I}_i')$.
Otherwise, let $a^- \in S_i$ (resp., $a^+ \in S_i$) be the rightmost (resp., leftmost) point to the left (resp., right) of $a$; in the case where $a$ is the leftmost (resp, rightmost) point in $S_i$, then let $a^-$ (resp., $a^+$) be an arbitrary point in $S_i$.
We claim that $(S^* \backslash \{a\}) \cup \{a^-,a^+\}$ is a hitting set for $(S_i,\mathcal{I}_i')$.
Consider an interval $I \in \mathcal{I}_i'$.
If $I \notin \mathcal{J}$, then $I$ is hit by some point in $S^* \backslash \{a\}$.
Otherwise, $I \in \mathcal{J}$, and thus $I$ is hit by $a$.
Since $I \in \mathcal{I}_i'$, $I$ is also hit by some point in $S_i$, say $b$.
If $b$ is to the left of $a$, then $I$ must be hit by $a^-$.
On the other hand, if $b$ is to the right of $a$, $I$ must be hit by $a^+$.
Thus, in any case, $I$ is hit by $(S^* \backslash \{a\}) \cup \{a^-,a^+\}$.
It follows that $|(S^* \backslash \{a\}) \cup \{a^-,a^+\}| = \mathsf{opt}_{i+1}(\varPi) +1 \geq \mathsf{opt}_i(\varPi)$.
The case where the $(i+1)$-th operation is a deletion on $S$ is symmetric.



\subsection{Proof of Lemma~\ref{lem-dsexhs}}
For convenience, we include in $S$ two dummy points $x^- = -\infty$ and $x^+ = +\infty$.
We assume these two dummy points are always in $S$.
Since $x^-$ and $x^+$ do not hit any interval, including them in $S$ is safe.

First, we store $\mathcal{I}$ in a binary search tree $\mathcal{T}_1$ where the key of an interval is its left endpoint.
We augment each node $\mathbf{u} \in \mathcal{T}_1$ with an additional field which stores the interval in the subtree rooted at $\mathbf{u}$ with the leftmost right endpoint.
We notice that using $\mathcal{T}_1$, we can decide in $\widetilde{O}(1)$ time, for two given numbers $a,a^+ \in \mathbb{R}$, whether there is an interval in $\mathcal{I}$ whose both endpoints are in the open interval $(a^-,a^+)$.
Specifically, we first look for the interval in $\mathcal{T}_1$ with the leftmost right endpoint whose key is greater than $a$.
With the augmented fields, this interval can be found in $O(\log n)$ time.
If the two endpoints of this interval are contained $(a,a^+)$, then we return ``yes''; otherwise, no interval in $\mathcal{I}$ has two endpoints in $(a,a^+)$ and we return ``no''.
Clearly, $\mathcal{T}_1$ can be constructed in $\widetilde{O}(\mathcal{I})$ time and dynamized with $\widetilde{O}(1)$ update time, and hence it is basic.

We then store $S$ in a standard range tree $\mathcal{T}_2$.
The points in $S$ are stored at the leaves of $\mathcal{T}_2$.
Each node $\mathbf{u} \in \mathcal{T}_2$ corresponds to a \textit{canonical subset} $S(\mathbf{u})$ of $S$ consisting of the points stored in the subtree rooted at $\mathbf{u}$.
At each internal node $\mathbf{u} \in \mathcal{T}_2$, we store the leftmost and rightmost points in $S(\mathbf{u})$ as well as a separation point $s_\mathbf{u} \in \mathbb{R}$ such that all points in the left (resp., right) subtree of $\mathbf{u}$ are to the left (resp., right) of $s_\mathbf{u}$.
At each leaf $\mathbf{u}$ of $\mathcal{T}$ that does not correspond to the point $x^+$, we maintain an additional field $\sigma(\mathbf{u})$ indicating whether there exists an interval in $\mathcal{I}$ whose both endpoints are in the open interval $(a_\mathbf{u},a_\mathbf{u}^+)$ where $a_\mathbf{u} \in S$ is the point corresponding to $\mathbf{u}$ and $a_\mathbf{u}^+ \in S$ is the leftmost point that is to the right of $a_\mathbf{u}$;
we set $\sigma(\mathbf{u}) = 1$ if such an interval exists and set $\sigma(\mathbf{u}) = 0$ otherwise.
Note that using the binary search tree $\mathcal{T}_1$, $\sigma(\mathbf{u})$ can be computed in $\widetilde{O}(1)$ time for each leaf $\mathbf{u}$ of $\mathcal{T}_2$.
Indeed, we only need to find $a_\mathbf{u}^+$, which can be done in $\widetilde{O}(1)$ time by a walk in $\mathcal{T}_2$, and query $\mathcal{T}_1$ to see whether there is an interval in $\mathcal{I}$ whose both endpoints are in the open interval $(a_\mathbf{u},a_\mathbf{u}^+)$.
Besides, we also maintain a counter $\sigma^*$ that is the total number of the leaves whose $\sigma$-values are equal to 1.
It is easy to see that $(S,\mathcal{I})$ has a hitting set iff $\sigma^* = 0$.

It is clear that $\mathcal{T}_2$ can be constructed in $\widetilde{O}(|S|)$ time.
We show that when $(S,\mathcal{I})$ changes, we can maintain $\mathcal{T}_2$ with the $\sigma$-fields in $\widetilde{O}(1)$ time, by using $\mathcal{T}_1$.
Suppose a point $a \in \mathbb{R}$ is inserted to $S$.
We need to first compute the $\sigma$-value of the leaf corresponding to $a$ using $\mathcal{T}_1$.
Let $b \in S$ be the rightmost point to the left of $a$.
Due to the insertion of $a$, the $\sigma$-value of the leaf $\mathbf{u}$ of $\mathcal{T}_2$ corresponding to $b$ may also change.
So we need to find $\mathbf{u}$, which can be done in $\widetilde{O}(1)$ time by a walk in $\mathcal{T}_2$, and update $\sigma(\mathbf{u})$ using $\mathcal{T}_1$.
The $\sigma$-values of all the other leaves remain unchanged.
A deletion of a point from $S$ is handled similarly.
Now suppose an interval $I$ is inserted to or deleted from $\mathcal{I}$.
We find in $\widetilde{O}(1)$ the leaf $\mathbf{u}$ of $\mathcal{T}_2$ corresponding to the rightmost point in $S$ that is to the left of the left endpoint of $I$.
Note that the insertion/deletion of $I$ does not change the $\sigma$-values of the leaves other than $\mathbf{u}$.
So it suffices to re-compute $\sigma(\mathbf{u})$ using $\mathcal{T}_1$.
The time cost for all the cases above is $\widetilde{O}(1)$.
The rotations of $\mathcal{T}_2$ (for self-balancing) do not change the $\sigma$-fields.
It follows that the counter $\sigma^*$ can also be maintained in $\widetilde{O}(1)$ time.

The dynamic data structure $\mathcal{B}$ in the lemma just consists of (the dynamic versions of) the binary search tree $\mathcal{T}_1$ and the range tree $\mathcal{T}_2$.
The update time of $\mathcal{B}$ is $\widetilde{O}(1)$ time and the construction time is $\widetilde{O}(n_0)$.
To indicate whether the current $(S,\mathcal{I})$ has a hitting set or not, we simply check whether $\sigma^* = 0$ or not.

\subsection{Proof of Lemma~\ref{lem-supleftright}}
We can simply store $S$ in a binary search tree.
Then the rightmost (resp., leftmost) point in $S$ to the left (resp., right) of a given point $a$ can be reported in $\widetilde{O}(1)$ time by searching in the tree.
Clearly, the binary search tree can be constructed in $\widetilde{O}(|S|)$ time and dynamized with $\widetilde{O}(1)$ update time, and thus it is basic.

\subsection{Proof of Lemma~\ref{lem-suphitint}}
We can simply store $S$ in a binary search tree.
Then a point $a \in S$ contained in a given interval $I$ can be reported in $\widetilde{O}(1)$ time by searching in the tree.
Clearly, the binary search tree can be constructed in $\widetilde{O}(|S|)$ time and dynamized with $\widetilde{O}(1)$ update time, and thus it is basic.

\subsection{Proof of Lemma~\ref{lem-inthscorrect}}
We use $\mathsf{opt}'$ to denote the quasi-optimum of the current $(S,\mathcal{I})$.
It suffices to show $|S^*| \leq (1+\varepsilon) \cdot \mathsf{opt}'$ at any time, since we always have $\mathsf{opt}' \leq \mathsf{opt}$.

Initially, $S^*$ is an optimal hitting set for $(S_0,\mathcal{I}_0')$, so we have $|S^*| = \mathsf{opt}'$ at that time.
If the current $S^*$ is obtained by re-computing using the output-sensitive algorithm, then $|S^*| = \mathsf{opt}' = \mathsf{opt}$, as we only do re-computation when the current $(S,\mathcal{I})$ has a hitting set.
Suppose $S^*$ is obtained by local modification.
Consider the last re-computation of $S^*$, and we use $S_1^*$ to denote the $S^*$ at that point and use $\mathsf{opt}_1'$ to denote the quasi-optimum of $(S,\mathcal{I})$ at that point.
Then we have $|S_1^*| = \mathsf{opt}_1'$.
As argued before, $\mathsf{cnt}$ is the number of the operations processed after the last re-computation of $S^*$.
By the stability of dynamic interval hitting set (Lemma~\ref{lem-stab}), we have $|\mathsf{opt}' - \mathsf{opt}_1'| \leq \mathsf{cnt}$, implying $\mathsf{opt}_1' \leq \mathsf{opt}' + \mathsf{cnt}$.
Furthermore, by the fact that the size of $S^*$ either increases by 1 or keeps unchanged after each local modification, we have $|S^*| - |S_1^*| \leq \mathsf{cnt}$.
It follows that
\begin{equation*}
    |S^*| \leq |S_1^*| + \mathsf{cnt} = \mathsf{opt}_1' + \mathsf{cnt} \leq \mathsf{opt}' + 2 \cdot \mathsf{cnt}.
\end{equation*}
Note that $\mathsf{cnt} \leq \varepsilon \cdot \mathsf{opt}_1'/(2+\varepsilon)$, for otherwise we should re-compute $S^*$.
This implies $\mathsf{cnt} \leq (\varepsilon/2) \cdot \mathsf{opt}'$ and hence $|S^*| \leq (1+\varepsilon) \cdot \mathsf{opt}'$.

\subsection{Proof of Lemma~\ref{lem-inthscorrect2}}
Let $(S_i,\mathcal{I}_i)$ denote the instance $(S,\mathcal{I})$ at the time $i$, and $S_i^*$ be the point set $S^*$ at the time $i$.
Define $\mathcal{I}_i' \subseteq \mathcal{I}_i$ as the sub-collection consisting of the intervals that are hit by $S_i$.
Then $(S_i,\mathcal{I}_i')$ always has a hitting set.
Furthermore, if $(S_i,\mathcal{I}_i)$ has a hitting set, then $(S_i,\mathcal{I}_i') = (S_i,\mathcal{I}_i)$.
So it suffices to show that $S_i^*$ is always a hitting set for $(S_i,\mathcal{I}_i')$.
We prove this by induction.
It is clear that $S_0^*$ is a hitting set for $(S_0,\mathcal{I}_0')$.
Assume $S_{i-1}^*$ is a hitting set for $(S_{i-1},\mathcal{I}_{i-1}')$ and we show that $S_i^*$ is a hitting set for $(S_i,\mathcal{I}_i')$.
If $S_i^*$ is obtained by re-computing, then $(S_i,\mathcal{I}_i') = (S_i,\mathcal{I}_i)$, since we only re-compute $S^*$ when the current $(S,\mathcal{I})$ has a hitting set.
In this case, $S_i^*$ is clearly a hitting set for both $(S_i,\mathcal{I}_i')$ and $(S_i,\mathcal{I}_i)$.
So suppose $S_i^*$ is obtained by local modification.

We consider different cases separately according to the $i$-th operation.
If the $i$-th operation inserts a point $a$ to $S$, then $S_i = S_{i-1} \cup \{a\}$ and $\mathcal{I}_i = \mathcal{I}_{i-1}$.
In this case, $S_i^* = S_{i-1}^* \cup \{a\}$ and $\mathcal{I}_i' = \mathcal{I}_{i-1}' \cup \mathcal{J}$, where $\mathcal{J} = \{I \in \mathcal{I}_i: a \in I\}$.
The intervals in $\mathcal{I}_{i-1}'$ are hit by $S_{i-1}^*$ and the intervals in $\mathcal{J}$ are hit by the point $a$.
Hence, $S_i^*$ is a hitting set for $(S_i,\mathcal{I}_i')$.
If the $i$-th operation deletes a point $a$ from $S$, then $S_i = S_{i-1} \backslash \{a\}$ and $\mathcal{I}_{i-1} = \mathcal{I}_i$.
In this case, $S_i^*$ is obtained from $S_{i-1}^* \backslash \{a\}$ by adding the rightmost point $a^-$ in $S_i$ to the left of $a$ and the leftmost point $a^+$ in $S_i$ to the right of $a$ (if they exist).
Consider an interval $I \in \mathcal{I}_i'$.
We want to show that $I$ is hit by $S_i^*$.
Note that $\mathcal{I}_i' \subseteq \mathcal{I}_{i-1}'$.
Thus, $I \in \mathcal{I}_{i-1}'$ and $I$ is hit by $S_{i-1}^*$.
If $a \notin I$, then $I$ is hit by $S_{i-1}^* \backslash \{a\}$ and hence hit by $S_i^*$.
Otherwise, $a \in I$.
Since $I \in \mathcal{I}_i'$ and $a \notin S_i$, $I$ must be hit by some point $b \in S_i$ different from $a$.
If $b$ is to the left of $a$, then $I$ must be hit by $a^-$ (as $a^-$ is in between $b$ and $a$).
On the other hand, if $b$ is to the right of $a$, $I$ must be hit by $a^+$ (as $a^+$ is in between $a$ and $b$).
Therefore, $I$ is hit by $S_i^*$.
If the $i$-th operation inserts an interval $I$ to $\mathcal{I}$, then $S_i = S_{i-1}$ and $\mathcal{I}_i = \mathcal{I}_{i-1} \cup \{I\}$.
In this case, $S_i^*$ is obtained from $S_{i-1}^*$ by adding an arbitrary point $a \in S_i$ that hits $I$ (if $I$ is hit by $S_i$).
If $I$ is not hit by $S_i$, then $(S_i,\mathcal{I}_i') = (S_{i-1},\mathcal{I}_{i-1}')$ and $S_i^* = S_{i-1}^*$, thus $S_i^*$ is a hitting set for $(S_i,\mathcal{I}_i')$.
If $I$ is hit by $S_i$, then $\mathcal{I}_i' = \mathcal{I}_{i-1}' \cup \{I\}$ and $S_i^* = S_{i-1}^* \cup \{a\}$.
The intervals in $\mathcal{I}_{i-1}'$ are hit by $S_{i-1}^*$ by our induction hypothesis and $I$ is hit by the point $a$.
Hence, $S_i^*$ is a hitting set for $(S_i,\mathcal{I}_i')$.
If the $i$-th operation deletes an interval $I$ from $\mathcal{I}$, then $S_i = S_{i-1}$ and $\mathcal{I}_i = \mathcal{I}_{i-1} \backslash \{I\}$.
In this case, $S_i^* = S_{i-1}^*$.
Note that $\mathcal{I}_i' \subseteq \mathcal{I}_{i-1}'$, which implies that $S_i^*$ is a hitting set for $(S_i,\mathcal{I}_i')$.

\subsection{Proof of Lemma~\ref{lem-dsexsc}}
Note that $\mathcal{Q}$ is static.
We simply store $\mathcal{Q}$ in a static data structure $\mathcal{B}_0$ that can decide in $\widetilde{O}(1)$ time, for a given point $a \in \mathbb{R}^2$, whether there exists a quadrant in $\mathcal{Q}$ that covers $a$; this can be done using a standard orthogonal stabbing data structure with $\widetilde{O}(|\mathcal{Q}|)$ construction time.
Then our data structure $\mathcal{B}$ simply maintains the number $\tilde{n}$ of the points in $S$ that are not covered by $\mathcal{Q}$.
Initially, $\tilde{n}$ can be computed in $\widetilde{O}(n_0)$ time by considering every point in $S$ and use $\mathcal{B}_0$ to check if it is covered by $\mathcal{Q}$ in $\widetilde{O}(1)$ time.
After an operation on $S$, we can update $\tilde{n}$ in $\widetilde{O}(1)$ time by checking whether the inserted/deleted point is covered by $\mathcal{Q}$ using $\mathcal{B}_0$.

\subsection{Proof of Lemma~\ref{lem-quadsccorrect}}
Let $S_i$ denote the set $S$ at the time $i$ and $\mathcal{Q}_i^*$ be the $\mathcal{Q}^*$ at the time $i$.
Define $S_i' \subseteq S_i$ as the sub-collection consisting of the points that are covered by $\mathcal{Q}$.
Then $(S_i',\mathcal{Q})$ always has a hitting set.
Furthermore, if $(S_i,\mathcal{Q})$ has a set cover, then $(S_i',\mathcal{Q}) = (S_i,\mathcal{Q})$.
So it suffices to show that $\mathcal{Q}_i^*$ is always a set cover for $(S_i',\mathcal{Q})$.
We prove this by induction.
It is clear that $\mathcal{Q}_0^*$ is a set cover for $(S_0',\mathcal{Q})$.
Assume $\mathcal{Q}_{i-1}^*$ is a set cover for $(S_{i-1}',\mathcal{Q})$ and we show that $\mathcal{Q}_i^*$ is a set cover for $(S_i',\mathcal{Q})$.
If $\mathcal{Q}_i^*$ is obtained by re-computing, then $(S_i',\mathcal{Q}) = (S_i,\mathcal{Q})$, since we only re-compute $\mathcal{Q}^*$ when the current $(S,\mathcal{Q})$ has a set cover.
In this case, $\mathcal{Q}_i^*$ is clearly a set cover for both $(S_i',\mathcal{Q})$ and $(S_i,\mathcal{Q})$.
So suppose $\mathcal{Q}_i^*$ is obtained by local modification.
If the $i$-th iteration inserts a point $a$ to $S$, then $S_i = S_{i-1} \cup \{a\}$.
In this case, $\mathcal{Q}_i^*$ is obtained by including in $\mathcal{Q}_{i-1}^*$ an arbitrary quadrant $Q \in \mathcal{Q}$ that contains $a$ (if such a quadrant exists).
If $a$ is not covered by $\mathcal{R}$, then $S_i' = S_{i-1}'$ and $\mathcal{Q}_i^* = \mathcal{Q}_{i-1}^*$, thus $\mathcal{Q}_i^*$ is a set cover for $(S_i',\mathcal{Q})$.
If $a$ is covered by $\mathcal{R}$, then $S_i' = S_{i-1}' \cup \{a\}$ and $\mathcal{Q}_i^* = \mathcal{Q}_{i-1}^* \cup \{Q\}$.
The points in $S_{i-1}'$ are covered by $\mathcal{Q}_{i-1}^*$ by our induction hypothesis and the point $a$ is covered by $Q$.
Hence, $\mathcal{Q}_i^*$ is a set cover for $(S_i',\mathcal{Q})$.
If the $i$-th iteration delete a point $a$ from $S$, then $S_i = S_{i-1} \backslash \{a\}$.
In this case, $\mathcal{Q}_i^* = \mathcal{Q}_{i-1}^*$.
Note that $S_i' \subseteq S_{i-1}'$, which implies that $\mathcal{Q}_i^*$ is a set cover for $(S_i',\mathcal{Q})$.


\section{Implementation details and detailed time analysis for the dynamic interval set cover data structure} \label{appx-anaint}
We present the implementation details of our dynamic interval set cover data structure $\mathcal{D}_\text{new}$ in Section~\ref{sec-DISC} as well as a detailed time analysis.
Assume the function $f$ we choose satisfies two properties: \textbf{(1)} $f(m,\eps) \leq m/2$ for any $m$, and \textbf{(2)} $f(\Theta(m),\eps) = \Theta(f(m,\eps))$.

First, we discuss how to construct $\mathcal{D}_\text{new}$.
Constructing the data structure $\mathcal{A}$ takes $\widetilde{O}(n_0)$ time, as it is basic.
The portions $J_1,\dots,J_r$ can be computed in $\widetilde{O}(n_0)$ time by sorting the points in $S$ and the endpoints of the intervals in $\mathcal{I}$.
Once $J_1,\dots,J_r$ are computed, we build a (static) point location data structure $\mathcal{B}_1$ which can report in $O(\log r)$ time, for a given point $a \in \mathbb{R}$, the portion $J_i$ that contains $a$.
Clearly, $\mathcal{B}_1$ can be constructed in $\widetilde{O}(r)$ time.
With $\mathcal{B}_1$ in hand, we can determine in $\widetilde{O}(1)$ time, for each point $a \in S$ (resp., each interval $I \in \mathcal{I}$), the portion $J_i$ that contains $a$ (resp., the two $J_i$'s that contain the endpoints of $I$).
By doing this for all points in $S$ and all intervals in $\mathcal{I}$, we obtain all $S_i$ and all $\mathcal{I}_i$ in $\widetilde{O}(n_0 + r)$ time.
After this, we can build the data structures $\mathcal{D}_\text{old}^{(i)}$'s.
Constructing each $\mathcal{D}_\text{old}^{(i)}$ takes $\widetilde{O}(f(n_0,\eps))$ time since $|S_i| + |\mathcal{I}_i| = O(f(n_0,\eps))$.
Hence, the time for constructing all $\mathcal{D}_\text{old}^{(1)},\dots,\mathcal{D}_\text{old}^{(r)}$ is $\widetilde{O}(n_0)$.

The support data structure $\mathcal{B}_1$ will be used later in the implementation of the update procedure of $\mathcal{D}_\text{new}$ (we do not need to update $\mathcal{B}_1$ since it is static).
Besides, we need another support data structure $\mathcal{B}_2$ defined as follows.
\begin{lemma} \label{lem-supintcover}
One can store $\mathcal{I}$ in a basic data structure $\mathcal{B}_2$ such that given an interval $J$, an interval in $\mathcal{I}$ that contains $J$ can be reported in $\widetilde{O}(1)$ time (if it exists).
\end{lemma}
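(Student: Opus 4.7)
The plan is to view each interval $I = [\ell_I, r_I] \in \mathcal{I}$ as a point $(\ell_I, r_I)$ in the plane, so that the containment $J \subseteq I$ becomes the 2D dominance condition $\ell_I \leq \ell_J$ and $r_I \geq r_J$. To decide whether such an $I$ exists, it suffices to find the interval maximizing $r_I$ subject to $\ell_I \leq \ell_J$ and compare its right endpoint against $r_J$. So first I would build a balanced binary search tree $\mathcal{T}$ (e.g., a red-black tree) on $\mathcal{I}$ using $\ell_I$ as the key, augmenting each node $u$ with an extra field that stores a pointer to the interval in $u$'s subtree whose right endpoint is largest.

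The query procedure walks down $\mathcal{T}$ from the root along the search path for $\ell_J$, maintaining the best candidate seen so far. Whenever it descends into the right child of a node $u$, every interval in $u$'s left subtree, and the interval stored at $u$ itself, have keys $\leq \ell_J$; the running best is updated by comparing its right endpoint with the augmented value stored at $u$'s left child and with $u$'s own right endpoint. Whenever it descends into the left child, the node $u$ and its right subtree are simply skipped. At the end of the walk, the running best is exactly the interval in $\mathcal{I}$ with maximum $r_I$ among those with $\ell_I \leq \ell_J$; we report it if its right endpoint is at least $r_J$, and otherwise declare that no containing interval exists.

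For construction, sort $\mathcal{I}$ by $\ell_I$ and build the tree bottom-up, computing each augmented field in $O(1)$ time from those of its two children, for a total cost of $O(|\mathcal{I}| \log |\mathcal{I}|) = \widetilde{O}(|\mathcal{I}|)$. Insertions and deletions use the standard red-black-tree update in $O(\log |\mathcal{I}|) = \widetilde{O}(1)$ time, with the augmented field being refreshed at each node along the update path and at the endpoints of each rotation performed during rebalancing. Queries also take $O(\log |\mathcal{I}|) = \widetilde{O}(1)$ time, so $\mathcal{B}_2$ qualifies as basic according to the definition in Section~\ref{sec-notation}.

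The only mildly delicate point is verifying that the augmented max-right pointer is maintained correctly under the tree rotations used for self-balancing. Since the augmented value at any node depends solely on that node's own right endpoint and on the augmented values of its two children, each rotation triggers only $O(1)$ recomputations on a constant-length segment of the tree, keeping the per-operation overhead well within the $\widetilde{O}(1)$ budget. I do not expect any deeper obstacle: this is essentially a textbook interval-tree augmentation tailored to containment queries rather than stabbing queries.
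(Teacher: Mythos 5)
Your proposal is correct and is essentially the same construction as the paper's: a balanced BST on $\mathcal{I}$ keyed by left endpoint, augmented at each node with the subtree's rightmost-right-endpoint interval, with a query that retrieves the maximizer of $r_I$ over $\ell_I \leq \ell_J$ and tests whether it contains $J$. You spell out the rotation-maintenance detail more explicitly, but the idea and the proof are the same.
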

\begin{proof}
We store $\mathcal{I}$ in a binary search tree $\mathcal{T}$ where the key of each interval is its left endpoint.
We augment each node $\mathbf{u} \in \mathcal{T}$ with an additional field which stores the interval in the subtree rooted at $\mathbf{u}$ that has the rightmost right endpoint.
Given a query interval $J$, we first look for the interval in $\mathcal{T}$ with the rightmost right endpoint whose key is smaller than or equal to the left endpoint of $J$.
With the augmented fields, this interval can be found in $\widetilde{O}(1)$ time.
If this interval contains $J$, then we report it; otherwise, no interval in $\mathcal{I}$ contains $J$.
Clearly, $\mathcal{T}$ can be built in $\widetilde{O}(n_0)$ time and dynamized with $\widetilde{O}(1)$ update time, and thus $\mathcal{T}$ is basic.
\end{proof}
\noindent
Since $\mathcal{B}_2$ is basic, it can be constructed in $\widetilde{O}(n_0)$ time and updated in $\widetilde{O}(1)$ time.
We conclude that the construction time of $\mathcal{D}_\text{new}$ is $\widetilde{O}(n_0)$.

Next, we consider how to implement the update procedure of $\mathcal{D}_\text{new}$.
After each operation, we need to update the data structure $\mathcal{A}$ and the support data structure $\mathcal{B}_2$, which can be done in $\widetilde{O}(1)$ time since they are basic.
Also, if some $(S_i,\mathcal{I}_i)$ changes, we need to update the data structure $\mathcal{D}_\text{old}^{(i)}$.
An operation on $S$ changes exactly one $S_i$ and an operation on $\mathcal{I}$ changes at most two $\mathcal{I}_i$'s.
Thus, we only need to update at most two $\mathcal{D}_\text{old}^{(i)}$'s, and by using $\mathcal{B}_1$ we can find these $\mathcal{D}_\text{old}^{(i)}$'s in $\widetilde{O}(1)$ time.
Note that the size of each $(S_i,\mathcal{I}_i)$ is bounded by $O(f(n_0,\eps))$ during the first period (i.e., before the first reconstruction), because the period only consists of $f(n_0,\eps)$ operations.
Thus, updating the $\mathcal{D}_\text{old}$ data structures takes $O(f^\alpha(n_0,\eps)/\eps^{1-\alpha})$ amortized time.

Then we discuss the maintenance of the solution.
The time for simulating the output-sensitive algorithm is $\widetilde{O}(\delta)$, i.e., $\widetilde{O}(\min\{r/\varepsilon,n\})$.
If the algorithm gives the solution $\mathcal{I}_\text{appx}$, we compute $|\mathcal{I}_\text{appx}|$ and store $\mathcal{I}_\text{appx}$ in a binary search tree; by doing this, we can answer the size, membership, and reporting queries for $\mathcal{Q}_\text{appx}$ in the required query times.
This step takes $\widetilde{O}(\delta)$ time, i.e., $\widetilde{O}(\min\{r/\varepsilon,n\})$ time, since $|\mathcal{I}_\text{appx}| \leq \delta$ in this case.
If the output-sensitive algorithm fails, we compute the sets $P$ and $P'$.
This can be done in $\widetilde{O}(r)$ time by using the support data structure $\mathcal{B}_2$.
After this, we compute $\mathcal{I}^*$, which again takes $\widetilde{O}(r)$ time by using $\mathcal{B}_2$; specifically, we consider each $i \in P$ and use $\mathcal{B}_2$ to find an interval in $\mathcal{I}$ that contains $J_i$.
We have $\mathcal{I}_\text{appx} = \mathcal{I}^* \sqcup (\bigsqcup_{i \in P'} \mathcal{I}_i^*)$.
To support the size query for $\mathcal{I}_\text{appx}$ in $O(1)$ time, we need to compute $|\mathcal{I}_\text{appx}| = |\mathcal{I}^*| + \sum_{i \in P'} |\mathcal{I}_i^*|$.
This can be done in $O(r)$ time, because we can query $\mathcal{D}_\text{old}^{(i)}$ to obtain $|\mathcal{I}_i^*|$ in $O(1)$ time.
In order to support the membership and reporting queries, we store $\mathcal{I}^*$ in a binary search trees $\mathcal{T}$.
Also, we store the set $P'$, and for each $i \in P'$ we store a pointer pointing to the data structure $\mathcal{D}_\text{old}^{(i)}$.
Consider a membership query $I \in \mathcal{I}$.
Using the binary search tree $\mathcal{T}$, we can obtain the multiplicity of $I$ in $\mathcal{I}^*$ in $O(\log |\mathcal{I}^*|)$ time.
Then we use $\mathcal{B}_1$ to find in $O(\log r)$ time the (at most) two $\mathcal{I}_i$'s that contain $I$ (say $I \in \mathcal{I}_i$ and $I \in \mathcal{I}_j$).
By querying $\mathcal{D}_\text{old}^{(i)}$ and $\mathcal{D}_\text{old}^{(j)}$, we know the multiplicities of $I$ in $\mathcal{I}_i^*$ and $\mathcal{I}_j^*$, which takes $O(\log |\mathcal{I}_i^*| + \log |\mathcal{I}_j^*|)$.
The sum of these multiplicities is just the multiplicity of $I$ in $\mathcal{I}_{appx}$.
The time for answering the query is $O(\log |\mathcal{I}^*| + \log |\mathcal{I}_i^*| + \log |\mathcal{I}_j^*| + \log r)$.
Note that $\log |\mathcal{I}^*| + \log |\mathcal{I}_i^*| + \log |\mathcal{I}_j^*| = O(\log |\mathcal{I}_\text{appx}|)$.
Furthermore, because $|\mathcal{I}_\text{appx}| > \delta$ (as the output-sensitive algorithm fails), we have $|\mathcal{I}_\text{appx}| = \Omega(r)$.
Thus, the time cost for a membership query is $O(\log |\mathcal{I}_\text{appx}|)$.
Finally, consider the reporting query.
We first use the binary search tree $\mathcal{T}$ to report the elements in $\mathcal{I}^*$, using $O(|\mathcal{I}^*|)$ time.
Then for each $i \in P'$, we query the data structure $\mathcal{D}_\text{old}^{(i)}$ to report the elements in $\mathcal{I}_i^*$.
The total time cost is $O(|\mathcal{I}^*|+\sum_{i \in P'} |\mathcal{I}_i^*| + |P'|)$.
Since $|P'| \leq r$ and $|\mathcal{I}_\text{appx}| = \Omega(r)$ as argued before, the time for answering the reporting query is $O(|\mathcal{I}_\text{appx}|)$.
The above work for storing $\mathcal{I}_\text{appx}$ takes $\widetilde{O}(r)$ time, since $|\mathcal{I}^*| = O(r)$ and $|P'| = O(r)$.
To summarize, maintaining the solution takes $\widetilde{O}(\min\{r/\varepsilon,n\} + r)$ time.

After processing $f(n_0,\varepsilon)$ operations, we need to reconstruct the entire data structure $\mathcal{D}_\text{new}$.
The reconstruction is the same as the initial construction, except that $n_0$ is replaced with $n_1$, the size of $(S,\mathcal{I})$ at the time of reconstruction.
Thus, the reconstruction takes $\widetilde{O}(n_1)$ time.
We amortize the time cost over all the $f(n_0,\varepsilon)$ operations in the period.
Since $n_1 \leq n_0+f(n_0,\varepsilon)$, the amortized time for reconstruction is $\widetilde{O}(n_0/f(n_0,\varepsilon))$, i.e., $\widetilde{O}(r)$.

Combining the time for updating the $\mathcal{D}_\text{old}$ data structures, the time for maintaining the solution, and the time for reconstruction, we see that the amortized update time of $\mathcal{D}_\text{new}$ is $\widetilde{O}(f^\alpha(n_0,\eps)/\varepsilon^{1-\alpha} + \min\{r/\varepsilon,n\} + r)$ during the first period (since $\mathcal{D}_\text{new}$ is reconstructed periodically, it suffices to analyze the update time in the first period).
By property \textbf{(1)} of $f$, we have $n = \Theta(n_0)$ at any time in the period, i.e., the size of $(S,\mathcal{I})$ is $\Theta(n_0)$ at any time in the period.
By property \textbf{(2)} of $f$, we further have $f(n,\eps) = \Theta(f(n_0,\eps))$ at any time in the period.
It follows that the amortized update time of $\mathcal{D}_\text{new}$ is $\widetilde{O}(f^\alpha(n,\eps)/\varepsilon^{1-\alpha} + \min\{n/(f(n,\eps) \cdot \eps),n\} + n/f(n,\eps))$ during the period.
To minimize the time complexity while guaranteeing the two conditions of $f$, we set $f(n,\eps) = \min\{n^{1-\alpha'}/\eps^{\alpha'},n/2\}$ where $\alpha'$ is as defined in Theorem~\ref{thm-boot}, i.e., $\alpha' = \alpha/(1+\alpha)$.
The following lemma shows that our choice of $f$ makes the time bound $\widetilde{O}(n^{\alpha'}/\eps^{1-\alpha'})$.
\begin{lemma}
When $f(n,\eps) = \min\{n^{1-\alpha'}/\eps^{\alpha'},n/2\}$, we have
\begin{equation*}
    \frac{f^\alpha(n,\eps)}{\eps^{1-\alpha}} + \min\left\{\frac{n}{f(n,\eps) \cdot \eps},n\right\} + \frac{n}{f(n,\eps)} = O\left(\frac{n^{\alpha'}}{\eps^{1-\alpha'}}\right).
\end{equation*}
\end{lemma}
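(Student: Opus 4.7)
The plan is to split into two cases based on which argument achieves the minimum in the definition $f(n,\eps) = \min\{n^{1-\alpha'}/\eps^{\alpha'},\, n/2\}$. The entire calculation is driven by three algebraic identities implied by $\alpha' = \alpha/(1+\alpha)$: namely $1-\alpha' = 1/(1+\alpha)$, $\alpha(1-\alpha') = \alpha'$, and $\alpha\alpha' + (1-\alpha) = 1-\alpha'$. These identities are precisely what makes the bootstrapping exponent self-consistent, and I expect verifying that the first term collapses exactly to $n^{\alpha'}/\eps^{1-\alpha'}$ to be the only nontrivial computation — everything else is monotone comparison.

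In the first case, $f(n,\eps) = n^{1-\alpha'}/\eps^{\alpha'}$ (which corresponds to $n\eps \geq 2^{1/\alpha'}$). I would substitute directly:
\[
\frac{f^\alpha}{\eps^{1-\alpha}} = \frac{n^{\alpha(1-\alpha')}}{\eps^{\alpha\alpha' + 1 - \alpha}} = \frac{n^{\alpha'}}{\eps^{1-\alpha'}},
\]
using the two identities. For the second term, $n/(f\eps) = n^{\alpha'}/\eps^{1-\alpha'}$ is immediate. For the third term, $n/f = n^{\alpha'}\eps^{\alpha'}$, which is at most $n^{\alpha'}/\eps^{1-\alpha'}$ whenever $\eps \leq 1$ (the only regime of interest). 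So the bound holds with constant 1 in this case.

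In the second case, $f(n,\eps) = n/2$, which forces $n\eps = O(1)$ by the definition of the minimum. Now $f^\alpha/\eps^{1-\alpha} = \Theta(n^\alpha/\eps^{1-\alpha})$, and comparing with the target I observe
\[
\frac{n^\alpha/\eps^{1-\alpha}}{n^{\alpha'}/\eps^{1-\alpha'}} = (n\eps)^{\alpha-\alpha'} = O(1)
\]
since $\alpha \geq \alpha'$ and $n\eps = O(1)$. The third term is $n/f = 2 = O(1)$, which is trivially absorbed (for $n \geq 1$, $\eps \leq 1$). For the middle term, I would split on whether $n \leq 1/\eps$: if so, $\min\{n/(f\eps),n\} \leq n$ and $n/(n^{\alpha'}/\eps^{1-\alpha'}) = (n\eps)^{1-\alpha'} \leq 1$; otherwise $\min \leq 1/\eps$ and $(1/\eps)/(n^{\alpha'}/\eps^{1-\alpha'}) = (n\eps)^{-\alpha'} \leq 1$. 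Either way the bound holds, completing both cases.

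The main obstacle is just bookkeeping of exponents; there is no conceptual difficulty beyond confirming that the critical identity $\alpha\alpha' + 1 - \alpha = 1-\alpha'$ makes the first-term exponents line up, which follows by writing both sides over the common denominator $1+\alpha$.
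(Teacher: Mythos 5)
Your proposal is correct and follows essentially the same case split as the paper (which branch of the $\min$ defines $f$), just with the algebra spelled out more explicitly where the paper appeals to "direct computation." The key identities you isolate, $\alpha(1-\alpha')=\alpha'$ and $\alpha\alpha'+1-\alpha=1-\alpha'$, are exactly what makes the first term collapse, and your handling of the second case via $n\eps = O(1)$ matches the paper's argument.
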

\begin{proof}
If $f(n,\eps) = n^{1-\alpha'}/\eps^{\alpha'}$, then one can easily verify the equation in the lemma via a direct computation (by bounding each of the three terms on the left-hand side).
It suffices to verify the equation for the case $f(n,\eps) = n/2$.
In this case, we have $n^{1-\alpha'}/\eps^{\alpha'} \geq n/2$, implying that $n = O(1/\eps)$.
It follows that $n^\alpha/\eps^{1-\alpha} = O(n^{\alpha'}/\eps^{1-\alpha'})$.
So the first term in the left-hand side is bounded by $O(n^{\alpha'}/\eps^{1-\alpha'})$.
The second term is $O(\min\{1/\eps,n\})$, which is bounded by $O(n^{\alpha'}/\eps^{1-\alpha'})$.
The third term is clearly $O(1)$.
This proves the equation in the lemma.
\end{proof}


\section{Implementation details and detailed time analysis for the dynamic quadrant set cover data structure} \label{appx-anaquad}
We present the implementation details of our dynamic quadrant set cover data structure $\mathcal{D}_\text{new}$ in Section~\ref{sec-DQSC} as well as a detailed time analysis.
Since we are interested in the asymptotic bounds, we may assume that the approximation factor $\eps$ is sufficiently small, say $\eps < 1$.
Assume the function $f$ we choose satisfies two properties: \textbf{(1)} $\sqrt{m}/2 \leq f(m,\eps) \leq m/2$ for any $m$, and \textbf{(2)} $f(\Theta(m),\eps) = \Theta(f(m,\eps))$.
Note that property \textbf{(1)} implies that $r = \lceil n_0/f(n_0,\eps) \rceil = O(f(n_0,\eps))$ and $r^2 = O(n_0)$.

First, we discuss how to construct $\mathcal{D}_\text{new}$.
Constructing the data structure $\mathcal{A}$ takes $\widetilde{O}(n_0)$ time, as it is basic.
The grid can be built in $\widetilde{O}(n_0)$ time by sorting the points in $S$ and the vertices of the quadrants in $\mathcal{Q}$.
Once the grid is computed, we build a (static) point location data structure $\mathcal{B}_1$ which can report in $O(\log r)$ time, for a given point $a \in \mathbb{R}^2$, the grid cell that contains $a$.
Since the grid has $r^2$ cells, $\mathcal{B}_1$ can be built in $\widetilde{O}(r^2)$ time.
With $\mathcal{B}_1$ in hand, we can determine in $\widetilde{O}(1)$ time, for each point $a \in S$ (resp., each quadrant $Q \in \mathcal{Q}$), the cell contains $a$ (resp., the vertex of $Q$).
By doing this for all points in $S$ and all quadrants in $\mathcal{Q}$, we obtain all $S_{i,j}$ and the non-special quadrants in all $\mathcal{Q}_{i,j}$ in $\widetilde{O}(n_0 + r^2)$ time, i.e., $\widetilde{O}(n_0)$ time.
In order to compute the special quadrants, we need the following support data structure $\mathcal{B}_2$.
\begin{lemma} \label{lem-supspecquad}
One can store $\mathcal{Q}$ in a basic data structure $\mathcal{B}_2$ such that given an orthogonal rectangle $\Box$, the leftmost (resp., rightmost) quadrant in $\mathcal{Q}$ that right (resp., left) intersects $\Box$ and the topmost (resp., bottommost) quadrant in $\mathcal{Q}$ that bottom (resp., top) intersects $\Box$ can be reported in $\widetilde{O}(1)$ time (if they exist).
\end{lemma}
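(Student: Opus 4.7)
The plan is to handle each of the four query types (leftmost right‑intersecting, rightmost left‑intersecting, topmost bottom‑intersecting, bottommost top‑intersecting) separately, reducing each to a three‑sided range extremum query on the vertices of an appropriate subcollection of $\mathcal{Q}$. By symmetry, I would focus on computing the rightmost quadrant in $\mathcal{Q}$ that left intersects $\Box = [x_L, x_R] \times [y_B, y_T]$; the other three cases are analogous after reflection.

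The first step is to observe that only two of the four quadrant orientations can ever left‑intersect $\Box$. If $Q$ is a NE quadrant with vertex $(x_Q, y_Q)$, then $Q$ contains the left edge $\{x_L\} \times [y_B, y_T]$ iff $x_Q \leq x_L$ and $y_Q \leq y_B$, but these are also exactly the conditions under which $Q \supseteq \Box$; hence no NE quadrant can \emph{partially} intersect $\Box$ while containing its left edge, and the same holds for SE quadrants. Thus only NW and SW quadrants are relevant, with the precise conditions $x_L \leq x_Q < x_R$ and $y_Q \leq y_B$ (for NW), and $x_L \leq x_Q < x_R$ and $y_Q \geq y_T$ (for SW). In both cases we want to maximize $x_Q$ subject to the two constraints, which is a three‑sided range‑max query on the set of vertices of the corresponding subfamily.

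The second step is to store the vertices of the NW quadrants and the vertices of the SW quadrants each in a dynamic structure supporting three‑sided orthogonal range extremum queries in $\widetilde{O}(1)$ time per query with $\widetilde{O}(1)$ amortized update time. A standard range tree whose internal nodes are augmented with the max $x$‑coordinate in their canonical subset suffices: it can be built in $\widetilde{O}(|\mathcal{Q}|)$ time, dynamized in $\widetilde{O}(1)$ amortized time per insertion or deletion, and answers the query in $\widetilde{O}(1)$ time; equivalently, one can use a dynamic priority search tree. Taking the rightmost of the two candidates returned by these two trees yields the answer for the "left intersects" case.

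The remaining three cases are handled identically after the symmetric reduction: leftmost right‑intersecting reduces to minimizing $x_Q$ over NE and SE vertices in an appropriate box; topmost bottom‑intersecting to maximizing $y_Q$ over SE and SW vertices; bottommost top‑intersecting to minimizing $y_Q$ over NE and NW vertices. In total $\mathcal{B}_2$ consists of eight dynamic range trees (two per query type), each basic, so $\mathcal{B}_2$ itself is basic. There is no serious obstacle; the only thing to be careful about is the case analysis showing which two quadrant orientations can side‑intersect $\Box$ for each of the four edges, together with the correct half‑open condition $x_L \leq x_Q < x_R$ that distinguishes partial intersection from full containment.
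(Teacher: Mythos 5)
Your proof is correct and takes essentially the same approach as the paper's: for a chosen edge of $\Box$, identify the two quadrant orientations that can partially intersect $\Box$ while containing that edge, express this as a three-sided range condition on the quadrant's vertex, and answer via the basic three-sided range-extremum structure of Lemma~\ref{lem-3sided}, returning the extremum of the two candidates. The paper works with the symmetric ``right intersects'' case and only spells out one orientation ``without loss of generality,'' while you mirror to ``left intersects'' and carry both orientations to the end, but the argument is the same.
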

\begin{proof}
It suffices to show how to compute the leftmost quadrant in $\mathcal{Q}$ that right intersects a given orthogonal rectangle $\Box$.
Note that only southeast and northeast quadrants can left intersects a rectangle, and without loss of generality, it suffices to see how to find the leftmost southeast quadrant in $\mathcal{Q}$ that right intersects $\Box$.
Let $\mathcal{Q}^\text{SE} \subseteq \mathcal{Q}$ consist of the southeast quadrants.
We store the vertices of the quadrants in $\mathcal{Q}^\text{SE}$ in a basic 3-sided range-minimum data structure (Lemma~\ref{lem-3sided}) with $\widetilde{O}(1)$ query time, by setting the weight of each vertex to be its $x$-coordinate.
Let $\Box = [x_1,x_2] \times [y_1,y_2]$ be a given orthogonal rectangle.
Observe that a quadrant in $\mathcal{Q}^\text{SE}$ right intersects $\Box$ iff its vertex lies in the 3-sided rectangle $(x_1,x_2] \times [y_2,\infty)$.
Thus, the leftmost quadrant in $\mathcal{Q}^\text{SE}$ that right intersects $\Box$ corresponds to the lightest vertex (i.e., the vertex with the smallest weight) in $(x_1,x_2] \times [y_2,\infty)$, which can be reported by the 3-sided range-minimum data structure in $\widetilde{O}(1)$ time.
\end{proof}

Using the support data structure $\mathcal{B}_2$, we can find the special quadrants in all $\mathcal{Q}_{i,j}$ in $\widetilde{O}(r^2)$ time.
After this, we can build the data structures $\mathcal{D}_\text{old}^{(i,j)}$'s.
Constructing each $\mathcal{D}_\text{old}^{(i,j)}$ takes $\widetilde{O}(|S_{i,j}| + |\mathcal{Q}_{i,j}|)$ time.
By Lemma~\ref{lem-bound}, the time for constructing all $\mathcal{D}_\text{old}^{(i,j)}$'s is $\widetilde{O}(n_0+r^2)$.
Therefore, the entire construction time of $\mathcal{D}_\text{new}$ is $\widetilde{O}(n_0+r^2)$, i.e., $\widetilde{O}(n_0)$.

The support data structures $\mathcal{B}_1$ and $\mathcal{B}_2$ will be used later in the implementation of the update procedure of $\mathcal{D}_\text{new}$.
Thus, $\mathcal{B}_2$ will be updated after each operation (while $\mathcal{B}_1$ is static).
Besides $\mathcal{B}_1$ and $\mathcal{B}_2$, we need another support data structure $\mathcal{B}_3$ defined as follows.
\begin{lemma} \label{lem-supquadcover}
One can store $\mathcal{Q}$ in a basic data structure $\mathcal{B}_3$ such that given an orthogonal rectangle $\Box$, a quadrant $Q \in \mathcal{Q}$ that contains $\Box$ can be reported in $\widetilde{O}(1)$ time (if it exists).
\end{lemma}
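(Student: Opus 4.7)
The plan is to reduce rectangle containment to a one-corner point-containment problem, and then handle the resulting point queries with standard orthogonal range-searching structures (of the same flavor as $\mathcal{B}_2$).

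The key observation is the following: for each of the four orientations, a quadrant $Q$ contains an axis-aligned rectangle $\Box$ iff $Q$ contains a \emph{single} distinguished corner of $\Box$ — namely the corner of $\Box$ farthest into the unbounded direction of $Q$. For instance, if $Q = [x_Q,\infty) \times (-\infty, y_Q]$ is a southeast quadrant and $\Box = [x_1,x_2] \times [y_1,y_2]$, then $Q \supseteq \Box$ iff $x_Q \le x_1$ and $y_Q \ge y_2$, iff $Q$ contains the top-left corner $(x_1,y_2)$ of $\Box$; the other three orientations are symmetric. Consequently, asking for a quadrant of $\mathcal{Q}$ containing $\Box$ reduces to four instances of asking, for a given point, whether some quadrant of a fixed orientation contains it (and, if so, reporting one).

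Accordingly, partition $\mathcal{Q}$ into $\mathcal{Q}^{\text{SE}},\mathcal{Q}^{\text{SW}},\mathcal{Q}^{\text{NE}},\mathcal{Q}^{\text{NW}}$ and, for each of the four collections, build a basic data structure on the vertices that answers the corresponding point-containment query. For SE quadrants this amounts to a 2-sided dominance query on the vertex set: find a vertex $(x_Q,y_Q)$ with $x_Q \le p_x$ and $y_Q \ge p_y$. This fits directly into the 3-sided range-minimum framework used in Lemma~\ref{lem-supspecquad} (take the third side at $+\infty$); more concretely, store the vertices of $\mathcal{Q}^{\text{SE}}$ in the basic 3-sided range-minimum structure and query the 3-sided rectangle $(-\infty,p_x] \times [p_y,\infty)$ to get in $\widetilde{O}(1)$ time a witness vertex, hence a witness quadrant. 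The three other orientations are handled symmetrically.

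To answer a query $\Box$, form its four corners, issue one query against each of the four per-orientation structures, and return any quadrant returned by any of them; if all four queries fail, then by the reduction no quadrant in $\mathcal{Q}$ contains $\Box$. Construction takes $\widetilde{O}(|\mathcal{Q}|)$ time, insertions and deletions of quadrants translate to single updates on the per-orientation structures in $\widetilde{O}(1)$ amortized time, and queries take $\widetilde{O}(1)$ time, so the resulting $\mathcal{B}_3$ is basic. The only step that requires any thought is the corner reduction itself; the rest is an immediate reuse of the orthogonal range-searching machinery already in place.
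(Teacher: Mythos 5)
Your proof is correct and is built on the same geometric observation the paper uses: a southeast quadrant contains $\Box = [x_1,x_2]\times[y_1,y_2]$ iff its vertex lies in the quarter-plane $(-\infty,x_1]\times[y_2,\infty)$, so the containment query reduces to a dominance query on the vertex set (and symmetrically for the other three orientations). The paper implements this with a lighter structure than your 3-sided range-minimum tree — a 1D binary search tree on $\mathcal{Q}^\text{SE}$ keyed by vertex $x$-coordinate and augmented with the topmost quadrant in each subtree, which directly yields the topmost SE quadrant with $x(Q)\le x_1$ to test against $\Box$ — but both structures are basic and answer queries in $\widetilde{O}(1)$, so the difference is one of machinery rather than substance.
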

\begin{proof}
It suffices to consider the southeast quadrants.
Let $\mathcal{Q}^\text{SE} \subseteq \mathcal{Q}$ consist of the southeast quadrants.
We store $\mathcal{Q}^\text{SE}$ in a binary search tree $\mathcal{T}$ where the key of a quadrant is the $x$-coordinate of its vertex.
We augment each node $\mathbf{u} \in \mathcal{T}$ with an additional field which stores the topmost quadrant in the subtree rooted at $\mathbf{u}$.
Given an orthogonal rectangle $\Box = [x_1,x_2] \times [y_1,y_2]$, we first look for the topmost quadrant $Q$ whose key is smaller than or equal to $x_1$.
With the augmented fields, $Q$ can be found in $\widetilde{O}(1)$ time using $\mathcal{T}$.
If $Q$ contains $\Box$, then we report $Q$, otherwise no quadrant in $\mathcal{Q}^\text{SE}$ contains $\Box$ (because a southeast quadrant contains $\Box$ iff the $x$-coordinate of its vertex is smaller than or equal to $x_1$ and the $y$-coordinate of its vertex is greater than or equal to $y_2$).
Clearly, $\mathcal{T}$ can be built in $\widetilde{O}(n_0)$ time and dynamized with $\widetilde{O}(1)$ update time, and thus $\mathcal{T}$ is basic.
\end{proof}

Next, we consider how to implement the update procedure of $\mathcal{D}_\text{new}$.
We first discuss the update of the $\mathcal{D}_\text{old}$ data structures.
If the operation is an insertion or deletion on $S$, then we use $\mathcal{B}_1$ to find the cell $\Box_{i,j}$ that contains the inserted/deleted point and update the data structure $\mathcal{D}_\text{old}^{(i,j)}$.
The situation is more complicated when the operation happens on $\mathcal{Q}$.
Suppose the operation inserts a quadrant $Q$ to $\mathcal{Q}$.
Without loss of generality, assume $Q$ is a southeast quadrant.
Using $\mathcal{B}_1$, we can find the cell $\Box_{i,j}$ that contains the vertex of $Q$.
We then update the data structure $\mathcal{D}_\text{old}^{(i,j)}$ by inserting $Q$ to $\mathcal{Q}_{i,j}$.
Besides, the insertion of $Q$ may also change $\mathcal{Q}_{i,k}$ for $k \in \{j+1,\dots,r\}$ and $\mathcal{Q}_{k,j}$ for $k \in \{i+1,\dots,r\}$.
Fix an index $k \in \{j+1,\dots,r\}$.
The quadrant $Q$ bottom intersects $\Box_{i,k}$.
We use the support data structure $\mathcal{B}_2$ to find the topmost quadrant $Q'$ in $\mathcal{Q}$ (before the insertion of $Q$) that bottom intersects $\Box_{i,k}$.
Then $Q'$ is a special quadrant in $\mathcal{Q}_{i,k}$.
If the $y$-coordinate of the vertex of $Q$ is greater than the $y$-coordinate of the vertex of $Q'$, then we delete $Q'$ from $\mathcal{Q}_{i,k}$ and insert $Q$ to $\mathcal{Q}_{i,k}$, and we update $\mathcal{D}_\text{old}^{(i,k)}$ twice for these two operations.
The data structures $\mathcal{D}_\text{old}^{(k,j)}$'s can be updated similarly.
Updating each $\mathcal{D}_\text{old}^{(i,k)}$ takes $\widetilde{O}(m_{i,k}^\alpha/\varepsilon^{1-\alpha})$ amortized time, where $m_{i,k}$ is the size of the current $(S_{i,k},\mathcal{Q}_{i,k})$, and updating each $\mathcal{D}_\text{old}^{(k,j)}$ takes $\widetilde{O}(m_{k,j}^\alpha/\varepsilon^{1-\alpha})$ amortized time.
Therefore, the total amortized time cost for updating these $\mathcal{D}_\text{old}$ data structures is bounded by $\widetilde{O}(\sum_{k=1}^r m_{i,k}^\alpha/\varepsilon^{1-\alpha} + \sum_{k=1}^r m_{k,j}^\alpha/\varepsilon^{1-\alpha})$.
By Lemma~\ref{lem-bound} and the fact $r = O(f(n_0,\varepsilon))$, we have $\sum_{k=1}^r m_{i,k} = O(f(n_0,\varepsilon))$ and $\sum_{k=1}^r m_{k,j} = O(f(n_0,\varepsilon))$ at any time of the first period.
Since $\alpha \leq 1$, by H\"older's inequality and Lemma~\ref{lem-bound},
\begin{equation*}
    \sum_{k=1}^r m_{i,k}^\alpha \leq \left(\frac{\sum_{k=1}^r m_{i,k}}{r}\right)^\alpha \cdot r = O(r^{1-\alpha} \cdot f^\alpha(n_0,\eps))
\end{equation*}
and similarly $\sum_{k=1}^r m_{k,j}^\alpha = O(r^{1-\alpha} \cdot f^\alpha(n_0,\eps))$.
It follows that updating the $\mathcal{D}_\text{old}$ data structures takes $\widetilde{O}(r^{1-\alpha} \cdot f^\alpha(n_0,\eps) / \eps^{1-\alpha})$ amortized time.
Updating the data structure $\mathcal{A}$ and the support data structures $\mathcal{B}_2$ and $\mathcal{B}_3$ can be done in $\widetilde{O}(1)$ time since they are basic.

Then we discuss the maintenance of the solution.
The time for simulating the output-sensitive algorithm is $\widetilde{O}(\mu \cdot \delta)$, i.e., $\widetilde{O}(\min\{r^2/\varepsilon,n\})$.
If the algorithm gives the solution $\mathcal{Q}_\text{appx}$, we compute $|\mathcal{Q}_\text{appx}|$ and store $\mathcal{Q}_\text{appx}$ in a binary search tree; by doing this, we can answer the size, membership, and reporting queries for $\mathcal{Q}_\text{appx}$ in the required query times.
This step takes $\widetilde{O}(\mu \cdot \delta)$ time, i.e., $\widetilde{O}(\min\{r^2/\varepsilon,n\})$ time, since $|\mathcal{Q}_\text{appx}| \leq \mu \cdot \delta$ in this case.
If the output-sensitive algorithm fails, we compute the sets $P$ and $P'$.
This can be done in $\widetilde{O}(r^2)$ time by using the support data structure $\mathcal{B}_3$.
After this, we compute $\mathcal{Q}^*$, which again takes $\widetilde{O}(r^2)$ time by using $\mathcal{B}_3$; specifically, we consider each $(i,j) \in P$ and use $\mathcal{B}_3$ to find a quadrant in $\mathcal{Q}$ that contains $\Box_{i,j}$.
We have $\mathcal{Q}_\text{appx} = \mathcal{Q}^* \sqcup (\bigsqcup_{(i,j) \in P'} \mathcal{Q}_{i,j}^*)$.
To support the size query for $\mathcal{Q}_\text{appx}$ in $O(1)$ time, we need to compute $|\mathcal{Q}_\text{appx}| = |\mathcal{Q}^*| + \sum_{(i,j) \in P'} |\mathcal{Q}_{i,j}^*|$.
This can be done in $O(r^2)$ time, because we can query $\mathcal{D}_\text{old}^{(i,j)}$ to obtain $|\mathcal{Q}_{i,j}^*|$ in $O(1)$ time.
To support the reporting query in $O(|\mathcal{Q}_\text{appx}|)$ time, we only need to store $\mathcal{Q}^*$ and $P'$, and store at each $(i,j) \in P'$ a pointer pointing to the data structure $\mathcal{D}_\text{old}^{(i,j)}$.
In this way, we can report the quadrants in $\mathcal{Q}^*$ and for each $(i,j) \in P'$, report the quadrants in $\mathcal{Q}_{i,j}^*$ in $O(|\mathcal{Q}_{i,j}^*|)$ time by querying $\mathcal{D}_\text{old}^{(i,j)}$.
Supporting the membership query in $O(\log |\mathcal{Q}_\text{appx}|)$ time is more difficult, since a quadrant may belong to many $\mathcal{Q}_{i,j}^*$'s.
To handle this issue, the idea is to collect all the special quadrants in the $\mathcal{Q}_{i,j}^*$'s.
Specifically, let $\mathcal{P}_{i,j}^* \subseteq \mathcal{Q}_{i,j}^*$ consist of the (at most) four special quadrants.
We can compute $\mathcal{P}_{i,j}^*$ for each $(i,j) \in P'$ in $\widetilde{O}(1)$ time by first finding the (at most) four special quadrants in $\mathcal{Q}_{i,j}$ using $\mathcal{B}_2$ and computing the multiplicity of each special quadrant in $\mathcal{Q}_{i,j}^*$ by querying $\mathcal{D}_\text{old}^{(i,j)}$.
We then store $\mathcal{Q}^* \sqcup (\bigsqcup_{(i,j) \in P'} \mathcal{P}_{i,j}^*)$ in a binary search tree $\mathcal{T}$.
Given a query quadrant $Q \in \mathcal{Q}$, we first use $\mathcal{T}$ to compute the multiplicity $p_1$ of $Q$ in $\mathcal{Q}^* \sqcup (\bigsqcup_{(i,j) \in P'} \mathcal{P}_{i,j}^*)$ in $O(\log |\mathcal{Q}_\text{appx}|)$ time.
Then we use $\mathcal{B}_1$ to find the cell $\Box_{i,j}$ that contains the vertex of $Q$ in $O(\log r)$ time and query $\mathcal{D}_\text{old}^{(i,j)}$ to find the multiplicity $p_2$ of $Q$ in $\mathcal{Q}_{i,j}^*$ in $O(\log |\mathcal{Q}_{i,j}^*|)$ time.
One can easily verify that $p_1+p_2$ is the multiplicity of $Q$ in $\mathcal{Q}_\text{appx}$.
The query takes $O(\log |\mathcal{Q}_\text{appx}| + \log r + \log |\mathcal{Q}_{i,j}^*|)$ time.
Note that $|\mathcal{Q}_{i,j}^*| \leq |\mathcal{Q}_\text{appx}|$ and $|\mathcal{Q}_\text{appx}| = \Omega(r)$, where the latter follow from the fact that $|\mathcal{Q}_\text{appx}| > \delta$ (as the output-sensitive algorithm fails).
Therefore, we can support the membership query in $O(\log |\mathcal{Q}_\text{appx}|)$ time.
The above work for storing $\mathcal{Q}_\text{appx}$ takes $\widetilde{O}(r^2)$ time, since $|\mathcal{Q}^*| = O(r^2)$ and $|P'| = O(r^2)$.
To summarize, maintaining the solution takes $\widetilde{O}(\min\{r^2/\varepsilon,n\} + r^2)$ time.

After processing $f(n_0,\varepsilon)$ operations, we need to reconstruct the entire data structure $\mathcal{D}_\text{new}$.
The reconstruction is the same as the initial construction, except that $n_0$ is replaced with $n_1$, the size of $(S,\mathcal{Q})$ at the time of reconstruction.
Thus, the reconstruction takes $\widetilde{O}(n_1)$ time.
We amortize the time cost over all the $f(n_0,\varepsilon)$ operations in the period.
Since $n_1 \leq n_0+f(n_0,\varepsilon)$, the amortized time for reconstruction is $\widetilde{O}(n_0/f(n_0,\varepsilon))$, i.e., $\widetilde{O}(r)$.

Combining the time for updating the $\mathcal{D}_\text{old}$ data structures, the time for maintaining the solution, and the time for reconstruction, we see that the amortized update time of $\mathcal{D}_\text{new}$ is $\widetilde{O}(r^{1-\alpha} \cdot f^\alpha(n_0,\eps) / \eps^{1-\alpha} + \min\{r^2/\varepsilon,n\} + r^2)$ during the first period (since $\mathcal{D}_\text{new}$ is reconstructed periodically, it suffices to analyze the update time in the first period).
By property \textbf{(1)} of $f$, we have $n = \Theta(n_0)$ at any time in the period, i.e., the size of $(S,\mathcal{Q})$ is $\Theta(n_0)$ at any time in the period.
By property \textbf{(2)} of $f$, we further have $f(n,\eps) = \Theta(f(n_0,\eps))$ at any time in the period.
It follows that the amortized update time of $\mathcal{D}_\text{new}$ is $\widetilde{O}(n^{1-\alpha}/ (f^{1-2\alpha}(n,\eps) \cdot \eps^{1-\alpha}) + \min\{n^2/(f^2(n,\eps) \cdot \eps),n\} + n^2/f^2(n,\eps))$.
To minimize the time complexity while guaranteeing the two conditions of $f$, we set $f(n,\eps) = \min\{n^{1-\alpha'/2}/(\sqrt{\eps})^{\alpha'},n/2\}$ where $\alpha'$ is as defined in Theorem~\ref{thm-bootquad}, i.e., $\alpha' = 2\alpha/(1+2\alpha)$.
Note that by doing this we have $f(n,\eps) \geq \sqrt{n}$ because of our assumption $\varepsilon < 1$.
The following lemma shows that our choice of $f$ makes the time bound $\widetilde{O}(n^{\alpha'}/\eps^{1-\alpha'})$.
\begin{lemma}
When $f(n,\eps) = \min\{n^{1-\alpha'/2}/(\sqrt{\eps})^{\alpha'},n/2\}$, we have
\begin{equation*}
    \frac{n^{1-\alpha}}{f^{1-2\alpha}(n,\eps) \cdot \eps^{1-\alpha}} + \min\left\{\frac{n^2}{f^2(n,\eps) \cdot \eps},n\right\} + \frac{n^2}{f^2(n,\eps)} = O\left(\frac{n^{\alpha'}}{\eps^{1-\alpha'}}\right).
\end{equation*}
\end{lemma}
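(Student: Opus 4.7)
The plan is to verify the inequality by a direct calculation, splitting into two cases according to which branch of the $\min$ defining $f$ is active. The only algebraic fact I will use is the identity equivalent to $\alpha'=2\alpha/(1+2\alpha)$, namely $\alpha'(1+2\alpha)=2\alpha$, or in the form most convenient below, $\alpha-\alpha\alpha'=\alpha'/2$.

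\textbf{Case 1: $f(n,\eps)=n^{1-\alpha'/2}\eps^{-\alpha'/2}$.} I would substitute this value into each of the three terms and read off the exponents of $n$ and $\eps$. For the middle term, a direct substitution gives $n^2/(f^2\eps)=n^{\alpha'}\eps^{\alpha'-1}=n^{\alpha'}/\eps^{1-\alpha'}$. For the third term, $n^2/f^2=n^{\alpha'}\eps^{\alpha'}$, whose ratio to $n^{\alpha'}/\eps^{1-\alpha'}$ is $\eps\le 1$. For the first term, the exponent of $n$ in $n^{1-\alpha}/(f^{1-2\alpha}\eps^{1-\alpha})$ simplifies to $\alpha+\alpha'/2-\alpha\alpha'$, and applying $\alpha-\alpha\alpha'=\alpha'/2$ collapses this to exactly $\alpha'$; an identical manipulation on the $\eps$ exponent yields $-(1-\alpha')$. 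Thus in this case each of the three terms is at most $n^{\alpha'}/\eps^{1-\alpha'}$.

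\textbf{Case 2: $f(n,\eps)=n/2$.} This branch is active precisely when $n/2\le n^{1-\alpha'/2}\eps^{-\alpha'/2}$, i.e., $n^{\alpha'/2}\eps^{\alpha'/2}=O(1)$, which gives $n=O(1/\eps)$. In this regime, $n^{\alpha'}/\eps^{1-\alpha'}=\Theta(1/\eps)$, and it suffices to show each term is $O(1/\eps)$. The third term is $n^2/f^2=4=O(1)$. The second term is at most $n=O(1/\eps)$. The first term becomes $O(n^\alpha/\eps^{1-\alpha})$, and substituting $n\le c/\eps$ shows $n^\alpha/\eps^{1-\alpha}\le c^\alpha/\eps=O(1/\eps)$.

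The only real work is exponent bookkeeping in the first term of Case~1; everything else is either exactly equal to the target or trivially bounded by it using $\eps<1$ and $n=O(1/\eps)$. I expect the main (minor) obstacle to be keeping track of the identity $\alpha-\alpha\alpha'=\alpha'/2$ in both the $n$ and $\eps$ exponents simultaneously, since that is what makes the cross-terms cancel and the expression collapse to $n^{\alpha'}/\eps^{1-\alpha'}$.
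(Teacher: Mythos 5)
Your Case~1 is correct and matches the paper: the identity $\alpha-\alpha\alpha'=\alpha'/2$ makes the first term exactly $n^{\alpha'}/\eps^{1-\alpha'}$, the middle term equals the target, and the third term is smaller by a factor of $\eps<1$.

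Case~2 has a real gap. You assert $n^{\alpha'}/\eps^{1-\alpha'}=\Theta(1/\eps)$ on the branch $f(n,\eps)=n/2$, but the defining constraint for that branch is only $n\eps\le 2^{2/\alpha'}=O(1)$, which gives no lower bound on $n\eps$. Writing $n^{\alpha'}/\eps^{1-\alpha'}=(n\eps)^{\alpha'}/\eps$, the constraint yields $O(1/\eps)$ but not $\Omega(1/\eps)$: for $n$ bounded and $\eps\to 0$ this quantity is $\Theta(\eps^{\alpha'-1})=o(1/\eps)$, while we remain on this branch. So proving each term is $O(1/\eps)$ does not establish that it is $O(n^{\alpha'}/\eps^{1-\alpha'})$. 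The paper (tersely) compares each term directly to the target, which is what your argument needs: the first term is $O(n^\alpha/\eps^{1-\alpha})$ and its ratio to the target is $(n\eps)^{\alpha-\alpha'}$, which is $O(1)$ under $n\eps=O(1)$ provided $\alpha\ge\alpha'$ (equivalently $\alpha\ge 1/2$, satisfied by every $\alpha$ in the bootstrapping sequence); the second term is $\min\{4/\eps,n\}$, and the interpolation $\min\{a,b\}\le a^{1-\alpha'}b^{\alpha'}$ bounds it by $4^{1-\alpha'}n^{\alpha'}/\eps^{1-\alpha'}$; the third term is $4=O(1)\le n^{\alpha'}/\eps^{1-\alpha'}$ since $n\ge 1$ and $\eps<1$.
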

\begin{proof}
If $f(n,\eps) = n^{1-\alpha'/2}/(\sqrt{\eps})^{\alpha'}$, then one can easily verify the equation in the lemma via a direct computation (by bounding each of the three terms on the left-hand side).
It suffices to verify the equation for the case $f(n,\eps) = n/2$.
In this case, we have $n^{1-\alpha'/2}/(\sqrt{\eps})^{\alpha'} \geq n/2$, implying that $n = O(1/\eps)$.
It follows that $n^\alpha/\eps^{1-\alpha} = O(n^{\alpha'}/\eps^{1-\alpha'})$.
So the first term in the left-hand side is bounded by $O(n^{\alpha'}/\eps^{1-\alpha'})$.
The second term is $O(\min\{1/\eps,n\})$, which is bounded by $O(n^{\alpha'}/\eps^{1-\alpha'})$.
The third term is clearly $O(1)$.
This proves the equation in the lemma.
\end{proof}

\section{Missing details in the output-sensitive quadrant set cover algorithm}

\subsection{Handling the no-solution case} \label{appx-nosol}
Let $U = \bigcup_{Q \in \mathcal{Q}} Q$.
Observe that no matter whether $(S,\mathcal{Q})$ has no set cover or not, the algorithm in Section~\ref{sec-osquad} gives an $O(1)$-approximate optimal set cover $\mathcal{Q}^*$ for $(S \cap U,\mathcal{Q})$.
Thus, in order to handle the no-solution case, we only need to check whether $\mathcal{Q}^*$ covers all points in $S$, after $\mathcal{Q}^*$ is computed.
Define $U^* = \bigcup_{Q \in \mathcal{Q}^*} Q$.
Note that $\mathcal{Q}^*$ is a set cover for $S$ iff $\mathbb{R}^2 \backslash U^*$ does not contain any point in $S$.
The area $\mathbb{R}^2 \backslash U^*$ is a rectilinear domain (not necessarily connected).
Since $U^*$ is the union of $O(|\mathcal{Q}^*|)$ quadrants, the complexity of $U^*$ is $O(|\mathcal{Q}^*|)$, so is the complexity of $\mathbb{R}^2 \backslash U^*$.
Furthermore, it is easy to compute $\mathbb{R}^2 \backslash U^*$ and decompose it into $O(|\mathcal{Q}^*|)$ rectangles in $\widetilde{O}(|\mathcal{Q}^*|)$ time, given $\mathcal{Q}^*$ in hand.
We then only need to test for each such rectangle $R$ whether $R$ contains any point in $S$ or not, which can be done via an orthogonal range-emptiness query on $S$; there are existing basic data structures that support orthogonal range-emptiness queries in $\widetilde{O}(1)$ time~\cite{mortensen2006fully}.

\subsection{Implementing the algorithm using basic data structures} \label{appx-quados}
In this section, we show that the output-sensitive quadrant set cover algorithm can be performed in $\widetilde{O}(\mathsf{opt})$ time by storing the instance $(S,\mathcal{Q})$ in some basic data structure.
As argued in Section~\ref{sec-osquad}, it suffices to show how to implement the following operations in $\widetilde{O}(1)$ time using basic data structures (please refer to Section~\ref{sec-osquad} for the notations).
\begin{itemize}
    \item Computing the point $\sigma$.
    \item Given a point $a \in \mathbb{R}^2$, testing whether $a \in U^\text{NE}$ and $a \in U^\text{NW}$.
    \item Given a point $a \in \mathbb{R}^2$, computing the quadrants $\varPhi_\rightarrow(a,\mathcal{Q}^\text{SW})$, $\varPhi_\rightarrow(a,\mathcal{Q}^\text{NW})$, $\varPhi_\uparrow(a,\mathcal{Q}^\text{SE})$, and $\varPhi_\uparrow(a,\mathcal{Q}^\text{NE})$.
    \item Given a number $\tilde{y}$, computing the point $\phi(\tilde{y})$.
\end{itemize}

\paragraph{Computing $\sigma$.}
Recall that $\gamma$ is the boundary of $U^\text{SE}$, which is a staircase curve from bottom-left to top-right.
The area $U^\text{SW}$ contains the bottom-left end of $\gamma$ (if $\gamma \cap U^\text{SW} \neq \emptyset$).
The point $\sigma$ is the ``endpoint'' of $\gamma \cap U^\text{SW}$, i.e., the point on $\gamma$ closest to the top-right end of $\gamma$ that is contained in $U^\text{SW}$.
We define the \textit{height} of $U^\text{SW}$ at $x \in \mathbb{R}$, denoted by $\mathsf{ht}_x(U^\text{SW})$, as the largest
number $y \in \mathbb{R}$ such that the point with coordinates $(x,y)$ is contained in $U^\text{SW}$; similarly, we can define the height of $U^\text{SE}$ at $x \in \mathbb{R}$, denoted by $\mathsf{ht}_x(U^\text{SE})$.
We say $U^\text{SW}$ is higher than $U^\text{SE}$ at $x \in \mathbb{R}$ if $\mathsf{ht}_x(U^\text{SW}) \geq \mathsf{ht}_x(U^\text{SE})$; otherwise, we say $U^\text{SW}$ is lower than $U^\text{SE}$ at $x$.
It is easy to see the following three facts about the point $\sigma$.
\begin{enumerate}
    \item The $x$-coordinate of $\sigma$, denoted by $x_\sigma$, is equal to $x(Q)$ for some $Q \in \mathcal{Q}^\text{SE} \cup \mathcal{Q}^\text{SW}$; recall that $x(Q)$ is the $x$-coordinate of the vertex of $Q$.
    \item $U^\text{SW}$ is higher than $U^\text{SE}$ at all $x < x_\sigma$ and $U^\text{SW}$ is lower than $U^\text{SE}$ at all $x > x_\sigma$.
    \item The coordinates of $\sigma$ is $(x_\sigma, \min\{\mathsf{ht}_{x_\sigma}(U^\text{SW}),\mathsf{ht}_{x_\sigma}(U^\text{SE})\})$.
\end{enumerate}
Based on these facts, we compute $\sigma$ as follows.
First, we need a basic data structure built on $\mathcal{Q}^\text{SW}$ (resp., $\mathcal{Q}^\text{SE}$) that can compute the height function in $\widetilde{O}(1)$ time.
\begin{lemma} \label{lem-htfunc}
One can store $\mathcal{Q}^\textnormal{SW}$ (resp., $\mathcal{Q}^\textnormal{SE}$) in a basic data structure which can report $\mathsf{ht}_x(U^\textnormal{SW})$ (resp., $\mathsf{ht}_x(U^\textnormal{SE})$) for a given number $x \in \mathbb{R}$ in $\widetilde{O}(1)$ time.
\end{lemma}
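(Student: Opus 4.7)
The plan is to reduce the height query to a standard one-dimensional range-maximum query. For a southwest quadrant $Q$ with vertex $(x_Q,y_Q)$, the vertical line at $x$-coordinate $x$ intersects $Q$ precisely when $x \leq x_Q$, and in that case the intersection is a downward ray whose top endpoint is at height $y_Q$. Therefore
\[
\mathsf{ht}_x(U^\text{SW}) \;=\; \max\{y_Q : Q \in \mathcal{Q}^\text{SW},\, x_Q \geq x\}
\]
whenever this set is nonempty, and $-\infty$ otherwise; symmetrically, $\mathsf{ht}_x(U^\text{SE}) = \max\{y_Q : Q \in \mathcal{Q}^\text{SE},\, x_Q \leq x\}$. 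So we only need a basic data structure on a planar point set $P$ (the vertex set of the quadrants) that, given a query $x$, reports the maximum $y$-coordinate among points of $P$ with $x$-coordinate in $[x,\infty)$ (for SW) or $(-\infty,x]$ (for SE).

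First I would store $P$ in a balanced binary search tree keyed by $x$-coordinate, augmenting every node with the maximum $y$-coordinate over its subtree. To answer the SW query at threshold $x$, walk down the tree from the root following the textbook routine for half-open 1D range-maximum queries: at each node, compare its key with $x$, incorporate the node's own $y$-value and the augmented max of the right subtree whenever the node lies in $[x,\infty)$, then recurse into the appropriate child. This visits $O(\log n)$ nodes and hence answers a query in $\widetilde{O}(1)$ time. The SE case is the mirror-image routine.

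For construction, sort the vertices by $x$-coordinate in $\widetilde{O}(n)$ time and build the tree bottom-up, computing each subtree maximum in $O(1)$ from those of its children, for $\widetilde{O}(n)$ total construction time. For dynamization I would use a standard self-balancing tree (e.g.\ a red-black tree): each insertion or deletion performs $O(\log n)$ pointer rewires and rotations, and after every local change the augmented max at a node can be recomputed in $O(1)$ from its two children, so updates cost $\widetilde{O}(1)$; this matches the definition of a basic data structure. I do not anticipate a genuine obstacle, as the whole construction is an augmented-BST exercise; the only point that requires care is keeping the two boundary conventions ($x_Q \geq x$ for SW versus $x_Q \leq x$ for SE) straight and handling the empty-set case by returning $-\infty$.
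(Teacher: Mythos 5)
Your proposal is correct and follows essentially the same route as the paper: the same identity $\mathsf{ht}_x(U^\text{SW}) = \max\{y(Q): Q \in \mathcal{Q}^\text{SW},\ x(Q)\geq x\}$, the same balanced BST keyed by vertex $x$-coordinate with each node augmented by the maximum $y$-coordinate in its subtree, and the same top-down range-maximum walk for queries, with standard rotation-based dynamization. No substantive differences.
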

\begin{proof}
It suffices to consider $\mathcal{Q}^\text{SW}$.
We store $\mathcal{Q}^\text{SW}$ in a standard binary search tree $\mathcal{T}$, by using $x(Q)$ as the key of each quadrant $Q \in \mathcal{Q}^\text{SW}$.
At each node $\mathbf{u} \in \mathcal{T}$, we store a field $Y(\mathbf{u})$ which is the maximum of $y(Q)$ for all quadrants $Q$ in the subtree rooted at $\mathbf{u}$.
Clearly, $\mathcal{T}$ can be constructed in $\widetilde{O}(n)$ time where $n = |\mathcal{Q}^\text{SW}|$ and can be dynamized with $\widetilde{O}(1)$ update time to support insertions and deletions on $\mathcal{Q}^\text{SW}$, hence it is a basic data structure.
Next, we consider how to compute $\mathsf{ht}_x(U^\text{SW})$ for a given $x \in \mathbb{R}$ using $\mathcal{T}$.
One can easily see that $\mathsf{ht}_x(U^\text{SW}) = \max\{y(Q): Q \in \mathcal{Q}^\text{SW} \text{ with } x(Q) \geq x\}$.
In other words, $\mathsf{ht}_x(U^\text{SW})$ is the maximum of $y(Q)$ for all $Q \in \mathcal{Q}^\text{SW}$ corresponding to the nodes in $\mathcal{T}$ whose keys are at least $x$.
Therefore, using the field $Y(\mathbf{u})$, $\mathsf{ht}_x(U^\text{SW})$ can be computed in $O(\log n)$ time simply via a top-down walk in $\mathcal{T}$.
The walk begins at the root of $\mathcal{T}$, and we set $\mathsf{ht} = -\infty$ initially.
If the key of the current node is smaller than $x$, then we just go to its right child.
Otherwise, we update $\mathsf{ht}$ as $\mathsf{ht} \leftarrow \max\{\mathsf{ht},y(Q),Y(\mathbf{r})\}$ where $Q \in \mathcal{Q}^\text{SW}$ is the quadrant corresponding to the current node and $\mathbf{r}$ is the right child of the current node (if the current node has no right child, we update $\mathsf{ht}$ as $\mathsf{ht} \leftarrow \max\{\mathsf{ht},y(Q)\}$), and go to the left child of the current node.
When the walk ends at a leaf node, the number $\mathsf{ht}$ is just equal to $\mathsf{ht}_x(U^\text{SW})$.
The walk takes $O(\log n)$ time, hence $\mathsf{ht}_x(U^\text{SW})$ can be computed in $\widetilde{O}(1)$ time using $\mathcal{T}$.
This completes the proof of the lemma.
\end{proof}

With the above lemma in hand, we may now assume that the height functions $\mathsf{ht}_x(U^\text{SW})$ and $\mathsf{ht}_x(U^\text{SE})$ can be computed in $\widetilde{O}(1)$ time for any $x \in \mathbb{R}$.
In particular, we can test in $\widetilde{O}(1)$ time whether $U^\text{SW}$ is higher or lower than $U^\text{SE}$ at any $x \in \mathbb{R}$.
We then build a binary search tree $\mathcal{T}$ on $\mathcal{Q}^\text{SW} \cup \mathcal{Q}^\text{SE}$, by using $x(Q)$ as the key of each quadrant $Q \in \mathcal{Q}^\text{SW} \cup \mathcal{Q}^\text{SE}$.
Clearly, $\mathcal{T}$ can be constructed in $\widetilde{O}(n)$ time where $n = |\mathcal{Q}^\text{SW} \cup \mathcal{Q}^\text{SE}|$ and can be dynamized with $\widetilde{O}(1)$ update time, hence it is a basic data structure.
We observe that, using $\mathcal{T}$, we can determine in $\widetilde{O}(1)$ time for any given number $p \in \mathbb{R}$ which one of the following three is true: \textbf{(1)} $p < x_\sigma$, \textbf{(2)} $p = x_\sigma$, \textbf{(3)} $p > x_\sigma$.
To see this, consider a given number $p \in \mathbb{R}$.
We first search in $\mathcal{T}$ to see whether $p = x(Q)$ for some $Q \in \mathcal{Q}^\text{SW} \cup \mathcal{Q}^\text{SE}$.
If not, we know $p \neq x_\sigma$, because $x_\sigma = x(Q)$ for some $Q \in \mathcal{Q}^\text{SW} \cup \mathcal{Q}^\text{SE}$ (as we observed before).
In this case, we can decide whether $p < x_\sigma$ or $p > x_\sigma$ by simply testing whether $U^\text{SW}$ is higher or lower than $U^\text{SE}$ at $p$.
Specifically, if $U^\text{SW}$ is higher than $U^\text{SE}$ at $p$, then $p < x_\sigma$, otherwise $p > x_\sigma$, because $U^\text{SW}$ is higher (resp., lower) than $U^\text{SE}$ at all $x < x_\sigma$ (resp., $x > x_\sigma$) as we observed before.
The remaining case is that $p = x(Q)$ for some $Q \in \mathcal{Q}^\text{SW} \cup \mathcal{Q}^\text{SE}$.
In this case, we also test whether $U^\text{SW}$ is higher or lower than $U^\text{SE}$ at $p$; by doing this, we can decide whether $p \leq x_\sigma$ or $p \geq x_\sigma$.
Suppose $p \leq x_\sigma$.
To see whether $p < x_\sigma$ or $p = x_\sigma$, we search in $\mathcal{T}$ to find the smallest key $p'$ that is larger than $p$.
Let $\tilde{p} \in \mathbb{R}$ be any number such that $p < \tilde{p} < p'$.
If $U^\text{SW}$ is higher than $U^\text{SE}$ at $\tilde{p}$, then we have $p < \tilde{p} \leq x_\sigma$.
If $U^\text{SW}$ is lower than $U^\text{SE}$ at $\tilde{p}$, then we know $p \leq x_\sigma \leq \tilde{p}$.
Note that any number in $(p,\tilde{p}]$ is not equal to $x(Q)$ for any $Q \in \mathcal{Q}^\text{SW} \cup \mathcal{Q}^\text{SE}$, due to the choice of $p'$ and the inequality $p < \tilde{p} < p'$.
Therefore, $x_\sigma \notin (p,\tilde{p}]$, which implies $p = x_\sigma$.
To summarize, we can determine in $\widetilde{O}(1)$ time for any $p \in \mathbb{R}$ which one of the following three is true: \textbf{(1)} $p < x_\sigma$, \textbf{(2)} $p = x_\sigma$, \textbf{(3)} $p > x_\sigma$.

This allows us to compute $x_\sigma$ using a binary search manner.
Specifically, we do a top-down walk from the root of $\mathcal{T}$.
If the key of the current node is equal to $x_\sigma$, we are done.
Otherwise, if the key of the current node is smaller (resp., larger) than $x_\sigma$, we go to its right (resp., left) child, because the keys of the nodes in the left (resp., right) subtree are all smaller (resp., larger) than $x_\sigma$.
During the walk, we can definitely find a node whose key is $x_\sigma$, since $x_\sigma = x(Q)$ for some $Q \in \mathcal{Q}^\text{SW} \cup \mathcal{Q}^\text{SE}$, i.e., $x_\sigma$ is the key of some node in $\mathcal{T}$.
In this way, we can compute $x_\sigma$ in $\widetilde{O}(1)$ time.
As we observed before, the coordinates of $\sigma$ is $(x_\sigma,\min\{\mathsf{ht}_{x_\sigma}(U^\text{SW}),\mathsf{ht}_{x_\sigma}(U^\text{SE})\})$.
Thus, once we know $x_\sigma$, it suffices to compute $\mathsf{ht}_{x_\sigma}(U^\text{SW})$ and $\mathsf{ht}_{x_\sigma}(U^\text{SE})$, which takes $\widetilde{O}(1)$ time by Lemma~\ref{lem-htfunc}.
We conclude that computing $\sigma$ can be done in $\widetilde{O}(1)$ time (by properly store $\mathcal{Q}^\text{SW}$ and $\mathcal{Q}^\text{SE}$ in basic data structures).

\paragraph{Testing whether $a \in U^\text{NE}$ and $a \in U^\text{NW}$.}
It suffices to consider how to test whether $a \in U^\text{NE}$.
We store $\mathcal{Q}^\text{NE}$ in a binary search tree $\mathcal{T}$, by using $y(Q)$ as the key of each quadrant $Q \in \mathcal{Q}^\text{SW}$.
We augment each node $\mathbf{u} \in \mathcal{T}$ with a field which stores the leftmost quadrant in the subtree rooted at $\mathbf{u}$.
Clearly, $\mathcal{T}$ can be constructed in $\widetilde{O}(n)$ time where $n = |\mathcal{Q}^\text{NE}|$ and can be dynamized with $\widetilde{O}(1)$ update time, hence it is basic.
Given a point $a \in \mathbb{R}^2$, we first look for the leftmost quadrant $Q$ in $\mathcal{T}$ whose key is smaller than or equal to the $y$-coordinate of $a$.
With the augmented fields, $Q$ can be found in $\widetilde{O}(1)$ time.
If $a \in Q$, then we know $a \in U^\text{NE}$.
Otherwise, we claim that $a \notin U^\text{NE}$.
Indeed, a quadrant in $\mathcal{T}$ contains $a$ only if its key is smaller than or equal to the $y$-coordinate of $a$.
Since $Q$ is the leftmost one among such quadrants and $a \notin Q$, we know that $a$ is not contained in any quadrant in $\mathcal{T}$, i.e., $a \notin U^\text{NE}$.
We conclude that testing whether $a \in U^\text{NE}$ and $a \in U^\text{NW}$ for a given point $a \in \mathbb{R}^2$ can be done in $\widetilde{O}(1)$ time (by properly store $\mathcal{Q}^\text{NE}$ and $\mathcal{Q}^\text{NW}$ in basic data structures).

\paragraph{Computing $\varPhi$.}
Recall that for a point $a \in \mathbb{R}^2$ and a collection $\mathcal{P}$ of quadrants, $\varPhi_\rightarrow(a,\mathcal{P})$ and $\varPhi_\uparrow(a,\mathcal{P})$ denote the rightmost and topmost quadrants in $\mathcal{P}$ that contain $a$, respectively.
We want to compute $\varPhi_\rightarrow(a,\mathcal{Q}^\text{SW})$, $\varPhi_\rightarrow(a,\mathcal{Q}^\text{NW})$, $\varPhi_\uparrow(a,\mathcal{Q}^\text{SE})$, and $\varPhi_\uparrow(a,\mathcal{Q}^\text{NE})$ in $\widetilde{O}(1)$ time for a given point $a \in \mathbb{R}^2$, using basic data structures.
Here we only consider how to compute $\varPhi_\rightarrow(a,\mathcal{Q}^\text{SW})$, the other three can be computed in the same way.
We store $\mathcal{Q}^\text{SW}$ in a binary search tree $\mathcal{T}$, by using $y(Q)$ as the key of each quadrant $Q \in \mathcal{Q}^\text{SW}$.
At each node $\mathbf{u} \in \mathcal{T}$, we store a field that is the rightmost quadrant in the subtree rooted at $\mathbf{u}$.
Clearly, $\mathcal{T}$ can be constructed in $\widetilde{O}(n)$ time where $n = |\mathcal{Q}^\text{SW}|$ and can be dynamized with $\widetilde{O}(1)$ update time, hence it is a basic data structure.
Given a point $a \in \mathbb{R}^2$, we first look for the rightmost quadrant $Q$ in $\mathcal{T}$ whose key is greater than or equal to the $y$-coordinate $y_a$ of $a$.
With the augmented fields, $Q$ can be found in $\widetilde{O}(1)$ time.
If $Q$ contains $a$, then $Q$ is the rightmost quadrant in $\mathcal{T}$ (i.e., in $\mathcal{Q}^\text{SW}$) that contains $a$, because any quadrant $Q'$ that contains $a$ must satisfy $y(Q') \geq y_a$.
Otherwise, no quadrant in $\mathcal{Q}^\text{SW}$ contains $a$.
We conclude that computing $\varPhi_\rightarrow(a,\mathcal{Q}^\text{SW})$, $\varPhi_\rightarrow(a,\mathcal{Q}^\text{NW})$, $\varPhi_\uparrow(a,\mathcal{Q}^\text{SE})$, and $\varPhi_\uparrow(a,\mathcal{Q}^\text{NE})$ for a given point $a \in \mathbb{R}^2$ can be done in $\widetilde{O}(1)$ time (by properly store $\mathcal{Q}^\text{SW}$, $\mathcal{Q}^\text{NW}$, $\mathcal{Q}^\text{SE}$, $\mathcal{Q}^\text{NE}$ in basic data structures).

\paragraph{Computing $\phi(\tilde{y})$.}
Recall that for a number $\tilde{y} \in \mathbb{R}$, $\phi(\tilde{y})$ is the leftmost point in $S \cap U^\text{SE}$ whose $y$-coordinate is greater than $\tilde{y}$.
We want to store $S$ and $\mathcal{Q}^\text{SE}$ in some basic data structure such that $\phi(\tilde{y})$ can be computed in $\widetilde{O}(1)$ time for any given $\tilde{y} \in \mathbb{R}$.
For simplicity of exposition, let us make a general-position assumption: the points in $S$ and the vertices of the quadrants in $\mathcal{Q}^\text{SE}$ have distinct $y$-coordinates.
The first thing we need is a data structure that supports the so-called \textit{3-sided range-minimum} query.
A 3-sided range-minimum query on a set of \textit{weighted} points in $\mathbb{R}^2$ ask for the \textit{lightest} point (i.e., the point with the smallest weight) contained in a given 3-sided query rectangle $R = [x_0,\infty) \times [y_1,y_2]$.
\begin{lemma} \label{lem-3sided}
There exists a basic data structure that supports 3-sided range-minimum queries in $\widetilde{O}(1)$ time.
\end{lemma}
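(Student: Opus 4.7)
The plan is to build a two-level range tree on the weighted point set, augmented with subtree weight-minima. First, I would construct a balanced binary search tree $\mathcal{T}$ keyed on the $y$-coordinates of the points. Each internal node $\mathbf{u} \in \mathcal{T}$ corresponds to a canonical subset $P(\mathbf{u})$ consisting of the points stored in the subtree rooted at $\mathbf{u}$. At $\mathbf{u}$, I would store a secondary balanced BST $\mathcal{T}_\mathbf{u}$ on the $x$-coordinates of $P(\mathbf{u})$, with each node of $\mathcal{T}_\mathbf{u}$ augmented to record the minimum weight (and a pointer to the realizing point) among the points in its subtree.

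To answer a 3-sided query $R = [x_0,\infty) \times [y_1,y_2]$, I would first identify in $\mathcal{T}$ the $O(\log n)$ canonical nodes $\mathbf{u}_1,\dots,\mathbf{u}_k$ whose subtrees partition exactly the points with $y$-coordinates in $[y_1,y_2]$, using the standard two-path search from $y_1$ and $y_2$ to their lowest common ancestor. For each $\mathbf{u}_i$, I would descend $\mathcal{T}_{\mathbf{u}_i}$ along the search path for $x_0$: whenever the search branches left at a node whose key is $\geq x_0$, I combine the subtree-min stored at its right child into a running minimum, and also account for the node's own point if its $x$-coordinate is $\geq x_0$. Since every point with $x$-coordinate at least $x_0$ lies in the union of these right subtrees (plus a few on-path points), this correctly returns the lightest point of $P(\mathbf{u}_i)$ with $x \geq x_0$ in $O(\log n)$ time, giving $O(\log^2 n)$ for the full query.

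For dynamization, I would use a weight-balanced tree (such as BB[$\alpha$]) for both the primary and secondary levels and apply the standard partial-rebuilding strategy: whenever a subtree's weight-balance invariant is violated after an insertion or deletion, rebuild that subtree and all its secondary structures from scratch. Since a point belongs to $O(\log n)$ canonical subsets and each secondary BST insertion or deletion (together with the propagation of augmented subtree minima along the search path) costs $O(\log n)$ worst-case time, an update costs $O(\log^2 n)$ amortized. Construction of $\mathcal{T}$ and all $\mathcal{T}_\mathbf{u}$'s from $n$ points takes $O(n \log n)$ time by sorting once in each coordinate and building bottom-up.

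The only nontrivial point is verifying the amortized update bound, which reduces to the classical analysis of partial rebuilding for range trees with augmented subtree summaries; this is entirely standard and matches the $\widetilde{O}(1)$ update bound required of a basic data structure. Both query and update times are polylogarithmic, and construction is $\widetilde{O}(n)$, so the resulting structure satisfies the definition of basic and supports 3-sided range-minimum queries in $\widetilde{O}(1)$ time, as claimed.
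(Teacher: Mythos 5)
Your proof is correct, but it takes a different route from the paper's. The paper dispenses with the lemma in three sentences: it observes that a \emph{static} range tree answers 3-sided range-minimum queries in $\widetilde{O}(1)$ time with $\widetilde{O}(n)$ construction, and then, since range-minimum is a decomposable searching problem (the minimum over a union is the minimum of the minima), it invokes the Bentley--Saxe logarithmic method to dynamize that static structure with $\widetilde{O}(1)$ amortized update time at the cost of an extra logarithmic factor in the query. You instead build a genuinely dynamic two-level range tree directly: weight-balanced (BB[$\alpha$]) trees at both levels, secondary trees augmented with subtree weight-minima, and partial rebuilding to restore balance. Both are standard and both meet the definition of a basic data structure. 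Your approach requires more verification (correct maintenance of the augmented minima under updates and rebuilds, and the amortized analysis of partial rebuilding for multi-level structures), but it yields a single self-contained structure and sidesteps a subtlety the paper's citation glosses over: the basic Bentley--Saxe transformation handles only insertions, and since $\min$ has no inverse, supporting deletions for a decomposable problem like range-minimum needs an additional mechanism (e.g., weak deletions with global rebuilding), which your partial-rebuilding scheme handles uniformly. The paper's approach, in exchange, is shorter and reuses off-the-shelf machinery without re-deriving the dynamic range-tree analysis.
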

\begin{proof}
The standard range trees can answer \textit{static} 3-sided range-minimum queries in $\widetilde{O}(1)$ time, and can be constructed in $\widetilde{O}(n)$ time, where $n$ is the size of the dataset.
Since range-minimum queries are decomposable, the approach of~\cite{bentley1978decomposable} can be applied to dynamize the static data structure with $\widetilde{O}(1)$ update time, by paying an extra logarithmic factor in the query time.
This gives us the basic data structure that supports 3-sided range-minimum queries in $\widetilde{O}(1)$ time.
\end{proof}

We store $S$ in the basic 3-sided range-minimum data structure $\mathcal{A}$ of the above lemma, by setting the weight of each point in $S$ to be its $x$-coordinate.
Besides $\mathcal{A}$, we need a (1D) range tree $\mathcal{T}$ built on $S \cup V(\mathcal{Q}^\text{SE})$ for $y$-coordinates, where $V(\mathcal{Q}^\text{SE})$ is the set of the vertices of the quadrants in $\mathcal{Q}^\text{SE}$.
By the definition of a range tree, the points in $S \cup V(\mathcal{Q}^\text{SE})$ are one-to-one corresponding to the leaves of $\mathcal{T}$, where the left-right order of the leaves corresponds to the small-large order of the $y$-coordinates of the points.
The \textit{canonical subset} of each node $\mathbf{u} \in \mathcal{T}$ refers to the subset of $S \cup V(\mathcal{Q}^\text{SE})$ consisting of the points stored in the subtree rooted at $\mathbf{u}$.
For a node $\mathbf{u} \in \mathcal{T}$, we denote by $S(\mathbf{u})$ the set of the points in $S$ that are contained in the canonical subset of $\mathbf{u}$, and denote by $\mathcal{Q}(\mathbf{u})$ the collection of the quadrants in $\mathcal{Q}^\text{SE}$ whose vertices are contained in the canonical subset of $\mathbf{u}$.
Also, we write $U(\mathbf{u}) = \bigcup_{Q \in \mathcal{Q}(\mathbf{u})} Q$
At each node $\mathbf{u} \in \mathcal{T}$, we store the following three fields.
\begin{itemize}
    \item $y^-(\mathbf{u})$: the $y$-coordinate of the bottommost point in the canonical subset of $\mathbf{u}$.
    \item $y^+(\mathbf{u})$: the $y$-coordinate of the topmost point in the canonical subset of $\mathbf{u}$.
    \item $L(\mathbf{u})$: the leftmost quadrant in $\mathcal{Q}(\mathbf{u})$.
    \item $a(\mathbf{u})$: the leftmost point in $S(\mathbf{u})$ that is contained in $U(\mathbf{u})$.
\end{itemize}
Let $\mathbf{u} \in \mathcal{T}$ be a node and $\mathbf{l},\mathbf{r}$ be its left and right children, respectively.
It is clear that $y^-(\mathbf{u})$, $y^+(\mathbf{u})$, $L(\mathbf{u})$ can be computed in $O(1)$ time knowing the $y^-(\cdot)$, $y^+(\cdot)$, and $L(\cdot)$ fields of $\mathbf{l}$ and $\mathbf{r}$.
We claim that $a(\mathbf{u})$ can be computed in $\widetilde{O}(1)$ time based on the information stored at $\mathbf{u}$, $\mathbf{l}$, $\mathbf{r}$, and the 3-sided range-minimum data structure $\mathcal{A}$.
By definition, $a(\mathbf{u})$ is the leftmost point in $S(\mathbf{u})$ that is contained in $U(\mathbf{u})$.
Since $S(\mathbf{u}) = S(\mathbf{l}) \cup S(\mathbf{r})$, it suffices to compute the leftmost point in $S(\mathbf{l})$ contained in $U(\mathbf{u})$ and the leftmost point in $S(\mathbf{r})$ contained in $U(\mathbf{u})$.
Note that any point in $S(\mathbf{r})$ is not contained in $U(\mathbf{l})$, since any quadrant in $\mathcal{Q}(\mathbf{l})$ is ``below'' any point in $S(\mathbf{r})$.
It follows that the leftmost point in $S(\mathbf{r})$ that is contained in $U(\mathbf{u})$ is just $a(\mathbf{r})$.
To compute the leftmost point in $S(\mathbf{l})$ that is contained in $U(\mathbf{u})$, we only need to compute the leftmost point in $S(\mathbf{l})$ contained in $U(\mathbf{l})$ and the leftmost point in $S(\mathbf{l})$ contained in $U(\mathbf{r})$, because $U(\mathbf{u}) = U(\mathbf{l}) \cup U(\mathbf{r})$.
The leftmost point in $S(\mathbf{l})$ contained in $U(\mathbf{l})$ is just $a(\mathbf{l})$.
In order to compute the leftmost point in $S(\mathbf{l})$ contained in $U(\mathbf{r})$, we observe that a point in $S(\mathbf{l})$ is contained in $U(\mathbf{r})$ iff it is contained in the quadrant $L(\mathbf{r})$.
Thus, the leftmost point in $S(\mathbf{l})$ contained in $U(\mathbf{r})$ is just the leftmost point in $S(\mathbf{l})$ contained in $L(\mathbf{r})$.
Note that $S(\mathbf{l})$ is exactly the set of the points in $S$ that lie in the strip $P = \mathbb{R} \times [y^-(\mathbf{l}),y^+(\mathbf{l})]$, i.e., $S(\mathbf{l}) = S \cap P$.
Hence, we can query the data structure $\mathcal{A}$ with the 3-sided rectangle $P \cap L(\mathbf{r})$ to obtain the leftmost point in $S(\mathbf{l})$ contained in $L(\mathbf{r})$, i.e., the leftmost point in $S(\mathbf{l})$ contained in $U(\mathbf{r})$, which takes $\widetilde{O}(1)$ time.
We conclude that, by using the 3-sided range-minimum data structure $\mathcal{A}$, all the fields of a node $\mathbf{u}$ can be computed in $\widetilde{O}(1)$ time based on the information stored at $\mathbf{u}$ and their children.
Therefore, the range tree $\mathcal{T}$ with the augmented fields can be dynamized with $\widetilde{O}(1)$ update time using the standard technique for dynamizing augmented trees~\cite{cormen2009introduction}.
The construction time of $\mathcal{T}$ is clearly $\widetilde{O}(n)$ where $n = |S| + |\mathcal{Q}^\text{SE}|$, thus $\mathcal{T}$ is a basic data structure.

Next, we consider how to use $\mathcal{T}$ to compute $\phi(\tilde{y})$ in $\widetilde{O}(1)$ time for a given $\tilde{y} \in \mathbb{R}$.
We first find the $t = O(\log n)$ canonical nodes $\mathbf{u}_1,\dots,\mathbf{u}_t \in \mathcal{T}$ corresponding to the range $[\widetilde{y},\infty)$.
Suppose $\mathbf{u}_1,\dots,\mathbf{u}_t$ are sorted from left to right in $\mathcal{T}$.
By the property of canonical nodes, the canonical subsets of $\mathbf{u}_1,\dots,\mathbf{u}_t$ are disjoint and their union is the subset of $S \cup V(\mathcal{Q}^\text{SE})$ consisting of the points whose $y$-coordinates are in the range $[\widetilde{y},\infty)$.
The point $\phi(\tilde{y})$ we are looking for is just the leftmost point in $\bigcup_{i=1}^t S(\mathbf{u}_i)$ that is contained in $U^\text{SE}$.
Note that $\phi(\tilde{y})$ is not contained in any southeast quadrant $Q$ with $y(Q) < \tilde{y}$.
Thus, $\phi(\tilde{y})$ is the leftmost point in $\bigcup_{i=1}^t S(\mathbf{u}_i)$ that is contained in $\bigcup_{i=1}^t U(\mathbf{u}_i)$.
To compute $\phi(\tilde{y})$, it suffices to know the leftmost point in $S(\mathbf{u}_i)$ that is contained in $U(\mathbf{u}_j)$ for all $i,j \in \{1,\dots,t\}$.
If $i > j$, then any point in $S(\mathbf{u}_i)$ is not contained in $U(\mathbf{u}_j)$.
If $i = j$, then the leftmost point in $S(\mathbf{u}_i)$ contained in $U(\mathbf{u}_j)$ is just $a(\mathbf{u}_i) = a(\mathbf{u}_j)$.
If $i < j$, then a point in $S(\mathbf{u}_i)$ is contained in $U(\mathbf{u}_j)$ iff it is contained in the quadrant $L(\mathbf{u}_j)$.
Note that the points in $S(\mathbf{u}_i)$ are exactly the set of the points in $S$ that lie in the strip $P_i = \mathbb{R} \times [y^-(\mathbf{u}_i),y^+(\mathbf{u}_i)]$, i.e., $S(\mathbf{u}_i) = S \cap P_i$.
Therefore, the leftmost point in $S(\mathbf{u}_i)$ that is contained in $L(\mathbf{u}_j)$ can be computed in $\widetilde{O}(1)$ time by querying the data structure $\mathcal{A}$ with the 3-sided rectangle $P_i \cap L(\mathbf{u}_j)$.
To summarize, the leftmost point in $S(\mathbf{u}_i)$ that is contained in $U(\mathbf{u}_j)$ can be computed in $\widetilde{O}(1)$ time for any $i,j \in \{1,\dots,t\}$.
Since $t = O(\log n)$, computing the leftmost point in $S(\mathbf{u}_i)$ that is contained in $U(\mathbf{u}_j)$ for all $i,j \in \{1,\dots,t\}$ takes $\widetilde{O}(1)$ time.
Once we have these points, the leftmost one among them is just the leftmost point in $\bigcup_{i=1}^t S(\mathbf{u}_i)$ that is contained in $\bigcup_{i=1}^t U(\mathbf{u}_i)$, i.e., the point $\phi(\tilde{y})$.
We conclude that computing $\phi(\tilde{y})$ can be done in $\widetilde{O}(1)$ time (by properly storing $S$ and $\mathcal{Q}^\text{SE}$ in some basic data structure).



\end{document}